\documentclass[11pt,letterpaper]{article}

\usepackage[margin=1in]{geometry}

\usepackage{amsmath,amssymb}
\usepackage{amsthm}
\usepackage{pgf, tikz}

\usepackage{lmodern}
\usepackage[T1]{fontenc}
\usepackage{textcomp}
\usepackage[utf8]{inputenc}

\usepackage{nicefrac}

\usepackage{hyperref}
\hypersetup{
	colorlinks=true,
	citecolor = blue       
}

\usepackage[numbers]{natbib}

\newtheorem{theorem}{Theorem}[section]

\newtheorem{lemma}{Lemma}[section]

\newtheorem{claim}{Claim}[section]
\newtheorem{claim*}{Claim}[section]

\newcommand{\RR}{\ensuremath{\mathbb{R}}}

\newcommand{\bx}{\mathbf{x}}
\newcommand{\bp}{\mathbf{p}}
\newcommand{\bv}{\mathbf{v}}
\newcommand{\bq}{\mathbf{q}}
\newcommand{\D}{\mathcal{D}}

\renewcommand{\Pr}[1]{\mbox{\rm\bf Pr}\left[#1\right]}
\newcommand{\Prr}[2]{\mbox{\rm\bf Pr}_{#1}\left[#2\right]}
\newcommand{\Ex}[2]{\mbox{\rm\bf E}_{#1}\left[#2\right]}

\newcommand{\OPT}{\mathrm{OPT}}
\newcommand{\ALG}{\mathrm{ALG}}

\newcommand{\SOLD}{\mathrm{SOLD}}

\allowdisplaybreaks

\title{An $O(\log \log m)$ Prophet Inequality for Subadditive Combinatorial Auctions}

\author{Paul D\"utting\thanks{Google Research, Brandschenkestrasse 110, CH-8002 Z\"urich, Switzerland, email: \texttt{duetting@google.com} and Department of Mathematics, London School of Economics, Houghton Street, WC2A 2AE London, UK, email: \texttt{p.d.duetting@lse.ac.uk}} \and Thomas Kesselheim\thanks{Institute of Computer Science, University of Bonn, Endenicher Allee 19a, 53115 Bonn, Germany, email: \texttt{thomas.kesselheim@uni-bonn.de}} \and Brendan Lucier\thanks{Microsoft Research, One Memorial Drive, Cambridge, MA, 02142, USA, email: \texttt{brlucier@microsoft.com}}}

\date{}

\begin{document}

\maketitle

\begin{abstract}
Prophet inequalities compare the expected performance of an online algorithm for a stochastic optimization problem to the expected optimal solution in hindsight.  They are a major alternative to classic worst-case competitive analysis, of particular importance in the design and analysis of simple (posted-price) incentive compatible mechanisms with provable approximation guarantees.

A central open problem in this area concerns subadditive combinatorial auctions.  Here $n$ agents with subadditive valuation functions compete for the assignment of $m$ items.  The goal is to find an allocation of the items that maximizes the total value of the assignment.  The question is whether there exists a prophet inequality for this problem that significantly beats the best known approximation factor of $O(\log m)$.

We make major progress on this question by providing an $O(\log \log m)$ prophet inequality.  Our proof goes through a novel primal-dual approach.  It is also constructive, resulting in an online policy that takes the form of static and anonymous item prices that can be computed in polynomial time given appropriate query access to the valuations.
As an application of our approach, we construct a simple and incentive compatible mechanism based on posted prices that achieves an $O(\log \log m)$ approximation to the optimal revenue for subadditive valuations under an item-independence assumption.

\end{abstract}

\section{Introduction}

We study the following online stochastic allocation problem.  There is a set $M$ of $m$ objects to be divided among $n$ agents.  Each agent has a valuation function that assigns a value to every subset of objects.  These valuation functions are random, drawn independently from known (but not necessarily identical) distributions.  Agents arrive one by one in an arbitrary order and when an agent arrives her valuation is revealed.  The decision-maker must choose which subset of objects to give each agent when she arrives.  The goal is to maximize the total value of the assignment.

The special case of a single object is precisely the setup of the famous prophet inequality due to Krengel, Sucheston, and Samuel-Cahn~\cite{KrengelS77,KrengelS78,SamuelC84}.  They show that there exists an online policy whose expected value is at least half of the expected optimal solution in hindsight, which in this case is simply the expected maximum value held by any agent for the object.  The policy also has a surprisingly simple form: allocate to the first agent whose value exceeds a fixed threshold calculated in advance from the known distributions.  Interest in prophet inequalities has surged recently, in part due to applications in pricing and auction design driven by the observation that the fixed threshold can be viewed as a posted price.  This has lead to a line of literature studying prophet inequalities for more general instances of the allocation problem, yielding approximately-optimal online policies and incentive compatible auctions for increasingly general problem instances with many objects and rich classes of valuation functions~(e.g.,~\cite{ChawlaHMS10,KleinbergW12,FeldmanGL15,DuttingK15,FeldmanSZ16,Rubinstein16,RubinsteinS17,DuettingFKL17,ChawlaMT19}).  

One of the more vexing open problems in this space concerns subadditive valuations.  A valuation function $v$ is subadditive if $v(S) + v(T) \geq v(S \cup T)$ for all sets of objects $S$ and $T$.  This captures the property that items are not complementary, in the sense that objects are not more valuable together than they are apart.  This is a natural and important property in many contexts, and the subadditive allocation problem has thus received considerable attention from both the algorithmic and economic perspectives.  For the former, there is a known $O(1)$-approximate polynomial-time algorithm for the offline problem~\cite{Feige09}.  For the latter, it is known that running a sealed-bid auction for each object separately yields an $O(1)$ approximation to the optimal welfare at any Bayes-Nash equilibrium~\cite{FeldmanFGL13}.   But despite these results, the best-known prophet inequality bound is $O(\log m)$ \cite{FeldmanGL15}.
Correspondingly, the best known incentive compatible mechanisms for Bayesian subadditive combinatorial auctions achieve $O(\log m)$ approximations to the expected welfare or revenue~\cite{FeldmanGL15,CaiZ17}.  
This leaves us with a significant asymptotic gap in the known power of truthful versus non-truthful mechanisms, and closing it is a central open challenge.

We make substantial progress on this problem by obtaining an $O(\log \log m)$-approximate prophet inequality for subadditive valuations.  The policy we construct is threshold-based: the designer constructs static and anonymous prices for the objects, and each agent is assigned the set of objects (from among those that remain) that maximizes her value minus the posted prices.  The analysis is constructive, and the appropriate prices can be computed in polynomial time given access to demand queries for the valuations.\footnote{Given a valuation function $v$, a set of objects $M$, and a price $p_j \geq 0$ for all $j \in M$, a demand query returns the set $S$ that maximizes $v(S) - \sum_{j\in S}p_j$.}  This prophet inequality directly implies a polynomial-time incentive compatible posted-price mechanism for subadditive combinatorial auctions that achieves an $O(\log\log m)$ approximation to the expected optimal welfare.

Our construction can also be applied to the problem of constructing simple and approximately revenue-optimal mechanisms.  A recent line of work has studied the power of simple posted-price-based mechanisms to approximate optimal revenue in auctions with multiple items.  Recently, Cai and Zhao~\cite{CaiZ17} showed that under an item-independence assumption, it is possible to obtain an $O(\log m)$ approximation for subadditive valuations using a posted-price mechanism with either (a) an up-front entry fee for each agent, or (b) a restriction on the number of objects each agent can buy.  We show that a modification of our prophet inequality can be used to obtain an improved $O(\log\log m)$ approximation, using the same two classes of mechanisms.

\subsection{Our Results and Techniques}

\paragraph{An $\mathbf{O(\text{log}\, \text{log}\, m)}$ Prophet Inequality.} 
Our first result is an existential $O(\log \log m)$-approximate price-based prophet inequality for subadditive combinatorial auctions. The pricing interpretation is that we are given access to distributions $\mathcal{D}_i$ over subadditive valuation functions $v_i$, we will precompute item prices $p_j$ for each item $j \in M$, and then agents will arrive one-by-one, each with valuation $v_i$ drawn from $\mathcal{D}_i$, and buy a subset of items $S$ that maximizes their utility $v_i(S) - \sum_{j \in S} p_j$.  Our result is that we can find prices so that the expected welfare of the resulting allocation will be an $O(\log \log m)$ approximation to the expected welfare that could be achieved by an optimal offline algorithm with advance knowledge of all valuations.

\begin{theorem}[Welfare, Existential] \label{thm:1}
For subadditive valuations drawn independently from known distributions, there exist static anonymous item prices that yield a $O(\log \log m)$ approximation to the optimal expected welfare.
\end{theorem}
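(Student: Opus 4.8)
The starting point is the posted-price analysis framework underlying the $O(\log m)$ bound of \cite{FeldmanGL15}. Fix prices $\bp=(p_j)_{j\in M}$, let agents arrive in adversarial order, each buying a utility-maximizing bundle, write $O_i$ for agent $i$'s bundle in the offline optimum and $A_i$ for the set of items already sold when $i$ arrives, and set $\bp(S)=\sum_{j\in S}p_j$. Since $A_i$ is determined by $v_1,\dots,v_{i-1}$ and agent $i$ could have bought $O_i\setminus A_i$, subadditivity gives that $i$'s realized utility is at least $v_i(O_i)-v_i(O_i\cap A_i)-\bp(O_i)$. Summing over $i$, adding the revenue $\bp(\SOLD)$, and taking expectations yields $\Ex{}{\ALG}\ge \Ex{}{\OPT}-\sum_i\Ex{}{v_i(O_i\cap A_i)}-\sum_i\Ex{}{\bp(O_i)}+\Ex{}{\bp(\SOLD)}$, while trivially $\Ex{}{\ALG}\ge\Ex{}{\bp(\SOLD)}$ because utilities are nonnegative. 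Hence it suffices to construct prices that are (i) \emph{not too expensive}: $\sum_i\Ex{}{\bp(O_i)}$ is a small fraction of $\Ex{}{\OPT}$ (which can be enforced by scaling prices down by a constant); and (ii) \emph{expensive enough}: for every set $A$ of items, $\sum_i v_i(O_i\cap A)\le\gamma\cdot\bp(A)$, so the ``overselling'' term is at most $\gamma\cdot\Ex{}{\bp(\SOLD)}\le\gamma\cdot\Ex{}{\ALG}$. Combining gives an $O(\gamma)$ prophet inequality, so the whole game is to achieve $\gamma=O(\log\log m)$ rather than $O(\log m)$.

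\paragraph{Where $\log m$ comes from, and how to beat it.} Condition (ii) asks, in essence, for an additive function $\bp$ restricted to $O_i$ that dominates $v_i$ on all subsets of $O_i$ up to a factor $\gamma$, with condition (i) capping its total. For XOS valuations the supporting-price property delivers $\gamma=O(1)$; for subadditive valuations a single additive support can only guarantee $\gamma=\Theta(\log|O_i|)=\Theta(\log m)$, and this is tight. The plan to circumvent this is to abandon a single one-size-fits-all support and work scale by scale: dyadically bucket the items of each $O_i$ by the fraction of $v_i(O_i)$ for which they are responsible, obtaining $O(\log m)$ value scales after discarding a negligible tail (which costs only a constant factor of $\Ex{}{\OPT}$); on any single scale a subadditive valuation is essentially XOS, so each scale in isolation can be priced with $O(1)$ loss. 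A naive union over the $O(\log m)$ scales brings back the $\log m$; the key point is instead to show that a band of $w$ consecutive scales can be handled with only an $O(1)$ overhead over a band of width $w/2$ --- equivalently, to run the construction with a free integer parameter $k$ so that the loss is $O\!\left(k+(\log m)/2^{k}\right)$, which is $O(\log\log m)$ at $k=\Theta(\log\log m)$. Concretely I would use a recursion of depth $O(\log\log m)$: peel off the top half of the value scales with a crude-but-cheap pricing whose effectiveness barely depends on how the remaining scales are priced, then recurse on the residual instance, whose scale band has halved; randomizing the bucket offsets and the choice of which level gets the ``sharp'' price keeps $\sum_i\Ex{}{\bp(O_i)}$ within a constant factor of $\Ex{}{\OPT}$. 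This is also precisely where it pays to argue dually, as the paper advertises: assemble the prices together with matching agent-utility variables level by level, charging per-level overselling to per-level revenue, rather than invoking a black-box ``balanced prices'' statement.

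\paragraph{Main obstacle.} The crux --- and the step I expect to be hardest --- is justifying that the scale band can be halved at only $O(1)$ cost \emph{with a single static, anonymous price vector and a single demand response per agent}. Selling a high-scale item early can block an agent from a large low-scale bundle, and a price calibrated to the top band can be prohibitive at the bottom band; one must show that these interactions do not compound across the $O(\log\log m)$ recursion levels, i.e., that the accounting never double-charges the same lost value at two different levels. Making this bookkeeping airtight is exactly what forces the refined primal-dual analysis in place of a modular balancedness lemma. A secondary, more routine difficulty is the truncation step that legitimizes ``$O(\log m)$ relevant scales'' --- trimming contributions that are too large or too small relative to a typical bundle value while losing only a constant factor of $\Ex{}{\OPT}$, so that the recursion genuinely starts from a band of width $O(\log m)$.
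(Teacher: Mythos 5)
Your plan is built on a framework that the paper explicitly rules out, and the one step you flag as ``hardest'' is in fact the entire content of the result, so the proposal has a genuine gap.

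Your conditions (i) and (ii) are precisely the balanced-prices framework of \cite{DuettingFKL17}: condition (ii) is the requirement that $\bp(A)$ offset a $1/\gamma$-fraction of the value lost by removing $A$, and condition (i) is the requirement that the total price is at most a constant fraction of $\Ex{}{\OPT}$ (you even note that (i) ``can be enforced by scaling prices down by a constant''). The paper cites \cite{FeldmanGL15} for the fact that for subadditive valuations \emph{no} price vector can satisfy both (i) and (ii) with $\gamma = o(\log m)$; this is a hard lower bound against the framework you propose, not merely a limitation of the XOS-approximation route to it. Your hope of beating it via ``randomized bucket offsets'' does not escape the obstruction, because (ii) is a pointwise deterministic condition on a fixed price vector and the impossibility is information-theoretic, not about how the prices are found. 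The paper's Discussion on balanced prices is explicit that its prices violate (i): they are ``generally much higher'' than $\Ex{}{\OPT}$, precisely because they are tailored to push buyers toward small high-value sets rather than to make the whole ground set affordable. The reason this doesn't wreck the argument is that the paper never bounds revenue and utility separately against $\Ex{}{\OPT}$; instead the key lemma bounds \emph{utility plus the revenue from $T$ jointly} by $v_i(U)/\alpha$, so the arbitrarily large term $\sum_{j\in T}p_j$ is cancelled by the revenue $\sum_{j\in\SOLD}p_j$ in the final accounting, with no analogue of your condition (i) ever invoked.

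Beyond the wrong framework, the central technical step is missing. You conjecture a recursion in which ``a band of $w$ consecutive scales can be handled with $O(1)$ overhead over a band of width $w/2$,'' and candidly identify it as the crux, but offer no mechanism for why halving (as opposed to, say, reducing by one) should be achievable. That halving is exactly what the paper proves, and it has nothing to do with dyadically bucketing items by value contribution (a quantity that is not even canonically defined for a subadditive $v_i$). The paper's mechanism is the zero-sum game reformulation via LP duality: one restricts to ``equal-marginal'' distributions placing mass $q$ on each item and observes that if $\lambda$ and the adversary's $\mu$ both have per-item marginals at most $q$, then the damage distribution over $S\cap T$ has marginals at most $q^2$. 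This gives $g(q)\geq f(q)-f(q^2)$, and running $q=2^{-2^k}$ for $k=0,\dots,\log\log m$ telescopes to $f(1/2)-f(1/m^2)\geq (1/2-1/m)\,v_i(U)$. The squaring of $q$ is the doubly-exponential engine behind $\log\log m$; nothing in your bucketing-by-value-scales outline produces such a squaring, and without it the recursion depth is $\log m$, not $\log\log m$.
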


We prove the theorem for complete information (known valuations), and then extend it to the Bayesian case with incomplete information. At the heart of our approach is the following lemma, which for a given and fixed subbaditive valuation $v_i$ asserts the existence of item prices $p_j$ for a given set $U$ that satisfy a certain inequality.

\begin{lemma}[Key Lemma] \label{lem:2} 
For every $i \in N$, subadditive function $v_i$, and set $U \subseteq M$ there exist prices $p_j$ for $j \in U$ and a probability distribution $\lambda$ over $S \subseteq U$ such that for all $T \subseteq U$
\[
	\sum_{S \subseteq U} \lambda_S \left(v_i(S \setminus T) - \sum_{j \in S} p_j\right) \geq \frac{v_i(U)}{\alpha} - \sum_{j \in T} p_j , 
\]
where $\alpha \in O(\log \log m)$.
\end{lemma}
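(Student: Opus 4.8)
The plan is to establish the lemma by LP duality and then to construct the prices and the distribution by an iterative refinement that loses only a constant factor per round over $O(\log\log m)$ rounds. Fix $i$, write $v = v_i$, and fix $U$. For a fixed price vector $p \geq 0$, the assertion that there is a distribution $\lambda$ over $S \subseteq U$ with $\sum_{S} \lambda_S\bigl(v(S\setminus T) - \sum_{j\in S}p_j\bigr) \geq v(U)/\alpha - \sum_{j\in T}p_j$ for every $T$ says exactly that the zero-sum game in which the maximizer plays $S \sim \lambda$, the minimizer plays $T$, and the payoff is $v(S\setminus T) - \sum_{j\in S}p_j + \sum_{j\in T}p_j$ has value at least $v(U)/\alpha$. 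This payoff is bilinear in the two players' mixed strategies over the finite set $2^U$, so by the minimax theorem it suffices to find $p \geq 0$ such that \emph{for every distribution $\mu$ over $T \subseteq U$ there is a single set $S$} with
\[
	\Ex{T\sim\mu}{v(S\setminus T)} - \sum_{j\in S}p_j + \Ex{T\sim\mu}{\textstyle\sum_{j\in T}p_j} \geq \frac{v(U)}{\alpha} .
\]
Writing $q_j = \Prr{T\sim\mu}{j\in T}$, note that whenever $\sum_j q_j p_j \geq v(U)/\alpha$ the choice $S = \emptyset$ already works, so the content lies in the regime where $T$ is ``cheap on average'' under $p$.

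To see where the known bound $\alpha = O(\log m)$ comes from — and hence what has to be improved — recall the classical supporting-price fact for subadditive valuations: there are $a_j \geq 0$ on $U$ with $\sum_{j\in U}a_j \geq v(U)$ and $\sum_{j\in S}a_j \leq L\cdot v(S)$ for all $S$, where $L = O(\log|U|)$; equivalently $\hat a := a/L$ is additive, lies below $v$ pointwise, and has $\sum_{j\in U}\hat a_j \geq v(U)/L$. Taking $p = \hat a/2$ and, given $\mu$, choosing $S = \emptyset$ when $\sum_j q_j\hat a_j$ is at least a constant fraction of $\sum_j \hat a_j$ and $S = U$ otherwise, a one-line computation — using $v(S\setminus T) \geq \sum_{j\in S\setminus T}\hat a_j$ and hence $\Ex{T\sim\mu}{v(S\setminus T)} \geq \sum_{j\in S}\hat a_j(1-q_j)$ — gives the required inequality with $\alpha = O(L) = O(\log m)$. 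The entire loss is incurred by collapsing $v$ onto a single additive surrogate.

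The improvement is to replace this single surrogate by a family of $O(\log\log m)$ surrogates built recursively. Given $v$ on a set $W$ with $|W| = k$: if $v$ restricted to $W$ is ``additive-like'', meaning $v(S) \geq \Omega(|S|/k)\cdot v(W)$ for all $S \subseteq W$, then uniform prices $p_j \propto v(W)/k$ together with the $\emptyset$-versus-$W$ dichotomy above finish on $W$ with an $O(1)$ loss. Otherwise one isolates a subset $W' \subseteq W$ carrying a constant fraction of $v(W)$ with $|W'| \leq k^{1-\Omega(1)}$, assigns the round-$t$ prices on $W \setminus W'$ using the general ($L = O(\log k)$) supporting prices of $v$ on $W$ — paying one $\log$, but against a set that shrinks polynomially at each round — and recurses on $W'$. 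Since $k \mapsto k^{1-\Omega(1)}$ reaches $O(1)$ after $O(\log\log k)$ steps, the construction uses $O(\log\log m)$ rounds; the final prices are the sum of the disjointly supported round prices, $\lambda$ is an appropriate mixture of the per-round distributions, and the minimax reformulation of the first paragraph is exactly what ensures that the per-round guarantees combine additively rather than multiplicatively.

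The main obstacle is the dichotomy in the recursive step: proving that a subadditive function that is \emph{not} additive-like must concentrate a constant fraction of its value on a polynomially smaller set of items, and doing so while choosing prices on the discarded items so that the round's contribution dovetails with all later rounds for \emph{every} adversary distribution $\mu$ simultaneously. This is precisely where the naive alternative — bucket the $O(\log m)$ value scales of $v$ and mix one distribution per bucket — fails, since a mixture over $\log m$ buckets is only as good as its best single bucket and hence reintroduces the full $\log m$ loss; the refinement must therefore genuinely shrink the active universe rather than partition it, and the primal-dual/minimax viewpoint is what keeps the per-round accounting honest.
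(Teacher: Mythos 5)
Your LP/minimax setup is a legitimate reformulation, though it differs from the paper's: you fix the prices $p$ first and play a zero-sum game over $\lambda$ and $T$, whereas the paper fixes $\lambda$, dualizes over $p$ and a slack variable, and thereby obtains a \emph{constraint} on the adversary --- $\mu$ must place no more marginal mass on any item than $\lambda$ does. Your account of how the $O(\log m)$ bound follows from a single additive (XOS) surrogate is also accurate. But the heart of your proposal, the recursive dichotomy, is exactly what you concede is unproven. Concretely: (a) it is not established that a subadditive $v$ on $W$ that fails to be ``additive-like'' must concentrate a constant fraction of $v(W)$ on a set of size $|W|^{1-\Omega(1)}$; and (b) even granting (a), it is not shown that the prices set at round $t$ on the annulus $W^{(t)} \setminus W^{(t+1)}$ combine additively with later rounds against a \emph{single} adversary distribution $\mu$, which can steal round $t$'s value by sitting entirely inside $W^{(t+1)}$ and thus paying only the later, higher-round prices. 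Without both of these ingredients the recursion does not get off the ground.

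The paper avoids both difficulties by never searching for explicit small high-value subsets. It restricts the protagonist to distributions $\lambda$ with a uniform marginal $q$ on every item; the dual constraint then forces $\mu$ to have marginals at most $q$ as well. Subadditivity gives $v_i(S\setminus T) \geq v_i(S) - v_i(S\cap T)$, and since $S\sim\lambda$ and $T\sim\mu$ are independent the induced distribution on $S\cap T$ has marginals at most $q^2$. Writing $f(q)$ for the best expected welfare achievable by a marginals-$\leq q$ distribution, this yields the clean recursion $g(q) \geq f(q) - f(q^2)$ for the game value; taking $q = 2^{-2^k}$ for $k = 0,\ldots,\log\log m$ and telescoping gives $\sum_k g(2^{-2^k}) \geq f(1/2) - f(1/m^2) \geq (1/2 - 1/m)\, v_i(U)$, from which the $O(\log\log m)$ bound follows by averaging. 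The double-exponential schedule plays the role of your polynomial shrinkage, but it acts on \emph{marginal probabilities} rather than on \emph{item sets}, which is precisely what sidesteps the combinatorial dichotomy your version would have to prove.
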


Given this lemma it is relatively straightforward to show Theorem~\ref{thm:1}. The idea is to let $(U_1, \dots, U_n)$ be the welfare-maximizing allocation, and for each $U_i$ and $j \in U_i$ use the prices $p_j$ from Lemma~\ref{lem:2} with $U = U_i$. The welfare argument then proceeds by rewriting the welfare as the sum of buyer utilities and revenue, with Lemma~\ref{lem:2} providing a tool to lower bound the buyer utilities. 

In this lower bound argument the set $T$ from Lemma~\ref{lem:2} can be interpreted as the set of items which are already gone when we consider agent $i$, and $\lambda$ is a distribution over sets of items $S$ that agent $i$ considers to buy. Of course, agent $i$ can only buy items that are still available, so she only derives value from $S \setminus T$.  The lemma therefore establishes that the utility that can be obtained by agent $i$ is at least a factor $\alpha$ of her contribution to the optimal welfare, less the revenue obtained from selling the items from $T$.

To prove Lemma~\ref{lem:2} we write down an LP and use strong LP duality to show the following equivalent condition: There exists a probability distribution $\lambda$
 over set of items $S$ so that for every probability distribution $\mu$ with $\sum_{T: j \in T} \mu_T \leq \sum_{S: j \in S} \lambda_S$, i.e., that puts at most the same probability mass on each item $j$ as distribution $\lambda$, it holds that
\begin{align}
\sum_{S, T} \lambda_S \cdot \mu_T \cdot v_i(S\setminus T) &\geq \frac{1}{\alpha} \cdot v_i(U). \label{eq:zerosum}
\end{align}

We interpret the left-hand side as a zero-sum game, in which the protagonist chooses $\lambda$ and the antagonist chooses $\mu$, and the protagonist's goal is to maximize $\sum_{S, T} \lambda_S \cdot \mu_T \cdot v_i(S\setminus T)$.  This has a natural interpretation: the designer's goal is to find a purchasing strategy for the buyer that maximizes the value of the set they obtain, and the adversary's goal is to arrange the purchasing outcomes so that removing all previously-sold items (i.e., $T$) steals most of the value from the buyer, leaving their realized value $v_i(S \setminus T)$ as small as possible.  

We prove a lower bound on the value of this game by restricting attention to distributions $\lambda$ that put the same probability mass $q$ on each item.  The crux of our argument is that for each such ``equal-marginals distribution'' $\lambda$ with corresponding probability $q$, the value of the zero-sum game is at least $f(q) - f(q^2)$, where $f(q)$ is the optimal expected social welfare that can be achieved by a distribution over sets of items $S \subseteq U$ that puts probability mass at most $q$ on each item.  Intuitively, if it's possible for the adversary to choose some distribution $\mu$ over $T$ that is guaranteed to ``steal the value'' from the buyer's distribution $\lambda$ over $S$, then it must be that the set $S \cap T$ has high expected value.  But if $\lambda$ and $\mu$ each place probability at most $q$ on each item, then the distribution over $S \cap T$ places probability at most $q^2$ on each item.  Thus, if the adversary can perform well in the zero-sum game for some $q$, this directly implies that we should consider the game with the significantly smaller marginal probability $q^2$.

To turn this intuition into an $O(\log\log m)$ bound, we let the protagonist consider such ``equal marginal distributions'' for $q = 2^{-2^k}$ for $k = 0$ to $k = O(\log \log m)$, and obtain a lower bound on the value of the zero-sum game by taking the average of the sum of the corresponding lower bounds $f(q) - f(q^2)$. Now by the choice of the $q$ this sum has $O(\log \log m)$ terms, and the sum is a telescoping sum which evaluates to $f(1/2)-f(1/m^2)$. The proof is completed by observing that the latter is at least $(1/2-o(1)) \cdot v_i(U)$.


\paragraph{Polynomial-Time Computation of Prices.} 
Our second result shows how to turn this existential proof into a polytime result, assuming appropriate demand query access to the valuation functions. 

\begin{theorem}[Welfare, Computational]
For subadditive combinatorial valuations drawn independently from known distributions and any $\epsilon > 0$, there is a polytime (in $n$, $m$, and $1/\epsilon$) algorithm to compute static and anonymous item prices for which the resulting posted-price mechanism achieves an $O(\log \log m)$ approximation to the optimal expected welfare up to an additive error of $\epsilon$.
\end{theorem}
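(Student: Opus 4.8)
The plan is to follow the existential argument behind Theorem~\ref{thm:1} almost line for line, replacing its three non-constructive ingredients by computable surrogates and bounding the error introduced by each. After a standard reduction to valuations whose values lie in a bounded range (truncating at a threshold large enough that the expected welfare lost is at most $\epsilon/4$; note that truncation preserves subadditivity and monotonicity), the non-constructive steps are: (i) choosing a welfare-maximizing allocation for each realized profile; (ii) taking an expectation over profiles to define the static anonymous prices; and (iii) invoking Lemma~\ref{lem:2} to produce the per-instance prices. For (i), I would instead run Feige's polynomial-time $O(1)$-approximation for offline subadditive welfare maximization~\cite{Feige09}, which uses only demand queries; since Lemma~\ref{lem:2} holds for \emph{any} choice of the sets $U_i$, using an $O(1)$-approximate allocation costs only a further constant factor, which is absorbed into $O(\log\log m)$.

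For (ii), in the Bayesian construction the static price of item $j$ is an expectation over profiles of the Lemma~\ref{lem:2} prices for the (now $O(1)$-approximate) allocation, i.e.\ of the form $\mathbb{E}\big[\sum_i \mathbf{1}[j\in U_i]\,p_{i,j}\big]$. I would estimate this by Monte Carlo: draw $N=\mathrm{poly}(n,m,1/\epsilon)$ i.i.d.\ profiles, run Feige's algorithm on each, compute per-instance prices for each agent (step (iii) below), and set the price of item $j$ to the empirical average $\frac1N\sum_t\sum_i \mathbf 1[j\in U^{(t)}_i]\,p^{(t)}_{i,j}$. All quantities involved are bounded (a price exceeding the largest value in the truncated range is never charged, since no buyer demands such an item), so Hoeffding's inequality guarantees that with high probability every empirical price is within an additive $\epsilon'$ of its target, for $\epsilon'$ any prescribed $\mathrm{poly}(\epsilon/(nm))$.

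For (iii), I need Lemma~\ref{lem:2} itself to be algorithmic. Its proof fixes the protagonist's strategy $\lambda$ to an explicit average over $k=0,\dots,O(\log\log m)$ of ``equal-marginals'' distributions with item mass $q=2^{-2^k}$, each of which is easy to sample from; with $\lambda$ fixed, the prices $p_j$ are the optimal multipliers of the $|U|+1$ marginal/normalization constraints in the inner minimization over the adversary's distribution $\mu$. The dual of that minimization has $|U|+1$ variables and one constraint per $T\subseteq U$, and its separation problem — given candidate prices, find a set $T$ maximizing $\sum_{j\in T}p_j-\mathbb{E}_{S\sim\lambda}[v_i(S\setminus T)]$ — can be solved to additive accuracy $\epsilon'$ with polynomially many value and demand queries (the expectation is estimated by sampling $S\sim\lambda$, and the outer optimization over $T$ is of demand-query type). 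Ellipsoid with this approximate oracle then returns $\epsilon'$-accurate prices; alternatively one solves the single-buyer configuration LP defining $f(q)$ directly by the same technique. I expect this step — turning the duality/zero-sum argument into an $\epsilon'$-accurate demand-query computation, with every approximation pointing the right way — to be the main technical obstacle.

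Finally, I would re-run the welfare analysis of Theorem~\ref{thm:1} verbatim, but with the $O(1)$-approximate allocation and the $\epsilon'$-perturbed prices in place of the exact ones. That analysis is a finite chain of inequalities (the per-agent instantiations of Lemma~\ref{lem:2} summed over agents, together with the welfare $=$ utility $+$ revenue decomposition), and each link degrades by at most $\mathrm{poly}(nm)\cdot\epsilon'$; taking $\epsilon'$ a small enough $\mathrm{poly}(\epsilon/(nm))$ keeps the total slack below $\epsilon/4$. Combined with the $\epsilon/4$ from truncation and the Feige constant absorbed into $O(\log\log m)$, this yields the stated bound. One subtlety to handle along the way: a buyer's demanded set is not continuous in the posted prices, so $\epsilon'$-perturbed prices could in principle induce a much worse demand; this is dealt with by a standard smoothing/tie-breaking device (e.g.\ adding tiny independent random offsets to the prices, or arguing that the set of ``bad'' price vectors has negligible measure), ensuring the realized welfare matches the value predicted by the analysis up to $o(\epsilon)$.
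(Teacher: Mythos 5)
Your proposal correctly identifies the three non-constructive ingredients and the general duality-based direction, but it diverges substantially from the paper's actual construction and has several genuine gaps.

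The paper's key move is to \emph{sidestep the zero-sum game entirely} on the algorithmic side. For each candidate $q=2^{-2^X}$, it writes down the configuration LP whose value is $f^{\bv}(q^2)$ (a single LP over all agents and all of $M$, with a global marginal cap of $q^2$ per item), solves its \emph{dual} by Ellipsoid — and the dual's separation oracle is exactly a demand query — and then sets $p^q_j=\Ex{\tilde{\bv}}{q\,y_j^{\tilde{\bv},q}}$ where $y_j$ are the dual item variables. The distribution $\lambda^{i,\bv,q}$ (an optimal primal solution to the config LP for $f^{\bv}(q)$) never needs to be computed or sampled from: it appears only in the analysis (Claim~\ref{claim:polytime-q-qsquard}), which directly verifies $\sum_{j\in T}p^q_j+\Ex{\bv}{\sum_{i,S}\lambda^{i,\bv,q}_S(v_i(S\setminus T)-\sum_{j\in S}p^q_j)}\ge\Ex{\bv}{f^{\bv}(q)}-\Ex{\bv}{f^{\bv}(q^2)}$ using subadditivity and complementary slackness. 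Consequently, the paper needs neither Feige's algorithm nor a per-instance optimal allocation: the LP already ranges over all fractional allocations.

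Concrete gaps in your plan. First, your claim that the proof of Lemma~\ref{lem:2} ``fixes the protagonist's strategy $\lambda$ to an explicit average'' that is ``easy to sample from'' is not accurate: the proof of Lemma~\ref{lem:loglog-complete} only shows \emph{some} $q$ works, and the good $\lambda$ is the (near-)maximizer of a configuration LP, not anything closed-form. Second, the separation problem you propose for the dual of the inner minimization — minimize $\sum_{j\in T}p_j+\Ex{S\sim\lambda}{v_i(S\setminus T)}$ over $T$ — is \emph{not} a demand query; $T\mapsto\Ex{S\sim\lambda}{v_i(S\setminus T)}$ has no structure you can exploit, and you offer no oracle for it. This is exactly the obstacle the paper's config-LP reformulation removes. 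Third, Monte-Carlo averaging of the \emph{per-instance} Lemma~\ref{lem:2} prices does not, as far as I can tell, yield the Bayesian key inequality: after summing the complete-information inequalities over $i$ and taking $\Ex{\bv}{\cdot}$, the cross term $\Ex{\bv}{\sum_{i,S}\lambda^{i,\bv}_S\sum_{j\in S}p^{\bv}_j}$ does not factor into $\sum_j p_j\cdot\Ex{\bv}{\cdots}$ because $p^{\bv}_j$ and $\lambda^{i,\bv}$ are both $\bv$-dependent and correlated. The paper avoids this by using the marginal constraint of $\lambda$ to bound $\sum_{i,S}\lambda^{i,\bv,q}_S\sum_{j\in S}p^q_j\le q\sum_j p^q_j$ \emph{pointwise}, a step unavailable to you without choosing prices and distributions from the same LP. Finally, the discontinuity-of-demand worry you flag at the end is a non-issue: the welfare is analyzed as utility plus revenue, and the $\sum_{j\in T}\hat p_j$ term that appears in the utility lower bound (with $T=\SOLD$) cancels exactly against the revenue, for whatever prices are posted; no smoothing device is needed.

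In short: the duality intuition and error-budgeting scaffolding are reasonable, but the central algorithmic step (how to actually get the prices from demand queries) is not solved in your proposal, and the Bayesian averaging step is broken as stated. The paper's config-LP reformulation resolves both issues at once.
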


We prove this theorem by reformulating our optimization problem in a way that avoids having to compute the equilibrium distributions in our zero-sum game, and instead draws a connection to the classic configuration LP for combinatorial assignment.  We then use the fact that the configuration LP can be solved in polynomial time using the Ellipsoid method, since a separation oracle can be implemented with demand queries. 

Specifically, we draw $q$ uniformly from the set of $q$'s introduced above. For the resulting $q$ we consider the dual LP to the configuration LP for $f(q^2)$ --- the optimal welfare that can be achieved with a probability distribution that puts probability mass at most $q^2$ on each item---and we use the prices from this dual LP scaled by $q$.

The intuition behind this construction is as follows.  In our argument above we bounded the value of the zero-sum game by $f(q) - f(q^2)$.  Here $f(q)$ is the highest expected value that the protagonist could obtain from a choice of $\lambda$ if the adversary abstained, and $f(q^2)$ is an upper bound on how much value the antagonist can take away by choosing $\mu$ optimally.
By taking the dual prices for the configuration LP for $f(q^2)$ and scaling them by $q$, we are effectively setting prices that approximate the welfare loss due to the antagonist's strategy, which is to say the worst-case loss from excluding items that have already been sold.

\paragraph{Revenue Maximization.} 
We also show how to leverage our new insights
to make progress on another important frontier in algorithmic mechanism design. Namely, the question of how well the optimal revenue that can be obtained by a Bayesian incentive compatible (BIC) mechanism can be approximated by a simple and dominant strategy incentive compatible (DSIC) mechanism.
Our revenue approximation makes use of a framework for constructing simple mechanisms due to Cai and Zhao~\cite{CaiZ17}, which builds upon a recent literature applying a duality approach to revenue maximization~\cite{CaiDW16}. Cai and Zhao established an $O(\log m)$ approximation under a natural item independence assumption.  
Under the same assumption we show:

\begin{theorem}[Revenue]
When buyers have subadditive valuations over independent items, there is a simple DSIC mechanism that yields an $O(\log \log m)$ approximation to the optimal BIC revenue.
\end{theorem}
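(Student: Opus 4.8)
The plan is to plug our new prophet inequality into the duality-based framework of Cai and Zhao~\cite{CaiZ17}. Recall that, under item-independence, their framework upper-bounds the optimal BIC revenue by a benchmark of the form
\[
  \mathrm{REV}^{\mathrm{BIC}} \;\le\; O(1)\cdot\bigl(\mathrm{SINGLE} + \mathrm{CORE}\bigr) ,
\]
obtained from a Lagrangian/virtual-value dual in the style of~\cite{CaiDW16}. Here $\mathrm{SINGLE}$ collects the revenue attributable to each buyer's ``favorite'' item in isolation and is $O(1)$-approximable by a simple posted-price (or sequential) mechanism via standard single-dimensional arguments; $\mathrm{CORE}$ is the expected optimal \emph{welfare} of an auxiliary instance in which each buyer $i$'s valuation is a type-dependent truncation $v_i^{(c_i)}(S) = \min\{v_i(S),c_i\}$. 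The point is that truncating a subadditive function yields a subadditive function, so the $\mathrm{CORE}$ instance is again a subadditive combinatorial auction, and that the only place Cai and Zhao incur a $\log m$ loss is in converting $\mathrm{CORE}$ into revenue via the $O(\log m)$ prophet inequality of Feldman, Gruenhut and Lucier~\cite{FeldmanGL15}. Replacing that step with Theorem~\ref{thm:1}, or more precisely with the per-buyer guarantee of Lemma~\ref{lem:2}, is exactly what improves the bound to $O(\log\log m)$.

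Concretely, I would first invoke Theorem~\ref{thm:1} on the $\mathrm{CORE}$ instance to obtain static anonymous item prices $p_j$ whose induced posted-price mechanism achieves expected welfare $\Omega(\mathrm{CORE}/\log\log m)$. What we actually need is the stronger per-buyer form that comes out of Lemma~\ref{lem:2}: with these prices, the expected utility of buyer $i$ is at least a $1/O(\log\log m)$ fraction of her contribution to $\mathrm{CORE}$, minus the expected price-revenue collected from items sold before $i$ arrives. Summing over buyers in arrival order, the per-buyer utilities and the item-price revenue together account for $\Omega(\mathrm{CORE}/\log\log m)$, with the subtracted revenue telescoping against the collected item-price revenue just as in the proof of Theorem~\ref{thm:1}.

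Next I would convert this welfare guarantee into revenue using the two mechanism classes of Cai and Zhao. In the entry-fee version, we charge buyer $i$ an up-front fee equal to a constant fraction of the median of her utility distribution under the posted prices; the standard entry-fee argument (as in~\cite{CaiZ17}) shows that this extracts a constant fraction of $\mathbb{E}[u_i]$ as revenue while only mildly reducing the probability that $i$ participates, hence preserving the item-price revenue up to a constant. In the item-cap version, restricting each buyer to purchase at most a bounded number of items plays the analogous role. Either way we obtain a simple DSIC mechanism --- posted item prices plus per-buyer entry fees (or caps), to which each buyer responds by demand-maximization, hence dominant-strategy incentive compatible --- with expected revenue $\Omega(\mathrm{CORE}/\log\log m)$. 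Combining with the $O(1)$-approximation of $\mathrm{SINGLE}$ and the benchmark inequality above yields an $O(\log\log m)$ approximation to $\mathrm{REV}^{\mathrm{BIC}}$.

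The main obstacle I anticipate is not a single estimate but the bookkeeping needed to interface our primal--dual prophet inequality with the Cai--Zhao decomposition: one must check that Lemma~\ref{lem:2} applies per-buyer to the truncated valuations $v_i^{(c_i)}$ with the caps fixed by the dual (subadditivity is preserved, so this should go through), that the ``revenue subtracted off'' in the per-buyer bound telescopes against the item-price revenue exactly as in the welfare proof, and that the entry-fee conversion is compatible with anonymous item prices of the magnitudes produced by our construction rather than those of~\cite{FeldmanGL15}. A secondary point is to confirm that $\mathrm{SINGLE}+\mathrm{CORE}$ is indeed a valid dual benchmark for subadditive-over-independent-items valuations, and that every step remains polynomial-time (using the computational version of Theorem~\ref{thm:1} and demand queries) if one additionally wants the resulting mechanism to be efficiently computable.
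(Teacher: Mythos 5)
Your high-level plan matches the paper's: invoke the Cai--Zhao core--tail decomposition, replace the $O(\log m)$ prophet inequality with our new one in the step that converts the core to revenue via posted prices plus entry fees, and combine. But there are two genuine gaps, and the paper flags the first one explicitly.

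First, the core benchmark in Cai--Zhao is \emph{not} the optimal welfare of a truncated auxiliary instance, so you cannot apply Theorem~\ref{thm:1} or Lemma~\ref{lem:2} as black boxes the way you propose. The benchmark is
$\textsc{Core} = \Ex{\bv}{\sum_{i,S} \sigma_S^i(v_i)\, v'_i(S)}$,
the welfare of a \emph{fixed interim allocation rule} $\sigma$ (the one produced by the dual) under the core-restricted valuations $v'_i(S) = v_i(S \cap C_i(v_i))$ (item removal, not value truncation $\min\{v_i(S),c_i\}$). Lemma~\ref{lem:2} is stated for a specific set $U$ coming from the welfare-optimal allocation; here we must track the marginals $\pi_j^i(v_i) = \sum_{S \ni j} \sigma_S^i(v_i)$ of an arbitrary interim rule. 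This is exactly why the paper proves Lemma~\ref{lem:key-lemma-revenue}, an extension of the key lemma that permits arbitrary per-agent, per-item, type-dependent constraints $z_{ij}(v_i)$ on the allocation marginals, and then invokes it with $z_{ij}(v_i) = \pi_j^i(v_i)$. Without this extension the argument doesn't go through, because the quantity you'd be lower-bounding would be the welfare of the wrong allocation rule.

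Second, the entry-fee extraction step is not ``standard'' in the form you invoke. To show the median entry fee extracts a constant fraction of each buyer's expected surplus, one needs the surplus function to have a small Lipschitz constant so that it concentrates around its median. The valuations $v'_i$ alone do not give this. The paper (following Cai--Zhao) therefore introduces a second core restriction --- the sets $Y_i(v_i)$ and valuations $\hat v_i(S) = v_i(S \cap Y_i(v_i))$ defined relative to the \emph{prices themselves}, with associated $\tau_i$-Lipschitz surplus $\mu_i$ --- and then must separately bound the gap between the welfare under $v'_i$ and under $\hat v_i$ (Lemma~\ref{lemma:welfare-with-without-hat}) by $O(\textsc{PostRev})$. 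Your sketch omits this layer entirely, and it is the place where items ``high-valued relative to the posted prices'' are excised, a distinct operation from the initial $\beta$-based core cut. Both of these pieces are what your closing paragraph calls ``bookkeeping,'' but they are actually where the content of the proof lives.
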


A key step in the proof of Cai and Zhao~\cite{CaiZ17} invokes a posted-price-based prophet inequality for welfare maximization, and indeed their approximation factor of $O(\log m)$ is driven by the $O(\log m)$-approximate prophet inequality that they apply. However, one cannot apply a prophet inequality to their framework as a black box. The reason is that the prophet inequality is invoked to argue that the value of a certain interim allocation rule---the core of a revenue-optimal mechanism---can be approximated by posted prices. 
%
%
We obtain the improved bound by extending our prophet inequality so that it can handle arbitrary (not necessarily equal) constraints on the marginal probability of allocating each item, and using this to obtain a better price-based approximation to the core. 

\paragraph{Going Beyond $\mathbf{O(\text{log}\, \text{log}\, m}$).} 
Finally, we demonstrate that the $O(\log\log m)$-factor that shows up in all our bounds is best possible using our approach.  In particular, our analysis restricts to distributions that set the same marginal probability $q$ of allocating each item.  We show by way of example that such distributions (and their associated dual prices) can suffer loss as high as $\Omega(\log\log m)$.  

\begin{theorem}[Lower Bound]
There exists a subadditive valuation function $v_i$ over $m$ items such that for any $q \in [0,1]$, any $\alpha \in o(\log \log m)$, and any distribution $\lambda$ that puts probability at most $q$ on each item there exists a distribution $\mu$ that puts probability at most $q$ on each item that violates inequality~(\ref{eq:zerosum}).
\end{theorem}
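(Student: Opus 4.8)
The plan is to exhibit a single monotone subadditive $v$ on the $m$ items (this will be the $v_i$ of the statement; it does not depend on $q$) for which the analysis behind Lemma~\ref{lem:2} is \emph{tight at every scale}. Fix the doubly‑dyadic probabilities $q_k:=2^{-2^k}$ and let $L=\Theta(\log\log m)$ be the first index with $q_L\le 1/m$, so that $q_k^2=q_{k+1}$ and these are (up to $O(1)$ terms) exactly the probabilities averaged over in the telescoping step of Lemma~\ref{lem:2}. I will build $v$ with $v(M)=\Theta(\log\log m)$ and a level structure indexed by $k=0,\dots,L$ so that, writing $f(q):=\max\{\Ex{}{v(S)}:\Pr{j\in S}\le q\text{ for all }j\}$, the following hold: \textup{(A)} $f(q)-f(q^2)=O\!\big(v(M)/\log\log m\big)$ for every $q\in[0,1]$; and \textup{(B)} for every $q$ and every $\lambda$ with $\Pr{j\in S}\le q$ for all $j$ there is a $\mu$ with $\Pr{j\in T}\le q$ for all $j$ satisfying $\Ex{S\sim\lambda,\,T\sim\mu}{v(S\setminus T)}=O\!\big(v(M)/\log\log m\big)$. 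Given \textup{(B)}, the theorem is immediate: for any $q$, any $\lambda$ with $\Pr{j\in S}\le q$, and any $\alpha\in o(\log\log m)$, the $\mu$ from \textup{(B)} gives $\sum_{S,T}\lambda_S\mu_T\,v(S\setminus T)=O(v(M)/\log\log m)<v(M)/\alpha$ for all large $m$, which is the negation of~\eqref{eq:zerosum}.

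The first two steps are to define $v$ and establish \textup{(A)}. I would make level $k$ a gadget supported on a block of size about $1/q_k=2^{2^k}$, calibrated so that its contribution to $\Ex{}{v(S)}$ reaches its full $\Theta(v(M)/\log\log m)$ value exactly when the per‑item probability rises to $\approx q_k$; all the gadgets I have in mind are assembled from capped/threshold functions of $|S\cap B|$ for blocks $B$, combined by scaled sums and outer caps, operations under which monotonicity and subadditivity are preserved, so $v$ is automatically a monotone subadditive valuation. To compute $f$ one argues level by level using the elementary fact that a distribution with per‑item probability $\le q$ can hit a block of size $n$ with probability at most $\min(1,qn)$ and can attain this; this gives $f(q)\approx \#\{k:q_k\le q\}\cdot v(M)/\log\log m$, a quantity depending on $q$ only through $\log_2\log_2(1/q)$. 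Since $\log_2\log_2(1/q^2)=1+\log_2\log_2(1/q)$, squaring $q$ deactivates exactly one level, which yields \textup{(A)} (the ranges $q\ge 1/2$ and $q\le q_L$, where $f$ is essentially flat at $v(M)$ and at $0$, are checked directly).

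The heart of the matter — and the step I expect to be the main obstacle — is \textup{(B)}: at \emph{every} scale and against \emph{every} in‑class $\lambda$ the adversary must wipe out all but an $O(1/\log\log m)$ fraction of $v(M)$. This is delicate precisely because subadditivity favours ``robust'' valuations: by the standard fact $f(1/2)\ge v(M)/2$ (apply subadditivity to $S$ and $M\setminus S$ with each item in $S$ independently with probability $\tfrac12$), at $q=\tfrac12$ there is always a $\lambda$ with $\Ex{}{v(S)}\ge v(M)/2$, whereas deleting a set of per‑item probability $\le q$ can remove only a $\approx q$ fraction of the items of a typical $S$. So $v$ must be engineered so that, against any distribution of per‑item probability $\le q_k$, the level‑$k$ value of $S$ is — up to $O(v(M)/\log\log m)$ — already ``witnessed'' on a sub‑collection of $S$ thin enough ($\approx q_k^2$ per item) that a single per‑item‑$q_k$ distribution $\mu$ can cover it, collapsing level $k$ to the level‑$(k-1)$ value once $T\sim\mu$ is deleted. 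I would then take $\mu$ to be a mixture of independent per‑level attacks (one per level active at scale $q$, each a random ``cover set'' of the right size inside that level's blocks) and bound the residual, level by level, using $v(S\setminus T)\le\sum_k(\text{level-}k\text{ contribution of }S\setminus T)$, so the total expected residual is $O(L)\cdot O(v(M)/L^2)=O(v(M)/\log\log m)$. Making the per‑level gadget genuinely \emph{fragile} in this sense while keeping it subadditive — so that one per‑item‑$q_k$ deletion really does collapse level $k$ rather than merely denting it — is the crux, and it is exactly this mechanism, mirroring the $\Theta(\log\log m)$ doubly‑dyadic scales of the positive proof at each of which the equal‑marginal protagonist provably cannot beat $f(q)-f(q^2)$, that pins the loss at $\Omega(\log\log m)$ and not anything smaller.
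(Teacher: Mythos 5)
Your high-level architecture matches the paper's: a layered subadditive $v=\sum_{\ell=0}^{L}v^{(\ell)}$ with $L=\Theta(\log\log m)$ levels, level $\ell$ calibrated to the scale $q_\ell=2^{-2^\ell}$, total value $v(M)=\Theta(\log\log m)$, and the argument that for each $q$ the antagonist can collapse all but $O(1)$ of the layers. Your claim~(B) is the right target, you correctly reduce the theorem to it, and you correctly identify the per-level fragility as the crux. The gap is that the gadgets you propose --- ``capped/threshold functions of $|S\cap B|$ for blocks $B$'' --- provably cannot be fragile in the sense~(B) requires. Any monotone subadditive function that depends only on $|S\cap B|$ is robust against exactly the kind of deletion the antagonist is allowed: if $T$ has per-item probability at most $q\leq\tfrac14$ and is independent of $S$, then with constant probability $|S\cap B\setminus T|\geq\tfrac12|S\cap B|$, and a monotone subadditive function $h$ of cardinality satisfies $h(n)\leq 2\,h(n-k)$ whenever $k\leq n/2$. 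So the antagonist can only dent such a gadget by a constant factor; the per-level residual would be $\Theta(v(M)/L)$, not the $O(v(M)/L^2)$ you need, and summing over levels gives $\Theta(v(M))$ rather than $O(1)$.

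The missing ingredient is the paper's per-level gadget: the set-cover integrality-gap function $f$ on $2^k-1$ items of Lemma~\ref{lemma:setcoverintegralitygapfunction} (from Bhawalkar--Roughgarden). Its third property is exactly the fragility you need: for every $d\leq k$ there is a balanced family of deletion sets of size $2^d-1$ whose removal drops $f$ from $k$ to $k-d$. Removing a roughly $2^{d-k}$ fraction of items removes a $d/k$ fraction of value --- a super-multiplicative collapse that is structurally impossible for any cardinality-based subadditive gadget. The paper then lets the antagonist's $T$ be a union, over active levels $\ell>\ell^\star:=\lceil\log_2\log_2(1/q)\rceil$, of independently drawn cover sets of tuned size $d=\lfloor 2^{\ell^\star}(2^x-(3/2)^x)\rfloor$ with $x=\ell-\ell^\star$, chosen so that the per-item probabilities summed over levels stay below $q$ while the residual $v^{(\ell)}(M\setminus T)$ decays geometrically in $x$; the remaining levels $\ell\leq\ell^\star$ are handled by bounding what $\lambda$ alone can extract from them, which is also geometric. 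Without a gadget possessing this covering-number fragility, claim~(B) is false for the construction you sketch, and the rest of the outline does not go through.
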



Our restriction to equal-marginal distributions was crucial for our 
approach to optimizing over distributions.  Of course, it is natural to wonder whether our bound could be improved by relaxing the equal-marginals assumption and permitting an arbitrary profile of marginal distributions.  Indeed, we conjecture that an $O(1)$-approximate prophet inequality can be achieved using item prices that are dual to a distribution with unequal marginals.  But we leave resolving this conjecture as an open problem.

\paragraph{Discussion: Connection to Balanced Prices.}

The main difference between our $O(\log \log m)$ approximation and the earlier state-of-the-art $O(\log m)$ prophet inequality from \cite{FeldmanGL15} is that this earlier approach constructed prices by approximating subadditive valuations through fractionally subadditive (a.k.a.~XOS) functions.  This leads to ``balanced prices'' in the sense of \cite{DuettingFKL17}, where the sum of all prices matches the optimal allocation precisely.

More generally, the balanced prices framework of \cite{DuettingFKL17} entails constructing prices for any fixed valuation profile, such that (a) the prices of any subset of items partially offset the value lost due to allocating these items, and (b) the sum of all prices is upper bounded by the total value of the optimal allocation of all items. With parameters $1/\alpha$ and $\beta$ for (a) and (b) this leads to $O(1/(\alpha\beta))$ price-based prophet inequalities. However, such balanced prices cannot lead to a better than $O(\log m)$ approximation for subadditive combinatorial auctions~\cite{FeldmanGL15}.


Our prices are different, and will generally be much higher.  The basic intuition is that, under balanced prices, the sum of all prices approximates the optimal welfare, so in a sense the set of all items is ``affordable'' and the prices facilitate an outcome where most of the items are purchased.  However, depending on the curvature of the subadditive valuations, it may be better to target much smaller sets for purchase if they already capture most of the value.  By looking at different marginals $q$ we are basically considering different sizes $q \cdot m$ of sets of items to go after, and our key lemma establishes that there is always a good choice of $q$.  As $q$ becomes smaller, the prices we construct are tailored to facilitate purchases of smaller sets of items, and hence the item prices tend to increase.

\subsection{Further related work}

%

From a purely algorithmic perspective social welfare with fractionally subadditive (or XOS) and subadditive valuations can be approximated to within a constant factor assuming demand queries. The state-of-the-art for both XOS and subadditive valuations is a $1-1/e$ approximation due to \cite{Feige09}. These approximation guarantees are best possible in the sense that they match the integrality gap of the LP formulations they are based on.  

An important question in algorithmic mechanism design concerns the gap between the best (worst-case) approximation guarantee that can be obtained without incentives (i.e., purely algorithmically) and with a truthful mechanism.
Recent breakthroughs for submodular (a subclass of XOS) valuations were obtained by Dobzinski \cite{Dobzinski16}, who gave a $O(\sqrt{\log m})$ truthful approximation mechanism for submodular valuations and by Assadi and Singla~\cite{AssadiS19} who gave a $O((\log \log m)^3)$ truthful approximation mechanism for XOS valuations. 
Finding DSIC approximation mechanisms with constant worst-case approximation guarantees for either XOS or subadditive valuations or disproving their existence is a major open problem.

A related question concerns the relative power of truthful direct-revelation mechanisms and general mechanisms at equilibrium.  The latter can be analyzed using the price of anarchy framework, in which the expected optimal solution is compared with the worst-case expected outcome at any Bayes-Nash equilibrium of the mechanism.  Christodoulou, Kovacs, and Schapira~\cite{christodoulou2016bayesian} established an $O(1)$ price of anarchy bound for simultaneous item auctions for XOS valuations, which was subsequently extended to a variety of auction formats and related solution concepts~\cite{SyrgkanisT13}.  In particular, Feldman, Fu, Gravin, and Lucier~\cite{FeldmanFGL13} established an $O(1)$ price of anarchy for simultaneous item auctions under subadditive valuations.

Prophet inequalities for XOS and subadditive combinatorial auctions in which agents arrive one by one were previously given in \cite{FeldmanGL15,DuettingFKL17} and \cite{EhsaniHKS18}. For XOS valuations an optimal factor 2 is shown in \cite{FeldmanGL15,DuettingFKL17}, and this can be improved to $1-1/e$ by additionally assuming agents arrive in random order \cite{EhsaniHKS18}. For subadditive valuations, Feldman et al.~\cite{FeldmanGL15} give an $O(\log m)$ approximation. 
Rubinstein and Singla \cite{RubinsteinS17} consider a related but different problem, where there is one subadditive function across all entities that arrive over time. They give an $O(\log m \log^2 r)$ prophet inequality, where $r$ is the rank of an arbitrary downward closed feasibility constraint.

In concurrent and independent work, \cite{Zhang20} was able to improve the $O(\log m)$ prophet inequality for subadditive combinatorial auctions to $O(\log m/\log \log m)$.  This marks an important breakthrough as it shows that it is possible to improve upon the $O(\log m)$ bound.

Our application to revenue maximization builds upon a recent literature on approximately revenue-optimal mechanisms for buyers with multi-dimensional types.  For unit-demand buyers, one can obtain a constant approximation to the optimal mechanism with multiple buyers~\cite{chawla2007algorithmic,ChawlaHMS10,chawla2010power}.  Simple constant approximations are known for additive buyers with independent valuations, using a technique known as a tail-core decomposition which bounds separately the revenue contribution from rare outlier values and from ``expected'' valuation profiles~\cite{HartNisan2017,LiYao13,BILW14,yao2014n}.  Chawla and Miller showed how to combine both approaches to develop a general class of approximately optimal mechanisms based on posted prices with per-buyer entry fees~\cite{chawla2016mechanism}.  Cai, Devanur, and Weinberg further unify these approaches using a flexible duality framework to effectively ``linearize'' valuations with respect to revenue~\cite{CaiDW16}.  The ideas behind these mechanisms have since been extended to more general valuation classes, including XOS and subadditive valuations~\cite{CaiZ17,rubinstein2018simple}.  Most related to the current paper is the work of Cai and Zhang~\cite{CaiZ17}, which (among other things) uses this framework to design an $O(\log m)$-approximate mechanism for subadditive valuations, based on posted item prices with per-buyer entry fees.


\section{Model and Definitions}

\paragraph{Subadditive Combinatorial Auctions.} We are given a set $N$ of $n$ buyers and a set $M$ of $m$ goods. 
Each buyer $i \in N$ has a valuation function $v_i: 2^M \rightarrow \RR_{\geq 0}$, which is assumed to be normalized and monotone, i.e., $v_i(\emptyset) = 0$ and $v_i(S) \leq v_i(T)$ for $S \subseteq T \subseteq M$. A valuation function $v_i$ is \emph{subadditive} if 
\[
v_i(S) + v_i(T) \geq v_i(S \cup T) \quad \text{for} \quad S, T \subseteq M.
\]

We use $\bv = (v_1, \dots, v_n)$ to denote a vector of valuation functions. We will occasionally write $\bv = (v_i, \bv_{-i})$, where we use $\bv_{-i}$ to denote the valuations of all buyers except buyer $i$.

We assume a \emph{Bayesian setting}, in which the valuation function of each buyer $i$ is drawn independently from distribution $\D_i$. We write $\D = \prod_i \D_i$ for the joint distribution. We emphasize that the independence here is \emph{across bidders}. Valuations of a fixed agent can be arbitrarily correlated across items (though we revisit this when discussing applications to revenue maximization in Section~\ref{sec:revenue}). We assume that the designer knows the distributions from which the valuation functions are drawn, but not the realizations of the random draws.

An allocation $\bx = (x_1, \dots, x_n)$ defines for each buyer $i \in N$ a set of goods $x_i \subseteq M$ that he receives. We require that no good is assigned more than once, i.e., that $x_i \cap x_j = \emptyset$ whenever $i \neq j$. We write $\bx_{<i}$ for partial allocations to buyers $s < i$, i.e., for $t \geq i$ we have $x_t = \emptyset$.

We evaluate allocations by the welfare they achieve. The welfare of an allocation $\bx$ is $\sum_i v_i(x_i)$. We write $\OPT(\bv)$ for the welfare-maximizing allocation, and $\bv(\OPT((\bv)) = \sum_{i = 1}^{n} v_i(\OPT_i(\bv))$ for the welfare it achieves.

\paragraph{Posted-Price Mechanisms.}


A posted-price mechanism uses a set of functions $p_i(\cdot \mid \bx_{<i}): 2^M \rightarrow \RR_{\geq 0}$ which assign a non-negative price to each set of items $S \subseteq M$.  Note that these functions can be personalized, they can be ``per set'' rather than ``per item'' (i.e., the price of a set of items need not be a sum of prices of individual items), and they may depend on which items were already allocated.

Of particular interest will be posted-price mechanisms that use static anonymous item prices. 
A posted-price mechanism has \emph{anonymous} prices if there exists a single set of functions $p(\cdot \mid \bx_{<i})$ such that $p_i(S \mid \bx_{<i}) = p(S \mid x_{<i})$ for all $i, S$, and $\bx_{<i}$. It uses \emph{item prices} if $p_i(S \mid \bx_{<i}) = \sum_{j \in S} p_i(\{j\} \mid \bx_{<i})$ for all $i, S$, and $\bx_{<i}$. Finally, prices are \emph{static} if for each $i$ there is a single function $p_i(\cdot)$ such that $p_i(S \mid \bx_{<i}) = p_i(S)$ for all $S$ and $\bx_{<i}$.

An important advantage of posted-price mechanism with static anonymous item prices is that they can be succinctly described by a single 
vector $\bp = (p_1, \dots, p_m) \in \mathbb{R}^m_{\geq 0}$. 

A posted-price mechanism proceeds as follows.  The buyers arrive sequentially, and for notational convenience we assume they are indexed according to their arrival order.\footnote{All of our results continue to hold in a more general setting where the arrival order is arbitrary and unknown to the designer (but still fixed in advance) and is revealed online as the buyers arrive.} Upon arrival of buyer $i$ the mechanism posts a price $p_i(S, \bx_{<i})$ for each set of items $S$. Buyer $i$ buys any set of items $x_i$ that maximizes her utility $u_i(x_i,\bp) = v_i(x_i) - \sum_{j \in x_i} p_j$ among all such sets.


Given a fixed choice of item prices, we will tend to write $\ALG$ for the corresponding posted-price mechanism, $\ALG(\bv)$ for the resulting allocation of items when valuations are $\bv$, and use $\bv(\ALG(\bv)) = \sum_{i=1}^{n} v_i(\ALG_i(\bv))$ to denote the welfare it achieves.

\paragraph{Prophet Inequalities.}

We will follow the ``prophet-inequality paradigm'' to evaluate the performance of posted-price mechanisms. That is, we will evaluate the performance of a posted-price mechanism $\ALG$ by comparing its expected welfare $\Ex{\bv \sim \D}{\bv(\ALG(\bv))}$ to the expected optimal welfare $\Ex{\bv \sim \D}{\bv(\OPT(\bv))}$. Extending the notion of competitive ratio from the worst-case analysis of online algorithms, we define the \emph{(stochastic) competitive ratio} of a posted-price mechanism as
\[
\sup_{\D} \frac{\Ex{\bv \sim \D}{\bv(\OPT(\bv))}}{\Ex{\bv \sim \D}{\bv(\ALG(\bv))}}.
\]

\section{An $\mathbf{O(\text{log}\, \text{log}\, m)}$ Price-Based Prophet Inequality for Welfare}
\label{sec:welfare}

We start by establishing the existence of an $O(\log \log m)$-competitive price-based prophet inequality for subadditive combinatorial auctions and the goal of maximizing welfare. 

\begin{theorem}\label{thm:upper-bound}
For subadditive combinatorial auctions there is a $O(\log \log m)$-competitive posted-price mechanism that uses static anonymous item prices. 
\end{theorem}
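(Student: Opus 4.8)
The plan is to get the theorem almost for free from the Key Lemma in the complete-information case, and then to upgrade to the Bayesian setting by the by-now-standard reference-sample coupling. Throughout I would write the mechanism's welfare as buyer utilities plus revenue, $\bv(\ALG(\bv)) = \sum_i u_i(x_i) + \mathrm{REV}$ with $\mathrm{REV} = \sum_{j \in \SOLD} p_j = \sum_i \sum_{j \in x_i} p_j$, and bound the two parts separately, using Lemma~\ref{lem:2} to lower bound the utilities.

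\emph{Complete information.} Fix $\bv$ and let $(U_1,\dots,U_n) = \OPT(\bv)$. Since this is an allocation the $U_i$ are disjoint, so each item $j$ lies in at most one $U_i$, and I set $p_j$ to be the price supplied by Lemma~\ref{lem:2} applied to $(v_i, U_i)$ for that unique $i$ (prices for items in no $U_i$ are irrelevant — set them to $+\infty$). These prices are static, anonymous, per item by construction. Now run the mechanism under an arbitrary arrival order, say $1,\dots,n$, and let $A_i$ be the set of items already sold when buyer $i$ arrives. Writing $\lambda^{(i)}$ for the distribution over $S \subseteq U_i$ from the same invocation of Lemma~\ref{lem:2}, buyer $i$ buys a utility-maximal available set, so for every $S$ in the support of $\lambda^{(i)}$, using $S \subseteq U_i$ (hence $S\setminus A_i = S\setminus T_i$ for $T_i := A_i \cap U_i$) and $p_j \ge 0$,
\[
 u_i(x_i) \;\ge\; v_i(S \setminus A_i) - \sum_{j \in S \setminus A_i} p_j \;\ge\; v_i(S \setminus T_i) - \sum_{j \in S} p_j .
\]
Averaging over $S \sim \lambda^{(i)}$ and invoking Lemma~\ref{lem:2} with the set $T_i \subseteq U_i$ yields $u_i(x_i) \ge v_i(U_i)/\alpha - \sum_{j \in T_i} p_j$, and summing over $i$ gives $\sum_i u_i(x_i) \ge \bv(\OPT(\bv))/\alpha - \sum_i \sum_{j \in T_i} p_j$.

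\emph{Cancelling the revenue.} The reason to feed $T_i = A_i \cap U_i$ (rather than all of $A_i$) into Lemma~\ref{lem:2} is that it makes the loss term telescope: a sold item $j$ belongs to $U_i$ for exactly one $i$ (its $\OPT$-owner), so it contributes $p_j$ to $\sum_i \sum_{j \in T_i} p_j$ at most once, and only if it was sold before that owner arrived. Hence $\sum_i \sum_{j \in T_i} p_j \le \sum_{j \in \SOLD} p_j = \mathrm{REV}$, so $\sum_i u_i(x_i) \ge \bv(\OPT(\bv))/\alpha - \mathrm{REV}$, and adding $\mathrm{REV}$ back the revenue cancels and $\bv(\ALG(\bv)) \ge \bv(\OPT(\bv))/\alpha$ for every $\bv$ and every order, with $\alpha = O(\log\log m)$. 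I expect this telescoping — getting the aggregate loss down to a single copy of the revenue rather than a factor-$n$ blowup — to be the one genuinely non-automatic step; the naive choice $T_i = A_i$ gives correct per-buyer bounds whose sum is useless.

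\emph{Bayesian case.} Here the prices must be committed before $\bv$ is seen, so instead I would draw a reference profile $\bv^* \sim \D$, set $(U_1^*,\dots,U_n^*) = \OPT(\bv^*)$, and define $p_j$ from Lemma~\ref{lem:2} applied to $(v_i^*, U_i^*)$; it then suffices to show the expectation over $\bv^*$ and the real $\bv$ of the resulting welfare is $\Omega(1/\log\log m)$ times $\Ex{\bv}{\bv(\OPT(\bv))}$, after which one fixes a good $\bv^*$ to derandomize. The argument mirrors the complete-information one, with the standard coupling: when buyer $i$ arrives in the real run one compares her to the bundle she would draw from $\lambda^{(i)}$ built from a fresh independent copy of the other buyers' valuations, uses independence across buyers to make that ghost bundle (and the already-sold set it meets) independent of $v_i$, and uses that the fresh copy has the same law as $\bv_{-i}$ to identify $\Ex{}{v_i(U_i)}$ with $\Ex{\bv}{v_i(\OPT_i(\bv))}$; the revenue and lost-revenue terms are then controlled in expectation exactly as above, at the cost of one further constant factor. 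The bookkeeping of this coupling against a reference sample is the only technically fussy part of the whole derivation — the real mathematical content of the theorem sits in Lemma~\ref{lem:2} (the zero-sum-game / telescoping-in-$q$ argument), which I am taking as given.
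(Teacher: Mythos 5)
Your complete-information argument is exactly the paper's: the same choice of prices from Lemma~\ref{lem:2} applied to each $(v_i,U_i)$, the same decomposition into utilities plus revenue, the same choice $T_i = A_i\cap U_i$ so that the loss term telescopes to a single copy of the revenue. That part is correct and not just in spirit --- it is the paper's own proof.

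The Bayesian step, however, has a genuine gap. You propose to draw a reference profile $\bv^*$, compute prices $p_j$ from Lemma~\ref{lem:2} applied to $(v_i^*, U_i^*)$, and then reason via the standard ghost-sample coupling. But Lemma~\ref{lem:2} does not give \emph{balanced prices} in the sense of \cite{DuettingFKL17}: it produces, for a fixed $(v_i, U)$, a \emph{pair} $(p_j,\lambda)$ that is jointly tailored to $v_i$, and the inequality in the lemma holds only for that matched pair. Once you commit to prices built from $v_i^*$, you no longer have a companion $\lambda$ for the \emph{actual} $v_i$ that satisfies Lemma~\ref{lem:2} against those prices; conversely, if you take $\lambda$ from $(v_i,\tilde{\bv}_{-i})$ with a fresh ghost sample, that $\lambda$ is certified by Lemma~\ref{lem:2} only against \emph{its own} prices, not the posted ones. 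And if you instead keep $\lambda$ from $\bv^*$ so it matches the prices, Lemma~\ref{lem:2} controls $\sum_S\lambda_S v_i^*(S\setminus T)$, not $\sum_S\lambda_S v_i(S\setminus T)$; you cannot swap $v_i$ and $v_i^*$ inside the expectation because $\lambda$, $T$, and $p$ all depend on $v_i^*$. This is precisely why the paper does \emph{not} upgrade Lemma~\ref{lem:2} via a reference sample: it states and proves a separate Bayesian version (Lemma~\ref{lem:key-lemma-bayesian}) asserting the existence of \emph{global} prices $p_j$ together with \emph{per-profile} distributions $\lambda^{i,\bv}$ that satisfy the inequality in expectation over $\bv$. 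That lemma is proved from scratch by the same LP/duality route as Lemma~\ref{lem:2}, with an extra step lower-bounding the Bayesian zero-sum game value by the expectation over $\bv$ of the per-profile game values; only then does the hallucination coupling (now using $\lambda^{i,(v_i,\bv'_{-i})}$ against the fixed global prices) go through. So the piece you ``take as given'' is not enough: the incomplete-information case needs the Bayesian key lemma, which is a real additional ingredient and not a corollary of the complete-information one.
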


It suffices to show Theorem~\ref{thm:upper-bound} for $m > 2$ as for $m = O(1)$ the competitive ratio is constant. We will prove the theorem in two steps. In Section~\ref{sec:complete-info}, we show the claim for complete information. That is, we assume valuations are fixed and known. In Section~\ref{sec:incomplete-info}, we prove it for the Bayesian case with incomplete information.  

\subsection{Proof for Complete Information}
\label{sec:complete-info}

Our key lemma and driver of the improved competitive ratio is the following lemma. We prove this lemma using LP-duality, and derive the existence of appropriate prices and the corresponding probability distribution over sets of items through a zero-sum game formulation.  Lemma~\ref{lem:key-lemma} is a restatement of Lemma~\ref{lem:2} from the introduction.  Recalling the discussion after Lemma~\ref{lem:2}, the intuition is that the revenue raised by selling items that would typically be allocated to buyer $i$ (set $T$), plus the utility that buyer $i$ can obtain from the remaining items (by buying set $S \backslash T$), approximates buyer $i$'s contribution to the expected optimal welfare (i.e., $v_i(U)$).

\begin{lemma}\label{lem:key-lemma}
For every $i \in N$, subadditive function $v_i$, and set $U \subseteq M$ there exist prices $p_j$ for $j \in U$ and a probability distribution $\lambda$ over $S \subseteq U$ such that for all $T \subseteq U$
\[
	\sum_{j \in T} p_j + \sum_{S \subseteq U} \lambda_S \left(v_i(S \setminus T) - \sum_{j \in S} p_j\right) \geq \frac{v_i(U)}{\alpha}, 
\]
where $\alpha \in O(\log \log m)$.
\end{lemma}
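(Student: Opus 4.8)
I would follow the two-stage strategy outlined after the statement of Lemma~\ref{lem:2}: first turn the claim into a lower bound on the value of the zero-sum game~(\ref{eq:zerosum}) via LP duality, and then establish that bound using ``equal-marginal'' distributions together with subadditivity and a telescoping argument over doubly-exponential scales. For the duality reduction, fix a distribution $\lambda$ over subsets $S\subseteq U$ with item marginals $x_j=\sum_{S\ni j}\lambda_S$. Expanding the bracket, the left-hand side of the lemma equals $\sum_{S}\lambda_S v_i(S\setminus T)+\sum_{j\in T}p_j-\sum_j x_jp_j$, and I read the requirement that this be at least $v_i(U)/\alpha$ for all $T$ as a finite LP: maximize $t$ over $(t,(p_j)_{j\in U})$, $p\ge 0$, subject to $t\le \sum_S\lambda_S v_i(S\setminus T)+\sum_{j\in T}p_j-\sum_j x_jp_j$ for every $T\subseteq U$. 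This LP is feasible ($p=0$) and, via $T=\emptyset$ and monotonicity, bounded ($t\le v_i(U)$), so strong LP duality applies; since a linear objective over the simplex of distributions $\mu$ on subsets is minimized at a point mass, the dual is precisely $\min_\mu\sum_{S,T}\lambda_S\mu_T v_i(S\setminus T)$ subject to $y_j:=\sum_{T\ni j}\mu_T\le x_j$ for all $j$. Hence it suffices to exhibit one $\lambda$ with
\[
\sum_{S,T}\lambda_S\,\mu_T\,v_i(S\setminus T)\ \ge\ \frac{v_i(U)}{\alpha}\qquad\text{whenever } y_j\le x_j \text{ for all } j,
\]
which is inequality~(\ref{eq:zerosum}).

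\emph{The $f(q)-f(q^2)$ bound.} For $q\in[0,1]$ let $f(q)$ be the maximum of $\mathbb{E}_{S\sim\lambda}[v_i(S)]$ over distributions $\lambda$ on subsets of $U$ with all marginals at most $q$; this maximum is attained by some $\lambda^{(q)}$ with marginals $\le q$. Playing $\lambda^{(q)}$, and using subadditivity in the form $v_i(S\setminus T)\ge v_i(S)-v_i(S\cap T)$, I get for any admissible $\mu$
\[
\sum_{S,T}\lambda^{(q)}_S\mu_T\,v_i(S\setminus T)\ \ge\ f(q)\ -\ \mathbb{E}_{S\sim\lambda^{(q)},\,T\sim\mu}\bigl[v_i(S\cap T)\bigr].
\]
The crucial observation is that $S$ and $T$ are drawn independently with item marginals at most $q$, so $S\cap T$ has all item marginals at most $q^2$; therefore $\mathbb{E}[v_i(S\cap T)]\le f(q^2)$ and the game value is at least $f(q)-f(q^2)$.

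\emph{Telescoping and conclusion.} I would apply the previous step with $q=q_k:=2^{-2^k}$ for $k=0,\dots,K-1$, choosing $K=O(\log\log m)$ just large enough that $q_K\le 1/m^2$ (i.e.\ $2^K\ge 2\log_2 m$). Since $q_{k+1}=q_k^2$, the bounds telescope, so for the scale $q_{k^*}$ attaining $\max_k\bigl(f(q_k)-f(q_{k+1})\bigr)$ the game value is at least
\[
\frac1K\sum_{k=0}^{K-1}\bigl(f(q_k)-f(q_{k+1})\bigr)\ =\ \frac1K\bigl(f(\tfrac12)-f(q_K)\bigr)\ \ge\ \frac1K\bigl(f(\tfrac12)-f(\tfrac1{m^2})\bigr),
\]
using monotonicity of $f$. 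The endpoints are easy: $f(\tfrac12)\ge\tfrac12 v_i(U)$ via the distribution returning $U$ with probability $\tfrac12$ and $\emptyset$ otherwise, while $v_i(S)\le\sum_{j\in S}v_i(\{j\})$ (subadditivity) and monotonicity give $f(\tfrac1{m^2})\le\tfrac1{m^2}\sum_{j\in U}v_i(\{j\})\le\tfrac1m v_i(U)$. Hence the game value for $\lambda^{(q_{k^*})}$ is at least $\tfrac1K(\tfrac12-\tfrac1m)v_i(U)\ge v_i(U)/\alpha$ with $\alpha=O(K)=O(\log\log m)$, and the duality reduction turns this into the prices $p_j$ and the distribution $\lambda$ promised by the lemma.

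\emph{Main obstacle.} The heart of the argument is the pairing of the subadditivity step with the choice of scales: recognizing that the adversary's ``stealing'' must be charged at the squared marginal $q^2$ rather than $q$, and that squaring only $O(\log\log m)$ times already shrinks the residual $f$-term to a negligible $O(v_i(U)/m)$. Everything else --- writing down the LP, verifying feasibility and boundedness for strong duality, and the two endpoint estimates for $f$ --- is routine once the zero-sum game formulation is in place.
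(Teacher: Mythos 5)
Your proposal is correct and follows essentially the same route as the paper: the same LP/duality reduction to the zero-sum game in~(\ref{eq:zerosum}), the same $f(q)-f(q^2)$ bound via subadditivity and the independence/squared-marginals observation, and the same telescoping over the doubly-exponential scales $q_k=2^{-2^k}$ with the same endpoint estimates $f(1/2)\ge v_i(U)/2$ and $f(1/m^2)\le v_i(U)/m$. The only cosmetic differences are that you use a single free slack variable $t$ where the paper uses $\ell_+-\ell_-$, and your bound on $f(1/m^2)$ goes through $v_i(S)\le\sum_{j\in S}v_i(\{j\})$ rather than the cruder $v_i(S)\le v_i(U)$; both are minor and the two arguments are the same in substance.
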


Before we prove Lemma~\ref{lem:key-lemma}, let's see how it implies the desired result.

\begin{proof}[Proof of Theorem~\ref{thm:upper-bound} (complete information)]
Let $\OPT(\bv) = (U_1, \dots, U_n)$ be the welfare-maximizing allocation for valuations $\bv$. 
Define a vector $\bp$ of item prices as follows: For $i \in N$ and $j \in U_i$ use price $p_j$ from Lemma~\ref{lem:key-lemma}.  
Let $\ALG$ be the posted-price mechanism that uses prices $\bp$. Denote the allocation of $\ALG$ on valuation profile $\bv$ by $\ALG_1(\bv), \dots, \ALG_n(\bv)$ and let $\SOLD(\bv) = \cup_{i=1}^{n} \ALG_i(\bv)$ denote the set of items sold by the mechanism.

To derive a lower bound on the welfare achieved by the posted-price mechanism we will use that the welfare can be decomposed into utility and revenue. Namely, if we write $u_i((v_i,\bv_{-i}),\bp)$ for the utility of buyer $i$ and $r(\bv,\bp)$ for the revenue, then
\[
	\bv(\ALG(\bv)) = \sum_{i=1}^{n} u_i((v_i,\bv_{-i}),\bp) + r(\bv,\bp).
\]

We begin by deriving a lower bound on the sum of the utilities. To this end consider an arbitrary buyer $i$. 
Let $\lambda$ be the probability distribution over sets of items $S \subseteq U_i$ from Lemma~\ref{lem:key-lemma} and let $T = \cup_{\ell < i} \ALG(\bv)_\ell \cap U_i \subseteq \SOLD(\bv) \cap U_i$. 
Now because buyer $i$ could draw a set of items $S$ from $\lambda$ and buy set $S\setminus T$ (or no set at all if this gives her negative utility),
\[
	u_i((v_i,\bv_{-i}),\bp) \geq \sum_{S \subseteq U_i} \lambda_S \left(v_i(S \setminus T) - \sum_{j \in S} p_j\right) \geq \frac{v_i(U_i)}{\alpha} - \sum_{j \in T} p_j,
\]	
where the inequality holds by Lemma~\ref{lem:key-lemma}.

Summing over all buyers $i$, we obtain
\begin{align}
\sum_{i = 1}^{n} u_i((v_i,\bv_{-i}),\bp) \geq \frac{\bv(\OPT(\bv))}{\alpha} - \sum_{j \in \SOLD(\bv)} p_j. \label{eq:utility}
\end{align}

On the other hand, the revenue obtained by the posted-price mechanism is
\begin{align}
r(\bv,\bp) = \sum_{j \in \SOLD(\bv)} p_j. \label{eq:revenue}
\end{align}

Adding (\ref{eq:utility}) and (\ref{eq:revenue}) shows the claim.
\end{proof}

To prove Lemma~\ref{lem:key-lemma} we will write down an LP that captures the claim. The lemma statement will be satisfied whenever the optimal solution to the LP has non-negative value, and we will show that this is indeed the case using strong duality.

Consider an arbitrary buyer $i$ and an arbitrary set $U$. Let $\gamma = 1/\alpha$. In order to establish Lemma~\ref{lem:key-lemma}, we have to show that there are prices $p_j$ for $j \in U$ and a distribution $\lambda$ over sets of items $S \subseteq U$ such that for all $T \subseteq U$
\[
	\sum_{j \in T} p_j + \sum_{S \subseteq U} \lambda_S \left( v_i(S\setminus T) - \sum_{j \in S} p_j \right) \geq \gamma v_i(U).
\]
or equivalently
\begin{align}
	\sum_{S \subseteq U} \lambda_S \sum_{j \in S} p_j - \sum_{j \in T} p_j \leq \sum_{S \subseteq U} \lambda_S v_i(S\setminus T) - \gamma v_i(U). \label{sufficient-condition}
\end{align}

To show inequality \eqref{sufficient-condition}, we will consider the following LP. The LP is for a fixed $\lambda$ and has variables $p_j \geq 0$ for $j \in U$ and two more variables $\ell_+ \geq 0$ and $\ell_{-} \geq 0$. The extra variables model a slack term $\ell_{+} - \ell_{-}$ of arbitrary sign which we consider adding to the the left-hand side of the inequality. The LP maximizes the slack term.
\begin{align*}
\text{max}\quad&\ell_{+} - \ell_{-}\\
\text{s.t.}\quad&\sum_{S \subseteq U} \lambda_S \sum_{j \in S} p_j - \sum_{j \in T} p_j + (\ell_{+} - \ell_{-}) \leq \sum_{S \subseteq U} \lambda_S v_i(S\setminus T) - \gamma v_i(U) &&\text{for all $T \subseteq U$}\\ 
& p_j \geq 0 &&\text{for all $j \in U$}\\
& \ell_{+}\geq 0\\
&\ell_{-} \geq 0.
\end{align*}

As a non-negative slack means that inequality \eqref{sufficient-condition} is fulfilled, we know that there are prices fulfilling inequality \eqref{sufficient-condition} if and only if this LP has an optimal solution with non-negative objective value.

Our strategy for showing this will be to go through the dual. Indeed by strong duality it is equivalent to show that every feasible solution to the following dual LP with variables $\mu_T\geq 0$ for $T \subseteq U$ has non-negative value. 
\begin{align*}
\text{min}\quad& \sum_T \mu_T \left(\sum_{S \subseteq U} \lambda_S v_i(S\setminus T) - \gamma v_i(U) \right)\\
\text{s.t.}\quad& - \sum_{T: j \in T} \mu_T + \sum_T \sum_{S: j \in S} \lambda_S \mu_T \geq 0 &&\text{for all $j \in U$}\\ 
&\sum_{T} \mu_T = 1\\
&\mu_T \geq 0 &&\text{for all $T \subseteq U$.}
\end{align*}

We note that the dual constraints are equivalent to $\sum_T \mu_T  = 1$ and $\sum_{T: j \in T} \mu_T \leq \sum_{S: j \in S} \lambda_S$ for all $j \in U$. So they naturally define probability distributions over sets of items $T \subseteq U$, which can put at most the same probability mass on each item $j \in U$ as the probability distribution $\lambda$.

This means that the LP has non-negative value if and only if for every probability distribution $\mu$ with $\sum_{T: j \in T} \mu_T \leq \sum_{S: j \in S} \lambda_S$ it holds that
\begin{align}
\sum_{S, T} \lambda_S \mu_T v_i(S\setminus T) &\geq \gamma v_i(U). \label{pre-minimax}
\end{align}

We can now formulate the search for an appropriate $\lambda$ as a zero-sum game in which pure strategies correspond to subsets of items. The maximizing player chooses $S$, the minimizing player chooses $T$, and the payoff associated with two sets $S$ and $T$ is $v_i(S\setminus T)$. We want to show that there is mixed strategy $\lambda$ for the maximizing player such that when the minimizing player is constrained to use a mixed strategy $\mu$ which puts at most the same probability mass on each item $j$ as $\lambda$, then the value of the game is at least $\gamma v_i(U)$.

To this end let $\bq = (q_1, \dots, q_{|U|})$ where $q_j \in [0,1]$ and let $\Delta(\bq)$ denote all probability distributions $\nu$ over sets $S \subseteq U$ such that $\sum_{T \ni j} \nu_T \leq q_j$ for all $j$. Define
\begin{align}
g(\bq) = \max_{\lambda \in \Delta(\bq)} \min_{\mu \in \Delta(\bq)} \sum_{S,T \subseteq U} \lambda_S \mu_T v_i(S\setminus T).
\end{align}

Inequality \eqref{pre-minimax} and hence inequality \eqref{sufficient-condition} and Lemma~\ref{lem:key-lemma} are therefore equivalent to there being a $\bq$ such that $g(\bq) \geq v_i(U)/O(\log \log m)$.

The following lemma shows that it is in fact possible to achieve this with a uniform vector $\bq$ in which $q_i = q_j$ for all $i$ and $j$.

\begin{lemma}\label{lem:loglog-complete}
There exists a $q \in [0,1]$ such that for $\bq = (q,\dots,q) \in [0,1]^{|U|}$ we have 
\[
	g(\bq) \geq \frac{1}{O(\log \log m)} v_i(U).
\]
\end{lemma}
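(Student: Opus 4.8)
The plan is to lower-bound $g(\bq)$ for uniform $\bq = (q,\dots,q)$ by exhibiting a good mixed strategy $\lambda$ for the maximizing player and bounding the value achieved against an arbitrary constrained response $\mu$. Following the intuition sketched in the introduction, define $f(q)$ to be the optimal expected welfare $\max_{\nu \in \Delta((q,\dots,q))} \sum_{S} \nu_S v_i(S)$, i.e.\ the best value of a distribution over subsets of $U$ placing marginal at most $q$ on each item. The first step is to prove the key inequality
\[
	g\big((q,\dots,q)\big) \;\geq\; f(q) - f(q^2).
\]
To see this, fix $\lambda$ to be the optimal distribution for $f(q)$. For any $\mu \in \Delta((q,\dots,q))$, draw $S \sim \lambda$ and $T \sim \mu$ independently; then $S \cap T$ has marginal at most $q^2$ on each item, so the distribution of $S \cap T$ lies in $\Delta((q^2,\dots,q^2))$ and hence $\Ex{}{v_i(S\cap T)} \leq f(q^2)$. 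By subadditivity $v_i(S) \leq v_i(S\setminus T) + v_i(S\cap T)$, so
\[
	\Ex{}{v_i(S\setminus T)} \;\geq\; \Ex{}{v_i(S)} - \Ex{}{v_i(S\cap T)} \;\geq\; f(q) - f(q^2),
\]
and taking the min over $\mu$ and then noting this $\lambda$ is one admissible choice gives the claim.

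The second step is to turn this into an $O(\log\log m)$ bound by a telescoping/averaging argument over geometrically-doubling exponents. Set $q_k = 2^{-2^k}$ for $k = 0, 1, \dots, K$ where $K = \lceil \log_2 \log_2 m^2 \rceil = O(\log\log m)$, so that $q_0 = 1/2$ and $q_K \leq 1/m^2$. For each $k$, step one gives $g((q_k,\dots,q_k)) \geq f(q_k) - f(q_k^2) = f(q_k) - f(q_{k+1})$. Since the best uniform choice of $q$ is at least the average,
\[
	\max_{0 \le k \le K} g\big((q_k,\dots,q_k)\big) \;\geq\; \frac{1}{K+1}\sum_{k=0}^{K}\big(f(q_k) - f(q_{k+1})\big) \;=\; \frac{f(1/2) - f(q_{K+1})}{K+1},
\]
the sum telescoping. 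It remains to show $f(1/2) \geq (1/2 - o(1)) v_i(U)$ and $f(q_{K+1})$ is negligible. For the former, the distribution that places each item of $U$ independently with probability $1/2$ is in $\Delta((1/2,\dots,1/2))$, and subadditivity (applied to $S$ and $U\setminus S$, taking expectations) gives $\Ex{}{v_i(S)} \geq \frac{1}{2} v_i(U)$; one can also sharpen the constant but $1/2$ suffices. For the latter, since $q_{K+1} = q_K^2 \leq 1/m^4 \leq 1/m$ and any distribution in $\Delta((q,\dots,q))$ allocates an expected total mass of at most $qm \leq 1$ items, one shows $f(q_{K+1}) \leq q_{K+1} m \cdot \max_j v_i(\{j\}) \leq v_i(U)/m^3$, which is $o(v_i(U))$; more carefully, $f(q) \le q \sum_{j \in U} v_i(\{j\}) \le q m \, v_i(U)$ by subadditivity on singletons. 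Combining, $\max_k g((q_k,\dots,q_k)) \geq \frac{(1/2 - o(1))v_i(U)}{O(\log\log m)} = \frac{v_i(U)}{O(\log\log m)}$, as desired, and we take $q = q_{k^*}$ for the maximizing $k^*$.

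The step I expect to be the main obstacle is the first one, specifically the claim that the distribution of $S \cap T$ lies in $\Delta((q^2,\dots,q^2))$. This requires independence of $S$ and $T$, which is legitimate because $\lambda$ is fixed by the protagonist before $\mu$ is chosen and the game value is evaluated on product distributions $\lambda \otimes \mu$; and it requires the marginal bound $\Pr{j \in S \cap T} = \Pr{j \in S}\Pr{j \in T} \leq q \cdot q$, which uses both constraints $\lambda \in \Delta((q,\dots,q))$ and $\mu \in \Delta((q,\dots,q))$. One must also be slightly careful that $\Delta(\bq)$ as defined in the paper constrains $\sum_{T \ni j}\nu_T \le q_j$, i.e.\ exactly the marginal-inclusion probability, so the argument goes through verbatim. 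A secondary nuisance is bookkeeping in the telescoping bound: one must choose $K$ so that $q_{K+1} = 2^{-2^{K+1}}$ is at most $1/m^c$ for large enough $c$ while keeping $K+1 = O(\log\log m)$, which holds since $2^{K+1} \ge 2\log_2 m$ forces $K \ge \log_2\log_2 m$, and the $-o(1)$ and the loss in $f(1/2)$ vs.\ $\frac12 v_i(U)$ are both absorbed into the $O(\cdot)$. Everything else is routine.
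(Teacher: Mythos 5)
Your proposal is correct and follows essentially the same approach as the paper's proof: the identical lower bound $g(q) \geq f(q)-f(q^2)$ via subadditivity and the $S\cap T$ marginal argument, then the same telescoping/averaging over $q = 2^{-2^k}$ with the endpoints bounded by $f(1/2) \geq v_i(U)/2$ and $f(\text{tiny}) \leq v_i(U)/m$. The only cosmetic differences are your use of an independent-coins distribution to lower-bound $f(1/2)$ (the paper just plays $U$ with probability $1/2$) and your use of singleton subadditivity rather than monotonicity to upper-bound $f$ at the small endpoint; both are fine.
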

\begin{proof}
For $q \in [0,1]$ define
\begin{align*}
	g(q) &= \max_{\lambda \in \Delta(q,\dots,q)} \min_{\mu \in \Delta(q,\dots,q)} \sum_{S,T \subseteq U} \lambda_S \mu_T v_i(S\setminus T), \quad\text{and}\\
	f(q) &= \max_{\lambda \in \Delta(q,\dots,q)} \sum_{S \subseteq U} \lambda_S v_i(S).
\end{align*}

We now use subadditivity, that $\sum_T \mu_T = 1$, and finally that $\mu_T\lambda_S$ defines a probability distribution on $S \cap T$ that puts at most probability mass $q^2$ on each item to obtain
\begin{align*}
	g(q) 
	&= \max_{\lambda \in \Delta(q,\dots,q)} \min_{\mu \in \Delta(q,\dots,q)} \sum_{S,T \subseteq U} \lambda_S \mu_T v_i(S\setminus T)\\
	&\geq \max_{\lambda \in \Delta(q,\dots,q)} \min_{\mu \in \Delta(q,\dots,q)} \sum_{S,T \subseteq U} \lambda_S \mu_T \Big(v_i(S) - v_i(S \cap T)\Big)\\
	&= \max_{\lambda \in \Delta(q,\dots,q)} \Bigg( \sum_{S \subseteq U} \lambda_S v_i(S) - \max_{\mu \in \Delta(q,\dots,q)} \sum_{S,T \subseteq U} \lambda_S \mu_T v_i(S \cap T)\Bigg)\\
	&\geq \max_{\lambda \in \Delta(q,\dots,q)} \Bigg( \sum_{S \subseteq U} \lambda_S v_i(S) - \max_{\nu \in \Delta(q^2,\dots,q^2)} \sum_{S \subseteq U} \nu_S v_i(S)\Bigg)\\
	&= f(q) - f(q^2),
\end{align*}

For any $\ell$ we thus have,
\begin{align*}
\sum_{i=0}^{\ell} g(2^{-2^i}) \geq \sum_{i=0}^{\ell} \Bigg( f\Big(2^{-2^i}\Big) - f\Big(2^{-2^{i+1}}\Big)\Bigg) = f(2^{-1}) - f(2^{-2^{\ell+1}}),
\end{align*}
by a telescoping sum argument.

With $\ell = \log \log m$,
\begin{align*}
	2^{-2^{\ell+1}} = 2^{-2 \log m} = \frac{1}{m^2}.
\end{align*}

Hence for $\ell = \log \log m$,
\begin{align*}
\sum_{i=0}^{\ell} g(2^{-2^i}) \geq f\Big(\frac{1}{2}\Big) - f\Big(\frac{1}{m^2}\Big).
\end{align*}

We conclude the proof by showing a lower bound on $f(1/2)$ and an upper bound on $f(1/m^2)$. A lower bound on $f(1/2)$ follows from the fact that $\lambda$ could take the set $U_i$ with probability $1/2$. So
\begin{align*}
	f\left(\frac{1}{2}\right) &\geq \frac{1}{2} \cdot v_i(U).
\end{align*}
For the upper bound on $f(1/m^2)$ we exploit the trivial upper bound on $v_i(S)$ namely $v_i(U)$. Using this we obtain,
\begin{align*}
	f\left(\frac{1}{m^2}\right) 
	&= \max_{\lambda \in \Delta(1/m^2,\dots,1/m^2)} \sum_{S \subseteq U} \lambda_S v_i(S)\\
	&\leq \max_{\lambda \in \Delta(1/m^2,\dots,1/m^2)} \left(\sum_{S \subseteq U, S \neq \emptyset} \lambda_S \right)v_i(U)\\
	&\leq \sum_{j \in U} \left(\sum_{S: S \ni j} \lambda_S \right) v_i(U)\\
	& \leq m \cdot \frac{1}{m^2} \cdot v_i(U) \\
	&= \frac{1}{m} \cdot v_i(U).
\end{align*}

We conclude that
\begin{align*}
\max_q g(q) \geq \frac{1}{\ell+1} \sum_{i=0}^{\ell} g(2^{-2^i}) \geq \frac{1}{\ell+1} \left(\frac{1}{2} - \frac{1}{m}\right) v_i(U),
\end{align*}
as claimed.
\end{proof}

\subsection{Proof for Incomplete Information}
\label{sec:incomplete-info}

We next show how to extend our arguments to the incomplete information case, where the valuations are not fixed and known but rather are drawn from known distributions. Our proof is based on the following variant of Lemma~\ref{lem:key-lemma}.
The proof of this lemma in Appendix~\ref{app:proof-key-lemma-bayesian} follows the same basic steps as the proof of Lemma~\ref{lem:key-lemma}, but requires some additional care when deriving a lower bound on the value of the zero-sum game.  In particular, since the zero-sum game now has payoffs that are defined with respect to the distribution over valuations, our argument requires that we relate the value of the game to the expected value of the distribution over full-information games.

\begin{lemma}\label{lem:key-lemma-bayesian}
For every probability distribution $\D = \prod_i \D_i$ over subadditive valuation functions, there exist prices $p_j$ for $j \in M$ and probability distributions $\lambda^{i,\bv}$ over $S \subseteq M$ for all $i$ and $\bv$ such that for all $T \subseteq M$
\[
	\sum_{j \in T} p_j + \Ex{\bv}{\sum_{i=1}^{n} \sum_{S} \lambda^{i,\bv}_S \left(v_i(S \setminus T) - \sum_{j \in S} p_j\right)} \geq \frac{1}{\alpha} \Ex{\bv}{\bv(\OPT(\bv))}, 
\]
where $\alpha \in O(\log \log m)$.
\end{lemma}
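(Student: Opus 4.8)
The plan is to run the proof of Lemma~\ref{lem:key-lemma} ``inside an expectation'', the one genuinely new ingredient being a way to bound the adversary's gain in the Bayesian zero-sum game by an expectation of full-information quantities.

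First, fix for every profile $\bv$ the welfare-maximizing allocation $\OPT(\bv) = (U_1(\bv),\dots,U_n(\bv))$. For a not-yet-chosen family $\{\lambda^{i,\bv}\}$ of distributions with $\lambda^{i,\bv}$ supported on subsets of $U_i(\bv)$, I would write down the direct analogue of the slack-LP from the proof of Lemma~\ref{lem:key-lemma}: variables $p_j \ge 0$ for $j \in M$ and a signed slack $\ell_+ - \ell_-$, one constraint per $T \subseteq M$, in which the coefficient of $p_j$ in the constraint for $T$ is $\Lambda_j - \mathbf{1}[j \in T]$, where $\Lambda_j := \Ex{\bv}{\sum_i \sum_{S \ni j} \lambda^{i,\bv}_S}$ is the expected total marginal the family puts on item $j$. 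Passing to the LP dual, the constraints become $\sum_T \mu_T = 1$ and $\sum_{T \ni j}\mu_T \le \Lambda_j$, so by strong LP duality prices $p_j$ satisfying the conclusion of the lemma (for this family) exist if and only if, for every distribution $\mu$ on $2^M$ with item-marginals $\mu_j := \sum_{T \ni j}\mu_T$ at most $\Lambda_j$,
\[
\Ex{\bv}{\sum_{i=1}^{n}\sum_{S}\lambda^{i,\bv}_S \sum_T \mu_T\, v_i(S\setminus T)} \ge \frac{1}{\alpha}\,\Ex{\bv}{\bv(\OPT(\bv))}.
\]

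Now I would instantiate the family. Fix $q \in \{2^{-2^k} : 0 \le k \le \ell\}$ with $\ell = \log\log m$, and for each $(i,\bv)$ take $\lambda^{i,\bv}$ to be a distribution attaining $f^{i,\bv}(q)$, i.e.\ the quantity ``$f(q)$'' from the proof of Lemma~\ref{lem:loglog-complete} for the valuation $v_i$ and the set $U := U_i(\bv)$. Since the sets $U_i(\bv)$ are disjoint, each item lies in at most one of them, so $\Lambda_j \le q$; it thus suffices to verify the displayed inequality for all $\mu$ with $\mu_j \le q$. For such $\mu$, subadditivity ($v_i(S\setminus T) \ge v_i(S) - v_i(S\cap T)$) together with $\sum_T \mu_T = 1$ bounds the left-hand side below by $\Ex{\bv}{\sum_i \sum_S \lambda^{i,\bv}_S v_i(S)} - \Ex{\bv}{\sum_i \sum_{S,T}\lambda^{i,\bv}_S \mu_T\, v_i(S\cap T)}$, whose first term equals $\Ex{\bv}{\sum_i f^{i,\bv}(q)}$. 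The key point is the second term: for each fixed $(i,\bv)$ the law of $S\cap T$, with $S \sim \lambda^{i,\bv}$ and $T \sim \mu$ drawn independently, is supported on subsets of $U_i(\bv)$ and places marginal at most $q \cdot q = q^2$ on every item, so $\sum_{S,T}\lambda^{i,\bv}_S\mu_T v_i(S\cap T) \le f^{i,\bv}(q^2)$; summing, the second term is at most $\Ex{\bv}{\sum_i f^{i,\bv}(q^2)}$. This is the promised step relating the Bayesian game to an expectation of full-information games: it works because, although $\mu$ is a single distribution shared across all $(i,\bv)$, the objective is linear and splits additively over $(i,\bv)$. Hence the left-hand side is at least $\Ex{\bv}{\sum_i\bigl(f^{i,\bv}(q) - f^{i,\bv}(q^2)\bigr)}$.

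To finish, I would average this bound over $q \in \{2^{-2^k} : 0 \le k \le \ell\}$. For each $(i,\bv)$ the telescoping identity $\sum_{k=0}^{\ell}\bigl(f^{i,\bv}(2^{-2^k}) - f^{i,\bv}(2^{-2^{k+1}})\bigr) = f^{i,\bv}(\tfrac12) - f^{i,\bv}(\tfrac1{m^2})$ holds, and the inequalities $f^{i,\bv}(\tfrac12) \ge \tfrac12 v_i(U_i(\bv))$ and $f^{i,\bv}(\tfrac1{m^2}) \le \tfrac1m v_i(U_i(\bv))$ from the proof of Lemma~\ref{lem:loglog-complete} give the lower bound $(\tfrac12-\tfrac1m)v_i(U_i(\bv))$. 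Summing over $i$, taking $\Ex{\bv}{\cdot}$, and dividing by $\ell+1$ shows that at least one $q$ in the set makes the displayed inequality hold with $\alpha = (\ell+1)/(\tfrac12 - \tfrac1m) = O(\log\log m)$ (using $m > 2$); fixing that $q$ and invoking strong duality yields the required prices $p_j$. The main obstacle is exactly the per-$(i,\bv)$ decomposition bounding the adversary's gain by $\Ex{\bv}{\sum_i f^{i,\bv}(q^2)}$, together with choosing $\lambda^{i,\bv}$ separately per profile so that the aggregate marginals $\Lambda_j$ stay below $q$ (which relies on $\OPT$ allocating each item at most once); everything else is the complete-information argument carried through under an expectation.
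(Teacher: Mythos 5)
Your proposal is correct and follows the same overall path as the paper: LP duality reduces the claim to a condition on adversarial distributions $\mu$ with item-marginals below those of $\lambda$, you then instantiate $\lambda^{i,\bv}$ as the per-$(i,\bv)$ optimizer from the complete-information proof, use subadditivity and the $q\cdot q = q^2$ intersection bound to control the adversary's ``steal,'' and close with the telescoping sum over $q = 2^{-2^k}$. The only cosmetic difference is that the paper routes through auxiliary per-profile zero-sum games $g^{\bv}(q)$ and shows $g(q) \geq \Ex{\bv}{g^{\bv}(q)}$ by swapping the infimum with the expectation, whereas you fold that step into a direct bound on the dual LP by observing the steal term decomposes additively over $(i,\bv)$ for a single $\mu$; the two are logically interchangeable.
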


Using Lemma~\ref{lem:key-lemma-bayesian} it is straightforward to prove Theorem~\ref{thm:upper-bound} using a hallucination trick similar to that in the price of anarchy literature \cite{SyrgkanisT13,Roughgarden15}, the literature on algorithmic stability \cite{HardtRS16}, and in the balanced prices framework for prophet inequalities \cite{DuettingFKL17}. We provide a formal proof in Appendix~\ref{app:proof-thm-upper-bound} for completeness.


\section{Computing Prices in Polynomial Time}
\label{sec:polytime}

Theorem~\ref{thm:upper-bound} shows the existence of static anonymous item prices that yield an $O(\log \log m)$ approximation to the optimal welfare. In this section, we establish the following computational version of this result. It shows how to achieve the same approximation guarantee in polynomial time. The proof also reveals an alternative for choosing prices that yield an $O(\log \log m)$ approximation. We assume to have access to demand oracles for the valuation functions in the support of $\mathcal{D}$. Recall that a demand oracle for valuation function $v_i$ takes as input item prices $p_1, \ldots, p_m$ and returns the set $S \subseteq M$ that maximizes $v_i(S) - \sum_{j \in S} p_j$.

\begin{theorem}\label{thm:upper-bound-polytime}
For subadditive combinatorial auctions and any $\epsilon > 0$, there is a polynomial-time (in $n$, $m$, and $1/\epsilon$) algorithm to compute static and anonymous item prices for which the resulting posted-price mechanism achieves expected welfare at least $\frac{1}{\alpha} \cdot \Ex{\bv}{\bv(\OPT(\bv)} - \epsilon$ where $\alpha = O(\log \log m)$. 
\end{theorem}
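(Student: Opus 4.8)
The goal is to make the existential argument of Theorem~\ref{thm:upper-bound} constructive. The existential proof has two nonconstructive ingredients: (i) it uses the welfare-maximizing allocation $\OPT(\bv)$ (or, in the Bayesian case, an interim version of it via the hallucination trick), which is NP-hard to compute exactly; and (ii) for each part $U_i$ it invokes prices $p_j$ that come out of the equilibrium of the zero-sum game $g(\bq)$, which a priori requires solving an exponentially large minimax problem. The plan is to replace both by objects that can be produced in polynomial time via the Ellipsoid method and demand queries, at the cost of an additive $\epsilon$. Concretely, I would (1) replace $\OPT$ by a fractional relaxation — the configuration LP for combinatorial allocation — whose value is within a constant of $\OPT$ and which, crucially, can be solved to within $\epsilon$ in polynomial time because a demand query implements the separation oracle for its dual; (2) observe, following the ``alternative choice of prices'' hinted at in the introduction, that one does not actually need the game-optimal $\lambda$: it suffices to take, for a uniformly random $q$ among the $O(\log\log m)$ values $q = 2^{-2^k}$, the dual prices of the configuration LP defining $f(q^2)$, scaled by $q$. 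The point of the existential proof's telescoping bound $g(q) \ge f(q) - f(q^2)$ is exactly that $q$-scaled dual prices for $f(q^2)$ certify the loss from the antagonist, so this substitution preserves the $O(\log\log m)$ guarantee.

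**Key steps, in order.** First, set up the configuration LP: variables $x_{i,S} \ge 0$ for each buyer $i$ and set $S \subseteq M$, with $\sum_S x_{i,S} \le 1$ for each $i$ and $\sum_i \sum_{S \ni j} x_{i,S} \le 1$ for each item $j$, maximizing $\sum_{i,S} x_{i,S} v_i(S)$; and more generally the $q$-scaled version where the per-item constraint has right-hand side $q$ (this is the LP whose optimum is $\sum_i f_i(q)$ for the $f$ of Lemma~\ref{lem:loglog-complete}, applied per buyer on $U = M$). Second, recall that this LP has exponentially many variables but its dual has polynomially many variables (one price $p_j$ per item, one $\beta_i$ per buyer) and exponentially many constraints of the form $\beta_i \ge v_i(S) - \sum_{j\in S} p_j$; a demand query on $v_i$ at prices $\bp$ returns the most-violated such constraint, so by Ellipsoid we can solve both primal and dual to additive accuracy $\epsilon'$ in time polynomial in $n, m, 1/\epsilon'$ and the bit-complexity of the returned sets. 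Third, handle the Bayesian aspect: apply this to the distributional/interim configuration LP — either by the same hallucination argument as in Section~\ref{sec:incomplete-info} applied to the LP optimum instead of to $\OPT(\bv)$, or by noting the LP value over the distribution is itself a configuration-type LP with demand-query separation. Fourth, define the algorithm: draw $k$ uniformly from $\{0, 1, \dots, \ell\}$ with $\ell = \lceil \log\log m\rceil$, set $q = 2^{-2^k}$, solve the dual of the $q^2$-scaled configuration LP to get item prices $\tilde p_j$, and post prices $p_j = q \cdot \tilde p_j$. Fifth, redo the welfare decomposition of the proof of Theorem~\ref{thm:upper-bound}: bound $\sum_i u_i(\bv,\bp) + r(\bv,\bp)$ from below, using that each buyer buys a utility-maximizing set, that the $q$-scaled dual prices upper-bound welfare loss from any item subset of marginal mass $\le q^2$, and that in expectation over $k$ the telescoping sum $f(1/2) - f(1/m^2)$ reappears with an extra $1/(\ell+1)$ factor — exactly mirroring the last display of Lemma~\ref{lem:loglog-complete}. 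The additive $\epsilon$ absorbs the LP solution error (choose $\epsilon' = \epsilon/\mathrm{poly}$) and, if needed, the gap between $\OPT$ and the integral configuration-LP rounding; since we only need $\Ex{}{\bv(\ALG(\bv))} \ge \frac1\alpha\Ex{}{\bv(\OPT(\bv))} - \epsilon$ and the configuration LP is an upper bound on $\OPT$, no rounding loss is even incurred on the benchmark side.

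**Main obstacle.** The delicate point is step five: verifying that the $q$-scaled dual prices of the configuration LP play the same role as the game-optimal prices did in Lemma~\ref{lem:key-lemma}. In the existential proof the prices were defined implicitly so that $\sum_{j\in T}p_j$ plus buyer $i$'s utility from $\lambda$ exceeded $v_i(U_i)/\alpha$ for \emph{every} $T$; here we must re-derive that inequality from complementary slackness / weak duality of the configuration LP, showing that for the random $q$ the prices $q\tilde p_j$ satisfy, in expectation over $k$, the analogue with $1/\alpha = \Theta(1/\log\log m)$. The crux is re-proving the inequality $g(q)\ge f(q)-f(q^2)$ at the level of LP duals rather than game values: one needs that the dual of the $q^2$-configuration LP furnishes prices whose total on any marginally-$q^2$-bounded set bounds $f(q^2)$ from above, while scaling by $q$ converts this into the right bound against the antagonist's marginally-$q$ set $T$ intersected with a marginally-$q$ purchase $S$ (whose intersection is marginally $q^2$). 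Getting this bookkeeping to line up — in particular the factor-of-$q$ scaling interacting correctly with the per-item constraints of the two LPs — is where the real work lies; the Ellipsoid/demand-oracle machinery of steps one through three is by now standard. I would also need to be careful that the posted-price mechanism's tie-breaking among utility-maximizing sets does not hurt the bound, but as in the existential proof a worst-case bound over buyer behavior suffices since the lemma's guarantee holds for the specific set $S\setminus T$ the buyer could always choose.
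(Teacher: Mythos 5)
Your proposal matches the paper's proof in all essential respects: both solve the $q^2$-constrained configuration LP by the Ellipsoid method with a demand-query separation oracle, take the $q$-scaled dual item prices, and range $q$ over $2^{-2^k}$ for $k\in\{0,\dots,\lceil\log\log m\rceil\}$ to recover the telescoping bound $f(q)-f(q^2)$ at the level of LP duals. The ``main obstacle'' you correctly flag---re-deriving $g(q)\ge f(q)-f(q^2)$ from weak duality of the configuration LP rather than from game values---is exactly what the paper verifies (Claim~\ref{claim:polytime-q-qsquard} inside Lemma~\ref{lem:key-lemma-bayesian-polytime}) using subadditivity of $v_i$, the dual feasibility constraints $u_i+\sum_{j\in S}y_j\ge v_i(S)$, and the fact that $\lambda$ puts marginal mass at most $q$ on each item; and the Bayesian piece you leave vague (``distributional/interim configuration LP'') is handled in the paper by defining $p^q_j=\Ex{\tilde\bv}{q\,y^{\tilde\bv,q}_j}$ as an expectation over per-profile dual solutions and then estimating both $\Ex{\bv}{f^{\bv}(q)}$ and those price expectations by sampling, with Hoeffding bounds controlling the additive error, which is where the remaining work of the proof lies.
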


Our proof is based on the following version of Lemma~\ref{lem:key-lemma-bayesian} that includes computations in polynomial time using demand oracles.

\begin{lemma}
\label{lem:key-lemma-bayesian-polytime}
For every probability distribution $\D = \prod_i \D_i$ over subadditive valuation functions and every $\epsilon > 0$, there exist prices $p_j$ for $j \in M$ and probability distributions $\lambda^{i,\bv}$ over $S \subseteq M$ for all $i$ and $\bv$ such that for all $T \subseteq M$
\[
	\sum_{j \in T} p_j + \Ex{\bv}{\sum_{i=1}^{n} \sum_{S} \lambda^{i,\bv}_S \left(v_i(S \setminus T) - \sum_{j \in S} p_j\right)} \geq \frac{1}{\alpha} \Ex{\bv}{\bv(\OPT(\bv))} - \epsilon, 
\]
where $\alpha \in O(\log \log m)$.  Moreover, assuming that $\bv(\OPT(\bv)) \leq 1$ for all $\bv$ in the support, for any $\zeta > 0$, there is an algorithm that uses $\mathrm{poly}(n,m,1/\epsilon,\log(1/\zeta))$ demand oracle queries that computes such prices with probability at least $1 - \zeta$.
\end{lemma}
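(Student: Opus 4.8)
The plan is to prove the lemma by exhibiting the prices in a concrete, computable form---a scaled optimal dual solution of a ``marginal-budget'' configuration LP---and checking directly, by a short cancellation argument paralleling the proof of Lemma~\ref{lem:key-lemma-bayesian}, that they satisfy the required inequality; this in particular sidesteps computing either the zero-sum game equilibria or an optimal allocation. First, since $\D$ is accessible only through oracles, draw $N=\mathrm{poly}(n,m,1/\epsilon,\log(1/\zeta))$ i.i.d.\ samples from $\D$ and replace each $\D_i$ by its empirical version, so that all LPs below range over finitely many valuations. For a budget $\beta\in[0,1]$ let $F(\beta)$ be the maximum of $\sum_i\Ex{v_i}{\sum_{S\subseteq M}\lambda^{i,v_i}_S v_i(S)}$ over collections of probability distributions $\{\lambda^{i,v_i}\}$ on $2^M$ subject to $\sum_i\Ex{v_i}{\sum_{S\ni j}\lambda^{i,v_i}_S}\leq\beta$ for every $j\in M$; for $\beta=1$ and deterministic valuations this is exactly the classical configuration LP for combinatorial assignment. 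Its dual has one variable $p_j\geq0$ per item and one free variable $\rho_{i,v_i}$ per (buyer, valuation), objective $\beta\sum_j p_j+\sum_i\Ex{v_i}{\rho_{i,v_i}}$, and constraints $\rho_{i,v_i}\geq v_i(S)-\sum_{j\in S}p_j$ for all $S$, so a demand query per $(i,v_i)$ is a separation oracle for the dual. The normalization $\bv(\OPT(\bv))\leq1$ forces $v_i(S)\leq v_i(M)\leq 1$, and via weak duality $\sum_j p^*_j$ stays polynomially bounded at optimality, so the region and bit-precision fed to the Ellipsoid method are polynomial and both $F(\beta)$ and an (approximately) optimal dual solution can be computed with $\mathrm{poly}(n,m,1/\epsilon)$ demand queries.

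For the prices, compute $F(q)$ and $F(q^2)$ for the $\ell+1=O(\log\log m)$ candidates $q=2^{-2^k}$, $k=0,\dots,\ell=\lceil\log\log m\rceil$, and keep the $q$ maximizing $F(q)-F(q^2)$; let $p^*$ be an optimal dual solution of $F(q^2)$ and post $p_j:=q\,p^*_j$, with the lemma's distributions $\lambda^{i,\bv}$ being the primal-optimal solution of $F(q)$. The core claim is that for every $T\subseteq M$
\[
\sum_{j\in T}p_j+\Ex{\bv}{\sum_{i=1}^n\sum_S\lambda^{i,\bv}_S\Big(v_i(S\setminus T)-\sum_{j\in S}p_j\Big)}\;\geq\;F(q)-F(q^2),
\]
which I would prove in three moves: (i) subadditivity, $v_i(S\setminus T)\geq v_i(S)-v_i(S\cap T)$, extracts $\Ex{\bv}{\sum_i\sum_S\lambda^{i,\bv}_S v_i(S)}=F(q)$; (ii) dual feasibility, $v_i(S\cap T)\leq\rho^*_{i,v_i}+\sum_{j\in S\cap T}p^*_j$, together with strong duality $\sum_i\Ex{v_i}{\rho^*_{i,v_i}}=F(q^2)-q^2\sum_j p^*_j$ and the fact that the primal-optimal $\lambda$ of $F(q)$ places each item in $S$ with aggregate probability at most $q$, bounds the loss term $\Ex{\bv}{\sum_i\sum_S\lambda^{i,\bv}_S v_i(S\cap T)}$ by $F(q^2)-q^2\sum_j p^*_j+q\sum_{j\in T}p^*_j$; and (iii) the same marginal bound shows the posted-price terms $\Ex{\bv}{\sum_i\sum_S\lambda^{i,\bv}_S\sum_{j\in S}p_j}$ are at most $q^2\sum_j p^*_j$. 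Collecting the three contributions, the $p^*$-terms cancel in pairs and $F(q)-F(q^2)$ remains. Telescoping over $q=2^{-2^k}$ as in Lemma~\ref{lem:loglog-complete} then gives $\max_q(F(q)-F(q^2))\geq\tfrac{1}{\ell+1}\big(F(\tfrac{1}{2})-F(\tfrac{1}{m^2})\big)$; since $F(\tfrac{1}{2})\geq\tfrac{1}{2}\Ex{\bv}{\bv(\OPT(\bv))}$ by the hallucination trick (allocate to each buyer what $\OPT$ would with the others' valuations freshly resampled, kept with probability $\tfrac{1}{2}$) and $F(\tfrac{1}{m^2})\leq\tfrac{1}{m}$ (every nonempty $S$ hits some item and the total marginal is $m\cdot\tfrac{1}{m^2}$), this is $\tfrac{1}{\alpha}\Ex{\bv}{\bv(\OPT(\bv))}$ up to a lower-order term absorbed into $\epsilon$, with $\alpha=O(\log\log m)$.

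It remains to pass from the empirical $\D$ back to the true $\D$. Since all valuations lie in $[0,1]$, Chernoff/Hoeffding bounds and a union bound over the $O(\log\log m)$ LP pairs and the polynomially many empirical expectations in the analysis give that, with probability at least $1-\zeta$, every empirical quantity (the LP values, and the contribution of the computed prices and distributions to the displayed inequality) is within $\epsilon$ of its true-$\D$ analogue---using the extension $\rho^*_{i,v_i}:=\max_S(v_i(S)-\sum_{j\in S}p^*_j)\geq0$, which is dual-feasible for \emph{every} $v_i$, not only the sampled ones. The cleanest packaging is to state the existential half of the lemma over the true $\D$ (where it holds with $\epsilon=0$ by moves (i)--(iii)) and to let $\epsilon$ account solely for the fact that only the sampled version is computed; one also checks that the $O(\epsilon)$-feasibility and $O(\epsilon)$-optimality of the dual solution returned by Ellipsoid propagate through the short, exact cancellation of (i)--(iii) with only an $O(\epsilon)$ loss, which rescaling $\epsilon$ absorbs.

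The step I expect to be most delicate is exactly this bookkeeping: controlling together the three discrepancies---empirical versus true $\D$, approximate versus exact LP solutions, and selecting $q$ from empirical rather than true values of $F$---and verifying that each passes through the cancellation of moves (i)--(iii) with only $O(\epsilon)$ effect, alongside the routine but necessary check that the region and precision fed to Ellipsoid are genuinely polynomial (this is where $\bv(\OPT(\bv))\leq1$ is used). The algebraic core---the three-move cancellation and the telescoping over $q$---is short; the real work is making the polynomial-time implementation and the sampling approximation fit together, and, as in Lemma~\ref{lem:key-lemma-bayesian}, handling the passage from complete to incomplete information.
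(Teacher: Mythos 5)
Your proposal takes a genuinely different route from the paper's. The paper defines a \emph{per-profile} configuration LP $f^{\bv}(q)$ for each fixed $\bv$, solves its dual per realization to get $(y^{\bv,q}_j)$, and sets prices to be the \emph{expectation} of the per-$\bv$ dual variables, $p^q_j = \Ex{\bv}{q\,y^{\bv,q}_j}$. You instead fold the expectation into the LP, forming a single ex-ante ``marginal-budget'' configuration LP $F(\beta)$ whose variables are $\lambda^{i,v_i}_S$ (indexed by type, not profile) and whose marginal constraint $\sum_i\Ex{v_i}{\sum_{S\ni j}\lambda^{i,v_i}_S}\leq\beta$ holds only in expectation, and you take the prices from a single dual solve. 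Your three-move cancellation is then the exact analogue of the paper's Claim~\ref{claim:polytime-q-qsquard}, and it is correct: the $q\sum_{j\in T}p^*_j$ and $q^2\sum_j p^*_j$ terms cancel and leave $F(q)-F(q^2)$. Conceptually this is cleaner (one LP, one dual vector, no averaging of duals over samples) and is closer to the $\Lambda(q)$-formulation the paper already uses inside the existential Lemma~\ref{lem:key-lemma-bayesian}.

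There are, however, two places where the ex-ante route is genuinely more delicate than the per-$\bv$ route, and your sketch does not fully close them.

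First, the $F(1/m^2)$ bound. You bound $F(1/m^2)\leq 1/m$ absolutely, which is all the ex-ante LP gives you: since the constraint is only on the \emph{expected} marginal, you cannot reproduce the paper's pointwise chain $v_i(S)\leq\sum_{j\in S}v_i(\{j\})\leq\bv(\OPT(\bv))$, and so you lose the multiplicative form $f^{\bv}(1/m^2)\leq\tfrac{1}{m}\bv(\OPT(\bv))$. After telescoping you therefore get $\tfrac{1}{2(\ell+1)}\Ex{\bv}{\bv(\OPT(\bv))}-\tfrac{1}{m(\ell+1)}$, and the $\tfrac{1}{m(\ell+1)}$ term is a fixed additive loss that does not shrink as you take more samples or decrease $\epsilon$. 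If $\epsilon$ is specified to be smaller than $\tfrac{1}{m(\ell+1)}$ this falls short of the lemma as stated. The paper's per-$\bv$ formulation avoids this entirely because the bound stays multiplicative before the expectation is taken.

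Second, the concentration step. You correctly observe that extending $\rho^*_{i,v_i}:=\max_S(v_i(S)-\sum_{j\in S}p^*_j)$ gives a dual-feasible assignment for every $v_i$. But the quantity you need to control, $\sum_i\Ex{v_i\sim\D_i}{\rho^*_{i,v_i}}$, involves $\rho^*$ which depends on $p^*$ which depends on the same samples you used to form the empirical LP; Hoeffding does not apply directly to these dependent quantities. You either need fresh samples and a two-phase argument, or a uniform-convergence bound over $p\in[0,1]^m$ (e.g., via pseudodimension $O(m)$ of the class $v_i\mapsto\max_S(v_i(S)-\sum_{j\in S}p_j)$). This is fixable with polynomially many samples, but it is an extra argument that your ``Hoeffding $+$ union bound'' packaging elides, and it is exactly the kind of dependence the paper sidesteps by averaging per-$\bv$ dual coordinates $y^{\bv,q}_j\in[0,1]$, which are genuinely independent across samples.

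In short: the algebraic core of your proposal matches the paper's Claim~\ref{claim:polytime-q-qsquard}, your choice of $q$ and your use of Ellipsoid with demand-query separation are the same, and your $F(1/2)\geq\tfrac12\Ex{\bv}{\bv(\OPT(\bv))}$ bound via the hallucination trick is sound. But the single ex-ante LP trades the paper's clean multiplicative $F(1/m^2)$ bound for an absolute one, and introduces a dual-dependence concentration issue. Both are repairable, but neither is settled by what you wrote; the paper's per-$\bv$ decomposition is precisely what makes both steps routine.
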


\begin{proof}
We will proceed in two steps. First, we will assume to have the ability to compute expectations $\Ex{\bv}{\,\cdot\,}$. Then, in the second step, we simulate this ability by sampling from the distributions and bound the errors.

\paragraph{Computation of Prices.} Fix some valuation profile $\bv$.  As in the case of complete information, we will consider an $f$-function that maximizes the expected value $v_i(S)$, where $S$ is drawn from a constrained distribution over sets of items.  To this end, for each $q \in [0,1]$ let $\Gamma(q) = \{\{\nu^{i}\}_{i=1}^{n} \mid \sum_{i=1}^{n} \sum_{S: S \ni j} \nu^{i}_S \leq q \;\text{for all $j$ in $M$}\}$ be the collection of distribution profiles for which the marginal probability that each item is allocated is at most $q$.  Then we will define
%
\begin{align*}
	f^{\bv}(q) &= \sup_{\lambda\in \Gamma(q)} \sum_{i=1}^{n} \sum_{S} \lambda^{i}_S v_i(S).
\end{align*}


Fix $q \in [0,1]$ and let $\lambda^{i,\bv,q}$ be some choice of $\lambda \in \Gamma(q)$ that achieves the maximum in the definition of $f^{\bv}(q)$. 

We now consider the following linear program:
\begin{align*}
	\text{max}\quad& \sum_{i=1}^{n} \sum_{S} x^i_S v_i(S) \\
	\text{s.t.}\quad& \sum_i \sum_{S: j \in S} x^i_S \leq q^2 && \text{ for all $j \in M$} \\
	& \sum_S x^i_S \leq 1 && \text{ for all $i$} \\
	& x^i_S \geq 0 && \text{ for all $i$, $S$}
\end{align*}
The value of this program is precisely equal to $f^{\bv}(q^2)$, from the definition of $f^{\bv}$. The dual of this linear program is given by
\begin{align*}
	\text{min}\quad& \sum_{j \in M} q^2 y_j + \sum_{i=1}^{n} u_i \\
	\text{s.t.}\quad& u_i + \sum_{j \in S} y_j \geq v_i(S) && \text{ for all $i$, $S$} \\
	& u_i \geq 0 && \text{ for all $i$} \\
	& y_j \geq 0 && \text{ for all $j$}
	\end{align*}

Let $(y^{\bv,q}_j)_{j \in M}$, $(u^{\bv,q}_i)_{i \in N}$ denote an optimal solution to this dual.  We note that this optimal solution can be computed in polynomial time. Indeed, it suffices to find a separation oracle for the dual program.  A separation oracle query is equivalent to finding a set $S$ that maximizes $v_i(S) - \sum_{j \in S}y_j$ for a given choice of $(y_j)_{j \in M}$.  But this is precisely a demand query, interpreting the dual variables $y_j$ as item prices.  Thus, given access to a demand oracle for $v_i$, one can solve the dual program and compute $(y^{\bv,q}_j)_{j \in M}$.

Our algorithm now chooses $q$ so as to maximize $\Ex{\bv}{f^{\bv}(q)} - \Ex{\bv}{f^{\bv}(q^2)}$ among all  $q = 2^{-2^X}$ for some $X \in \{0, \ldots, \ell\}$, where $\ell = \log\log m$. We then define our prices according to $p^q_j = \Ex{\tilde{\bv}}{q y^{\tilde{\bv},q}_j}$.  That is, $p^q_j$ is equal to the expected value of $y_j^{\tilde{\bv},q}$ scaled by $q$.

\paragraph{Approximation Guarantee.} To prove the approximation guarantee, we first show a property of the prices that holds regardless of the choice of $q$.

\begin{claim}
\label{claim:polytime-q-qsquard}
For any choice of $q \in [0, 1]$, the prices defined by $p^q_j = \Ex{\tilde{\bv}}{q y^{\tilde{\bv},q}_j}$ fulfill
\begin{align*}
\Ex{\bv}{\sum_{i, S} \lambda^{i,\bv,q}_S \left( v_i(S \setminus T) - \sum_{j \in S} p^q_j \right) + \sum_{j \in T} p^q_j} \geq \Ex{\bv}{f^{\bv}(q)} - \Ex{\bv}{f^{\bv}(q^2)}.
\end{align*}
\end{claim}

\begin{proof}[Proof of Claim~\ref{claim:polytime-q-qsquard}]
For all $\bv$ and $T$, we have
\begin{align*}
&\sum_{i, S} \lambda^{i,\bv,q}_S \left( v_i(S \setminus T) - \sum_{j \in S} p^q_j \right) + \sum_{j \in T} p^q_j \\
&\qquad\geq \sum_{i, S} \lambda^{i,\bv,q}_S \left( v_i(S) - v_i(S \cap T) - \sum_{j \in S} p^q_j \right) + \sum_{j \in T} p^q_j \\
&\qquad= \sum_{i, S} \lambda^{i,\bv,q}_S v_i(S) - \sum_{i, S} \lambda^{i,\bv,q}_S v_i(S \cap T) - \sum_{i, S} \lambda^{i,\bv,q}_S \sum_{j \in S} p^q_j + \sum_{j \in T} p^q_j,
\end{align*}
where the inequality uses subadditivity and the equality is simply expansion.

We now analyze each of the terms of this expression.  First, $\sum_{i,S} \lambda^{i,\bv,q}_S v_i(S) = f^{\bv}(q)$ from the definition of $\lambda^{i,\bv,q}$.  Next, since $(y^{\bv,q}_j)_{j \in M}, (u^{\bv,q}_i)_{i \in N}$ satisfy the dual program, it must be that $v_i(S \cap T) \leq u^{\bv,q}_i + \sum_{j \in S \cap T} y^{\bv,q}_j$.  This implies 
\[ \sum_{i, S} \lambda^{i,\bv,q}_S v_i(S \cap T) \leq \sum_i u^{\bv,q}_i \sum_S \lambda^{i,\bv,q}_S + \sum_{i, S} \lambda^{i,\bv,q}_S \sum_{j \in S \cap T} y^{\bv,q}_j \leq \sum_i u^{\bv,q}_i + q \sum_{j \in T} y^{\bv,q}_j. \] 
Furthermore $\sum_{i, S} \lambda^{i,\bv,q}_S \sum_{j \in S} p^q_j \leq q \sum_{j \in M} p^q_j$. We therefore conclude
\begin{align*}
&\sum_{i, S} \lambda^{i,\bv,q}_S \left( v_i(S \setminus T) - \sum_{j \in S} p^q_j \right) + \sum_{j \in T} p^q_j \\
&\qquad\geq f^{\bv}(q) - q \sum_{j \in T} y^{\bv,q}_j - \sum_i u^{\bv,q}_i - q \sum_{j \in M} p^q_j + \sum_{j \in T} p^q_j.
\end{align*}
We now take expectations over $\bv$ to get
\begin{align*}
&\Ex{\bv}{\sum_{i, S} \lambda^{i,\bv,q}_S \left( v_i(S) - v_i(S \cap T) - \sum_{j \in S} p^q_j \right) + \sum_{j \in T} p^q_j} \\
&\qquad\geq \Ex{\bv}{f^{\bv}(q)} - \sum_{j \in T} \Ex{\bv}{q y^{\bv,q}_j} - \sum_i \Ex{\bv}{u^{\bv,q}_i} - q \sum_{j \in M} p^q_j + \sum_{j \in T} p^q_j
\end{align*}
Notice that $\sum_{j \in T} \Ex{\bv}{q y^{\bv,q}_j} = \sum_{j \in T} p^q_j$ from the definition of $p^q_j$.  Furthermore,
\[ \sum_i \Ex{\bv}{u^{\bv,q}_i} + q \sum_{j \in M} p^q_j = \Ex{\bv}{u^{\bv,q}_i + q^2 \sum_{j \in M} y^{\bv,q}_j} = \Ex{\bv}{f^{\bv}(q^2)}\]
from the choice of $(y^{\bv,q}_j)_{j\in M}$ and $(u^{\bv,q}_i)_{i \in N}$.  We can therefore simplify our inequality to
\begin{align*}
&\Ex{\bv}{\sum_{i, S} \lambda^{i,\bv,q}_S \left( v_i(S) - v_i(S \cap T) - \sum_{j \in S} p^q_j \right) + \sum_{j \in T} p^q_j} \\
&\qquad\geq \Ex{\bv}{f^{\bv}(q)} - \Ex{\bv}{f^{\bv}(q^2)}.
\end{align*}
This completes the proof of Claim~\ref{claim:polytime-q-qsquard}.
\end{proof}

Claim~\ref{claim:polytime-q-qsquard} holds for a fixed but arbitrary choice of $q \in [0,1]$. Our algorithm chooses among all $q = 2^{-2^X}$ for $X \in \{0, \ldots, \ell\}$, where $\ell = \log\log m$, the one that maximizes $\Ex{\bv}{f^{\bv}(q)} - \Ex{\bv}{f^{\bv}(q^2)}$. Note that by this choice of $q$, we have
\begin{align*}
\Ex{\bv}{f^{\bv}(q)} - \Ex{\bv}{f^{\bv}(q^2)} & \geq \frac{1}{\ell + 1} \sum_{X = 0}^\ell \left(\Ex{\bv}{f^{\bv}(2^{-2^X})} - \Ex{\bv}{f^{\bv}((2^{-2^X})^2)}\right) \\
& = \frac{1}{\ell + 1} \Ex{\bv}{\left( f^{\bv}\left(\frac{1}{2}\right) - f^{\bv}\left(\frac{1}{m^2}\right) \right)}.
\end{align*}

As we have already established $f^{\bv}\left(\frac{1}{2}\right) \geq \frac{1}{2} \bv(\OPT(\bv))$ because one particular choice for $\lambda$ would be $\OPT(\bv)$ with probability $\frac{1}{2}$ and the empty allocation otherwise. Furthermore, $f^{\bv}\left(\frac{1}{m^2}\right) \leq \frac{1}{m} \bv(\OPT(\bv))$ because for any $\lambda \in \Gamma(\frac{1}{m^2})$, we have $\sum_{i=1}^{n} \sum_{S} \lambda^{i}_S v_i(S) \leq \sum_{i=1}^{n} \sum_{S} \lambda^{i}_S \sum_{j \in S} v_i(\{j\}) \leq \frac{1}{m^2} \sum_{j \in M} v_i(\{j\}) \leq \frac{1}{m} \bv(\OPT(\bv))$.

So, these calculations in combination with Claim~\ref{claim:polytime-q-qsquard} yield that for the value of $q$ chosen by the algorithm
\begin{align*}
&\Ex{\bv}{\sum_{i, S} \lambda^{i,\bv,q}_S \left( v_i(S \setminus T) - \sum_{j \in S} p^q_j \right) + \sum_{j \in T} p^q_j}\\
&\qquad\geq \frac{1}{\ell + 1} \left(\frac{1}{2} - \frac{1}{m}\right) \Ex{\bv}{\bv(\OPT(\bv)}.
\end{align*}

\paragraph{Estimating Expectations by Sampling.}
Our algorithm so far requires to compute expectations $\Ex{\bv}{f^{\bv}(q)}$ and $p^q_j = \Ex{\tilde{\bv}}{q y^{\tilde{\bv},q}_j}$.
Given only sample access to the distributions, we will first estimate the value of $\Ex{\bv}{f^{\bv}(q)}$ for each $q$.  This is done by repeatedly sampling $\bv$ and calculating $f^{\bv}$ using the linear program (which we can solve using demand oracles).  Write $\hat{f}(q)$ for the resulting estimate.  Assuming that values are scaled to lie in $[0,1]$, Hoeffding's inequality guarantees that for each fixed $q$ with $N_1$ samples $\Pr{\lvert \hat{f}(q) - \Ex{\bv}{f^{\bv}(q)} \rvert \geq \delta} \leq \exp(-2 N_1 \delta^2)$. That is, by a union bound, with probability $1 - (\ell + 1) \exp(-2 N_1 \delta^2)$, we have $\lvert \hat{f}(q) - \Ex{\bv}{f^{\bv}(q)} \rvert < \delta$ for all $q = 2^{-2^X}$, where $X \in \{0, \ldots, \ell\}$.

We will then choose $q = 2^{-2^X}$ to maximize the (estimated) difference $\hat{f}(q) - \hat{f}(q^2)$.  The prices $p^q_j$ then satisfy
\begin{align*}
&\Ex{\bv}{\sum_{i, S} \lambda^{i,\bv,q}_S \left( v_i(S \setminus T) - \sum_{j \in S} p^q_j \right) + \sum_{j \in T} p^q_j} \\
&\qquad\geq \frac{1}{\ell + 1} \left( \left(\frac{1}{2} - \frac{1}{m}\right) \Ex{\bv}{\bv(\OPT(\bv)} - |\hat{f}(q) - \Ex{\bv}{f^{\bv}(q)}| - |\hat{f}(q^2) - \Ex{\bv}{f^{\bv}(q^2)}| \right)\\
&\qquad\geq \frac{1}{\ell + 1} \left(\frac{1}{2} - \frac{1}{m}\right) \Ex{\bv}{\bv(\OPT(\bv)} - 2\delta.
\end{align*}

We next compute prices $\hat{p}_j$ that estimate $p^q_j$ by sampling. For every drawn valuation profile $\bv$, we solve the dual linear program corresponding to our choice of $q$, then take $\hat{p}_j$ to be the sample mean of the observed dual values $y^{\bv,q}_j$ scaled by $q$.  Note that as values are scaled to lie in $[0,1]$ also $y^{\bv,q}_j \in [0,1]$ for all $\bv$ and $j$. Therefore, Hoeffding's inequality guarantees that for each fixed $j$ and fixed $X$ with $N_2$ samples $\Pr{\lvert \hat{p}_j - p^qs_j \rvert \geq \delta} \leq \exp(-2 N_2 \delta^2)$. That is, by a union bound, with probability $1 - m \exp(-2 N_2 \delta^2)$, we have $\lvert \hat{p}_j - p^q_j \rvert < \delta$ for all $j$ for a fixed $q$.
Conditioning upon success, we then have
\begin{align*}
&\Ex{\bv}{\sum_{i, S} \lambda^{i,\bv,q}_S \left( v_i(S \setminus T) - \sum_{j \in S} \hat{p}_j \right) + \sum_{j \in T} \hat{p}_j}\\ 
&\qquad\geq \Ex{\bv}{\sum_{i, S} \lambda^{i,\bv,q}_S \left( v_i(S \setminus T) - \sum_{j \in S} (p^q_j + \delta) \right) + \sum_{j \in T} (p^q_j-\delta)}\\
&\qquad\geq \Ex{\bv}{\sum_{i, S} \lambda^{i,\bv,q}_S \left( v_i(S \setminus T) - \sum_{j \in S} p^q_j \right) + \sum_{j \in T} p^q_j}
 - m\delta - mn\delta\\ 
&\qquad\geq \frac{1}{\ell + 1} \left(\frac{1}{2} - \frac{1}{m}\right) \Ex{\bv}{\bv(\OPT(\bv)} - (m+mn+2)\delta.
\end{align*}

Now choose $\delta = \frac{\epsilon}{m+mn+2}$. The error probability using $N_1$ and $N_2$ samples is upper-bounded by $(\ell + 1) \exp(-2 N_1 \delta^2) + m \exp(-2 N_2 \delta^2)$, so $N_1 = N_2 = \frac{1}{2 \delta^2} \ln( 2 m / \zeta ) = O(\mathrm{poly}(n,m,1/\epsilon,\log(1/\zeta)))$ guarantees it to be at most $\zeta$.
\end{proof}

The proof of Theorem~\ref{thm:upper-bound-polytime} now follows from Lemma~\ref{lem:key-lemma-bayesian-polytime} in a similar way as Theorem~\ref{thm:upper-bound} followed from Lemma~\ref{lem:key-lemma-bayesian} (see Appendix~\ref{app:proof-thm-upper-bound-polytime} for details). The advantage of Lemma~\ref{lem:key-lemma-bayesian-polytime} is that it is amenable to polynomial time computation.  If we use prices $(\hat{p}_j)_{j \in M}$, we incur an additive $\epsilon$ error for each item, for each agent.  For a total of $nm\epsilon$ additional additive error.  Taking $\epsilon$ sufficiently small yields our desired approximation in polynomial time.

\section{An $\mathbf{O(\text{log}\, \text{log}\, m)}$ Approximation to Optimal Revenue}
\label{sec:revenue}

We next show how to extend our arguments to obtain a posted-price mechanism that achieves near-optimal revenue rather than welfare. 
We will follow the approach of Cai and Zhao~\cite{CaiZ17}. 
Fix the valuation distribution $\D = \prod_i \D_i$.  We will make an independence assumption on each distribution $\D_i$, which is that the valuations are \emph{subadditive over independent items.} 
Roughly speaking, this means that for any $S$ and $T$ with $S \cap T = \emptyset$, the random variables $v_i(S)$ and $v_i(T)$ are distributed independently.  In particular, this implies $v_i(\{j\})$ is distributed independently for each item $j$.  We'll write $v_i(j) = v_i(\{j\})$ for convenience.


\begin{theorem}\label{thm:revenue}
When buyers have subadditive valuations over independent items, there exists a simple, deterministic, and DSIC mechanism that achieves an $\Omega(1/\alpha)$ approximation to the optimal BIC revenue where $\alpha = O(\log \log m)$.  
\end{theorem}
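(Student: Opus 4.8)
The plan is to follow the duality-based framework of Cai and Zhao~\cite{CaiZ17}, which builds on the ``virtual value for revenue'' duality of Cai, Devanur, and Weinberg~\cite{CaiDW16}, and to replace the single step that drives their $O(\log m)$ loss --- the price-based approximation of the \emph{core} --- by a suitable strengthening of our prophet inequality. Concretely, I would first invoke the Cai--Zhao tail--core decomposition of the optimal BIC revenue into a single-dimensional (Myerson) term $\mathrm{Single}$, a tail term $\mathrm{Tail}$, and a core term $\mathrm{Core}$, with $\mathrm{OptRev}(\D) \leq O(1)\cdot\big(\mathrm{Single} + \mathrm{Tail} + \mathrm{Core}\big)$, using the subadditive-over-independent-items assumption so that per-item Myerson thresholds $r_{ij}$ are well defined. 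Here $\mathrm{Single}$ is a sum of one-dimensional revenues (one per item) that is extracted up to a constant factor by posting static anonymous per-item reserve prices and letting each arriving buyer take at most one item (in the spirit of~\cite{ChawlaHMS10}); $\mathrm{Tail}$, the contribution of the rare coordinates with $v_i(j) > r_{ij}$, is dominated by $O(\mathrm{Single})$ and is extracted by the same family of mechanisms. The substantive term is $\mathrm{Core} = \Ex{\bv}{\sum_i v_i^{(c)}(\hat\sigma_i(\bv))}$, the expected welfare of an allocation $\hat\sigma$ of the \emph{capped} valuations $v_i^{(c)}$; the point we must exploit is that $\mathrm{Core}$ is naturally certified by a (fractional) benchmark allocation whose probability of allocating item $j$ is a prescribed $q_j\in[0,1]$ that need not be uniform across items.

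Second, I would prove the new prophet-inequality ingredient: an analogue of Lemma~\ref{lem:key-lemma-bayesian} in which the benchmark distribution $\lambda^{i,\bv}$ is constrained to put marginal probability at most $q_j$ on item $j$ for an \emph{arbitrary} profile $(q_j)_{j\in M}$ --- rather than being allowed to allocate every item with probability one --- while still losing only a factor $\alpha = O(\log\log m)$. The route mirrors Section~\ref{sec:welfare}: write the LP whose feasibility is equivalent to the existence of item prices $p_j$ and distributions $\lambda^{i,\bv}$ satisfying the desired inequality, dualize, and read the dual as a constrained zero-sum game with payoff $v_i(S\setminus T)$ in which both players are restricted to the non-uniform polytope $\Delta(\bq)$. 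To lower-bound the game I would bucket the items by the dyadic-of-dyadic scale of their cap --- item $j$ enters bucket $k$ if $q_j\in(2^{-2^{k+1}},2^{-2^{k}}]$ --- apply inside each bucket exactly the telescoping estimate $g(q)\geq f(q)-f(q^2)$ of Lemma~\ref{lem:loglog-complete} at the bucket's representative scale, and then average over the $O(\log\log m)$ buckets and recombine across buckets using subadditivity; since each item lies in a single bucket the losses add to $O(\log\log m)$. Given this lemma, the consumer-surplus bound $u_i \geq \tfrac{1}{\alpha}v_i^{(c)}(\hat\sigma_i(\bv)) - \sum_{j\in T}p_j$ (with $T$ the set of already-sold items that $\hat\sigma$ intended for $i$) goes through as in the proof of Theorem~\ref{thm:upper-bound}, and summing over buyers yields $\sum_i\Ex{}{u_i} \geq \tfrac{1}{\alpha}\mathrm{Core} - \Ex{}{\text{item-price revenue}}$ for the induced static anonymous item prices $\bp$ (using $v_i \geq v_i^{(c)}$ so a real buyer only does better).

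Third, I would turn this surplus guarantee into revenue via the Chawla--Miller / Cai--Zhao entry-fee reduction~\cite{chawla2016mechanism,CaiZ17}: run the posted-price mechanism with prices $\bp$, and additionally charge each arriving buyer $i$ a per-buyer entry fee $\delta_i$ computed from $\D$ alone (keeping the mechanism deterministic and DSIC) so that $i$ accepts with probability at least a constant --- e.g.\ the median of her surplus distribution at prices $\bp$ over the random set of still-available items. A standard argument then shows the collected entry fees are $\Omega(1)\cdot\sum_i\Ex{}{u_i}$ while the item-price revenue is unchanged, so total revenue is at least $\Omega(1)\cdot\big(\sum_i\Ex{}{u_i} + \Ex{}{\text{item-price revenue}}\big) \geq \Omega(1/\alpha)\,\mathrm{Core}$. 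Combining with the constant-factor extraction of $\mathrm{Single}+\mathrm{Tail}$ and taking the better of the two mechanisms (equivalently, running them on a random partition of the buyers, as in~\cite{CaiZ17}) yields an $\Omega(1/\alpha)$ approximation to $\mathrm{OptRev}$; the demand-restricted variant is obtained identically by replacing entry fees with a cap on the number of items each buyer may purchase.

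The main obstacle I anticipate is the second step: pushing the $O(\log\log m)$ bound through with \emph{non-uniform} marginal caps. The telescoping identity of Lemma~\ref{lem:loglog-complete} crucially used a single scale $q$ common to all items, so that $\lambda\otimes\mu$ has marginal $q^2$ on $S\cap T$ and the recursion $q\mapsto q^2$ closes; with varying $q_j$ the cross distribution has marginal $q_jq'_j$, which is no longer a single scale, and one must argue that the dyadic-of-dyadic bucketing loses only a constant per bucket and that the $O(\log\log m)$ buckets recombine (via subadditivity, using the independence-over-items structure of the capped valuations) without cross-interference. A secondary, more technical obstacle is carefully tracking the constants and any additive sampling errors through the Cai--Zhao decomposition so that the final guarantee is genuinely $\Omega(1/\alpha)$ with $\alpha=O(\log\log m)$, and verifying that the entry-fee (resp.\ demand-restricted) mechanism remains dominant-strategy incentive compatible when prices and fees are posted online from $\D$ --- both routine but requiring care.
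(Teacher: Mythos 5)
Your high-level plan matches the paper's: invoke the Cai--Zhao core--tail decomposition, strengthen the prophet inequality to handle non-uniform marginal caps, and convert the resulting consumer surplus into revenue via median entry fees, falling back on an RSPM-type mechanism when $\textsc{PostRev}$ already dominates. But there are two substantive gaps.

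The larger gap is in how you handle the non-uniform marginals. You propose bucketing items by the dyadic-of-dyadic scale of their cap $q_j$ and telescoping within each bucket, which you yourself flag as the main obstacle --- correctly, because it is not clear how a single term $f(q)-f(q^2)$ per bucket, averaged over $O(\log\log m)$ buckets, closes the recursion without either a second layer of telescoping (giving $O((\log\log m)^2)$) or cross-bucket interference. The paper sidesteps this entirely with a cleaner parametrization in Lemma~\ref{lem:key-lemma-revenue}: it keeps the \emph{profile} of per-agent, per-item caps $z_{ij}(v_i) = \pi^i_j(v_i)$ fixed, constrains the protagonist to $\Gamma(q) = \{\lambda : \text{marginal of }\lambda\text{ on }j \le q\cdot z_j\}$ scaled by a single scalar $q$, and constrains the antagonist to the \emph{uniform} set $\Delta(q,\dots,q)$. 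Because the antagonist's constraint stays uniform, the key recursion is preserved verbatim: if $\lambda \in \Gamma(q)$ and $\mu \in \Delta(q,\dots,q)$, the distribution over $S\cap T$ has marginal at most $q\cdot z_j \cdot q = q^2 z_j$, hence lies in $\Gamma(q^2)$, and the telescoping of Lemma~\ref{lem:loglog-complete} goes through with the same $O(\log\log m)$ terms and no bucketing. This is the decisive technical idea your proposal is missing.

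The second, smaller gap is that your entry-fee step treats the core valuations $v'_i$ as if they directly admit the concentration needed to extract a constant fraction of surplus as entry fees. They do not: individual items can contribute a large share of surplus under $v'_i$. The paper (following Cai--Zhao) introduces a \emph{second} core restriction $\hat v_i(S) = v_i(S \cap Y_i(v_i))$, whose surplus $\mu_i$ is $\tau_i$-Lipschitz, and then bounds the gap $\Ex{}{\sum_{i,S}\lambda^{i,\bv}_S(v'_i(S\setminus T) - \hat v_i(S\setminus T))}$ by $O(\textsc{PostRev})$ (Lemma~\ref{lemma:welfare-with-without-hat}). This bridge --- incurring an additional additive $\textsc{PostRev}$ term in Lemma~\ref{lem:pp-with-entry-fees} --- is required for the median-entry-fee argument (Lemma~\ref{lem:entry-fees}) to apply, and it is absent from your sketch.
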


Just like the mechanism of Cai and Zhao~\cite{CaiZ17}, our mechanism will be from one of the following two classes:
\begin{enumerate}
\item Rationed sequential posted-price mechanism (RSPM): The buyers are approached in a fixed order.  For each buyer, each item is assigned a static and potentially personalized posted price.  Each buyer can purchase at most a single item at its listed price.
\item Anonymous sequential posted-price with entry fee mechanism (ASPE): Each item is assigned a static anonymous posted price.  The buyers are approached in a fixed order, and each buyer faces an entry fee that can depend on the set of items that have not yet been sold when they arrive.  If the buyer pays the entry fee, they can purchase any set of items at their posted prices.
\end{enumerate}

We will write $\textsc{PostRev}$ for the optimal revenue attainable using a RSPM, $\textsc{APostEnRev}$ for the optimal revenue attainable using an ASPE, and $\textsc{Rev}$ for the optimal BIC revenue.


In Section~\ref{sec:core-decomposition} and Section~\ref{sec:core-within-the-core} we describe the high-level approach of Cai and Zhao~\cite{CaiZ17} and the key facts of their construction that we will reuse. In Section~\ref{sec:core-within-the-core} we also state our key lemma, Lemma~\ref{lem:pp-with-entry-fees}, and show how it implies the improved bound in Theorem~\ref{thm:revenue}. We prove Lemma~\ref{lem:pp-with-entry-fees} in Section~\ref{sec:revenue-argument}. 

\subsection{A First Core-Tail Decomposition and a Benchmark}
\label{sec:core-decomposition}


In this section we summarize a particular core-tail decomposition due to Cai and Zhao~\cite{CaiZ17}.  First some notation.  We'll use $\mathbf{\sigma}$ to describe an interim allocation rule of mechanism, where $\sigma_S^{i}(v_i)$ is the probability that agent $i$ is allocated set $S$ when agent $i$ has valuation $v_i$.  We'll write $\pi_{j}^{i}(v_i) = \sum_{S \ni j}\sigma_S^{i}(v_i)$ for the probability that agent $i$ is allocated item $j$ under valuation $v_i$.  


Fix personalized item thresholds $\beta_{ij} \geq 0$ for all $i$ and $j$, whose values will be chosen later.  We now describe what is meant by the core.  For each buyer $i$, define a threshold $c_i$ as follows:
\[ c_i = \inf\{ x \geq 0 \colon \sum_j \Prr{v_i}{v_i(j) \geq \beta_{ij}+x} \leq 1/2 \}. \]
Let $C_i(v_i) = \{ j \colon v_i(j) < \beta_{ij} + c_i \}$.  That is, $C_i(v_i)$ contains all items $j$ for which agent $i$ does not have too large a value for item $j$ as a singleton, relative to threshold $\beta_{ij}$.

Fix a mechanism $M$ with interim allocation rule $\mathbf{\sigma}$.  We can now define the core of a mechanism as follows:
\[ \textsc{Core}(M,\mathbf{\beta}) = \Ex{\bv}{\sum_{i,S} \sigma_S^i(v_i) v_i(S \cap C_i(v_i))}. \]
That is, the core with respect to $M$ and $\beta$ is the total welfare generated by $M$, excluding any item $j$ assigned to a buyer $i$ that has too large a value for it, relative to $\beta_{ij}$.  

We can interpret this more explicitly as welfare under a valuation transformation.  Define $v'_i(S) = v_i(S \cap C_i(v_i))$.  Note that with this definition, omitting the dependence of $\sigma$ and $\bv'$ on $\bv$ from the notation, $\textsc{Core}(M,\beta) = \Ex{}{\sum_{i,S} \sigma_S^i v'_i(S).}$ An important observation due to Cai and Zhao \cite{CaiZ17} is the following:

\begin{lemma}[Cai and Zhao~\cite{CaiZ17}]
Valuations $\bv'$ are subadditive over independent items.
\end{lemma}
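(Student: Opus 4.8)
The plan is to check, buyer by buyer, the two properties that define being subadditive over independent items, and then observe that independence across buyers is inherited for free. I will work with the standard underlying representation of $\D_i$: there are mutually independent item types $t_{i1},\dots,t_{im}$ and, for each realization of these types, a normalized monotone subadditive set function, such that $v_i(S)$ is a function of $S$ together with $(t_{ij})_{j\in S}$ only; in particular each singleton value $v_i(j)$ is a function of $t_{ij}$ alone.

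First I would verify that, realization by realization, $v'_i$ is normalized, monotone, and subadditive. This is immediate from the fact that the map $S\mapsto S\cap C_i(v_i)$ is monotone and sends $\emptyset$ to $\emptyset$, together with the identity $(S\cap C_i(v_i))\cup(T\cap C_i(v_i))=(S\cup T)\cap C_i(v_i)$, which combined with subadditivity of $v_i$ gives $v'_i(S)+v'_i(T)=v_i(S\cap C_i(v_i))+v_i(T\cap C_i(v_i))\ge v_i((S\cup T)\cap C_i(v_i))=v'_i(S\cup T)$.

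Next I would establish the independence structure. The crucial observation is that the threshold $c_i$ is a deterministic quantity: it depends only on the distribution $\D_i$ and the fixed numbers $\beta_{ij}$, not on the realized valuation. Hence the event $\{j\in C_i(v_i)\}$, namely $\{v_i(j)<\beta_{ij}+c_i\}$, is a function of $t_{ij}$ alone, so $S\cap C_i(v_i)$ is determined by $(t_{ij})_{j\in S}$. Since $v'_i(S)=v_i(S\cap C_i(v_i))$ is the value of $v_i$ evaluated on a subset of $S$, it too is determined by $(t_{ij})_{j\in S}$. Therefore, for disjoint $S,T$, the quantities $v'_i(S)$ and $v'_i(T)$ are functions of the disjoint, hence independent, type-blocks $(t_{ij})_{j\in S}$ and $(t_{ij})_{j\in T}$, so they are independent; equivalently, $\bv'$ is subadditive over independent items with the very same product type distribution. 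Independence of $v'_i$ across $i$ is clear, since $v'_i$ is a function of $v_i$ only and the $v_i$ are independent across buyers.

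The proof is essentially bookkeeping, and the only point requiring any care — the ``main obstacle'', such as it is — is recognizing that $c_i$ is not random: this is what lets the censoring operation $v_i\mapsto v_i(\,\cdot\,\cap C_i(v_i))$ respect the per-item independence structure, rather than coupling items through a realization-dependent cutoff.
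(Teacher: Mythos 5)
Your proof is correct, and it is essentially the same argument as in Cai and Zhao~\cite{CaiZ17}, which this paper cites without reproducing. The two steps you isolate---that the truncation $S \mapsto S \cap C_i(v_i)$ preserves normalization, monotonicity, and subadditivity pointwise; and that the deterministic cutoff $c_i$ together with the per-item events $\{v_i(j) < \beta_{ij} + c_i\}$ keeps $v'_i(S)$ a function of the types $(t_{ij})_{j\in S}$ alone---are exactly the observations the lemma rests on.
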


The following lemma summarizes the implications of a construction due to Cai and Zhao~\cite{CaiZ17}

\begin{lemma}[Cai and Zhao~\cite{CaiZ17}]\label{lem:cz}
Fix an arbitrary constant $b \in (0,1)$. Then there exists a mechanism $M$ with interim allocation rule $\mathbf{\sigma}$ and corresponding item allocation rules $\mathbf{\pi}$ and personalized item thresholds $\beta_{ij}$ for all $i$ and $j$ such that the following are true: 
\begin{enumerate}
\item $\textsc{Rev} \leq \left(\frac{8}{1-b} + 12 \right) \cdot \textsc{PostRev} + 4 \cdot \textsc{Core}(M,\beta)$, 
\item $\sum_{i=1}^{n} \frac{c_i}{2} \leq \frac{2}{1-b} \cdot \textsc{PostRev}$,
\item $\sum_{k \neq i}\Prr{v_k}{v_k(j) \geq \beta_{kj}} \leq b$, 
\item $\Ex{v_i}{\pi^i_j(v_i)} \leq \Prr{v_i}{v_i(j) \geq \beta_{ij}}/b$. 
\end{enumerate}
\end{lemma}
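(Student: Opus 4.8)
Lemma~\ref{lem:cz} is essentially a repackaging of the core--tail analysis of Cai and Zhao~\cite{CaiZ17}, so the plan is to indicate how its four assertions come out of that construction. The backbone is the Lagrangian-duality framework of Cai, Devanur, and Weinberg~\cite{CaiDW16}: taking $M$ to be the revenue-optimal BIC mechanism with interim rule $\sigma$, one fixes a canonical family of dual variables (the flow that maps each buyer's type to the type with a single coordinate lowered) to obtain a virtual-welfare upper bound on $\textsc{Rev}$. For any profile of thresholds $\beta_{ij}$ this bound splits into three groups of terms: a \emph{single} term, which credits each buyer $i$ only with $v_i(j)$ for her favorite item $j$ and only on the event $v_i(j)\ge\beta_{ij}$; a \emph{tail} term, which sums $v_i(j)$ over the ``outlier'' coordinates $j$ with $v_i(j)\ge\beta_{ij}+c_i$; and a term equal (up to a constant) to $\textsc{Core}(M,\beta)$, since the remaining contributions involve only coordinates where $v_i(j)<\beta_{ij}+c_i$. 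The core term is carried through untouched, giving the $4\cdot\textsc{Core}(M,\beta)$ of part~1, and the single and tail terms are what must be charged to $\textsc{PostRev}$.

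\textbf{The threshold choice gives parts~3 and~4 for free.} Set, for every $i$ and $j$, the value $\beta_{ij}$ so that $\Prr{v_i}{v_i(j)\ge\beta_{ij}}=b\cdot\Ex{v_i}{\pi^i_j(v_i)}$, randomizing the threshold when $v_i(j)$ has an atom at $\beta_{ij}$. Part~4 is then immediate, with equality. For part~3, summing over $k\ne i$ and using that item $j$ is sold at most once, so $\sum_k\Ex{v_k}{\pi^k_j(v_k)}\le 1$, we get $\sum_{k\ne i}\Prr{v_k}{v_k(j)\ge\beta_{kj}}=b\sum_{k\ne i}\Ex{v_k}{\pi^k_j(v_k)}\le b$. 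The parameter $b$ is exactly the slack being traded here: a larger $b$ weakens part~3 but makes the ``favorite item is still available'' event more likely when we run the posted-price mechanism below.

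\textbf{Charging single and tail to $\textsc{PostRev}$, and part~2.} It remains to build a rationed sequential posted-price mechanism (RSPM) whose revenue dominates, up to the stated constants, the single and tail terms. Offer buyer $i$ each still-available item $j$ at price $\beta_{ij}$ (for the single term) or $\beta_{ij}+c_i$ (for the tail and for part~2); since an RSPM buyer takes at most one item, a prophet-inequality / Chawla--Hartline--Malec--Sivan-type argument lower-bounds the resulting revenue by a constant fraction of these terms, where the factor $\tfrac{1}{1-b}$ enters because part~3 guarantees that buyer $i$'s favorite item has been depleted by earlier buyers with probability at most $b$. The definition $c_i=\inf\{x\ge 0\colon\sum_j\Prr{v_i}{v_i(j)\ge\beta_{ij}+x}\le\tfrac12\}$ is chosen precisely so that the price schedule $\beta_{ij}+c_i$ both dominates $\sum_i c_i/2$ (this is part~2) and absorbs the tail term. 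Bookkeeping the constants through the duality expansion yields the coefficient $\tfrac{8}{1-b}+12$ in part~1.

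\textbf{Main obstacle.} The threshold choices above are essentially cost-free; the real work --- and the part I would expect to be most delicate --- is the constant-tracking through the CDW virtual-welfare expansion together with the proof that the RSPM extracting the single and tail terms is genuinely DSIC and loses only an absolute constant, i.e.\ re-deriving the relevant lemmas of~\cite{CaiDW16,CaiZ17}. Since our Theorem~\ref{thm:revenue} uses Lemma~\ref{lem:cz} only as a black box (all of the new work is in approximating $\textsc{Core}(M,\beta)$ by posted prices), for our purposes it is enough to cite~\cite{CaiZ17} for these four facts.
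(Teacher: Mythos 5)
The paper gives no proof of this lemma at all; it is stated as a result imported verbatim from Cai and Zhao~\cite{CaiZ17}, and your proposal ultimately does the same thing — it ends by citing~\cite{CaiZ17} for the delicate constant-tracking, which is exactly what the paper does. Your sketch of the threshold construction (setting $\beta_{ij}$ so that $\Prr{v_i}{v_i(j)\geq\beta_{ij}} = b\cdot \Ex{v_i}{\pi^i_j(v_i)}$, which makes part~4 an equality and yields part~3 by summing the feasibility constraint $\sum_k \Ex{v_k}{\pi^k_j(v_k)}\leq 1$) and of the CDW virtual-welfare split into single/tail/core terms correctly reflects how Cai and Zhao establish parts~1 and~2, so your proposal is consistent with the paper's treatment.
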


From now on we'll fix $b$ and consider the corresponding mechanism $M$ and thresholds $\beta$ from Lemma~\ref{lem:cz}, and we'll simply write $\textsc{Core}$ to mean $\textsc{Core}(M,\beta)$.  Given Lemma~\ref{lem:cz}, what remains in order to prove Theorem~\ref{thm:revenue} is to argue that $\textsc{Core}$ can be approximated (up to a factor of $O(\log\log m)$) by either $\textsc{PostRev}$ or $\textsc{APostEnRev}$.

\subsection{The Core Within the Core, Key Lemma, and How it Implies the Bound}
\label{sec:core-within-the-core}

The \textsc{Core} as defined above is just the welfare for a specific allocation under transformed but still subadditive valuation functions (namely the $v'_i$'s). A key idea in the literature on simple, near-optimal posted price mechanisms for revenue is to turn posted-price mechanisms that achieve some approximation guarantee for welfare into mechanisms that achieve the same (up to constant factors) approximation guarantee for revenue by augmenting the mechanism with entry fees. The idea is that if each buyer's surplus is sufficiently concentrated, then that surplus can be extracted as revenue using entry fees.  One way to show concentration is to argue that no single item's contribution to the surplus is too large (i.e., a Lipschitz condition), but as it turns out individual items can contribute significantly to surplus under valuations $v'_i$.

Cai and Zhao~\cite{CaiZ17} therefore invoke a second core restriction (called $\widehat{\textsc{CORE}}$ in their paper), resulting in a further restricted valuation $\hat{v}_i$ for each agent.  This further restricted valuation has a sufficiently small Lipschitz constant. 

To define this second restriction, consider a set of item prices $p_j$ (we will fix these later). Then, for each agent $i$, define
\[ \tau_i = \inf\{ x \colon \sum_j \Pr{v_i(j) \geq \max\{\beta_{ij}, p_j + x\}} \leq 1/2 \}. \]
Write $Y_i(v_i) = \{ j \colon v_i(j) < p_j + \tau_i \}$.  

With these definitions in place, we can formalize what we meant by a further restriction of each agent's valuation $v_i$.  Define $\hat{v}_i(S) = v_i(S \cap Y_i(v_i))$.  So $\hat{v}$ is like $v$, but with any ``very high-valued individual items'' removed from the valuation.  Unlike the initial core decomposition, now the high-valuedness is with respect to prices $p_j$ rather than the thresholds $\beta_{ij}$.

For each agent $i$, set $S$, and valuation $v_i$, set $\mu_i(v_i, S) = \max_{S' \subseteq S}( \hat{v}_i(S') - \sum_{j \in S'}p_j )$. 
This is the surplus enjoyed by an agent with valuation $\hat{v}_i$ when $S$ is the set of available items, priced according to $p_j$. 
(We use the notation $\mu_i(v_i,S)$ for consistency with \cite{CaiZ17}. It's not related to $\mu_T$ as it appears in the zero-sum games.)

The following lemmata summarize key properties of the core-within-the-core that are shown in \cite{CaiZ17} and that we will re-use in our analysis.

\begin{lemma}[Cai and Zhao~\cite{CaiZ17}]
For any choice of item prices $p_j$ and subadditive functions $v_i$, the surplus $\mu_i$ is monotone, subadditive, and $\tau_i$-Lipschitz.
\end{lemma}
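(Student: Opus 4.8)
The plan is to derive all three properties of $\mu_i$ from the corresponding properties of the inner function $\hat v_i(S) = v_i(S \cap Y_i(v_i))$, together with the elementary fact that the pointwise maximum of a ``subadditive minus additive'' function over subsets is well behaved. The first step is to record that, for a fixed realization $v_i$, the set $Y_i(v_i)$ does not depend on the argument $S$, so $\hat v_i$ is itself monotone and subadditive: monotonicity is inherited from $v_i$ since $S \subseteq T$ implies $S \cap Y_i(v_i) \subseteq T \cap Y_i(v_i)$, and $\hat v_i(S) + \hat v_i(T) = v_i(S \cap Y_i(v_i)) + v_i(T \cap Y_i(v_i)) \ge v_i\big((S \cup T) \cap Y_i(v_i)\big) = \hat v_i(S \cup T)$. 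I would also note that $\mu_i(v_i,S) \ge \hat v_i(\emptyset) - 0 = 0$, taking $S' = \emptyset$ in the defining maximum.

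Monotonicity of $\mu_i$ in $S$ is then immediate: enlarging $S$ only enlarges the family of feasible $S' \subseteq S$ over which the maximum is taken. For subadditivity, I would take a maximizer $S^* \subseteq S \cup T$ in the definition of $\mu_i(v_i, S \cup T)$, split it as $A = S^* \cap S$ and $B = S^* \setminus S$ (so $B \subseteq T$ and $\{A,B\}$ partitions $S^*$), and combine subadditivity of $\hat v_i$ with additivity of the price term: $\mu_i(v_i, S \cup T) = \hat v_i(S^*) - \sum_{j \in S^*} p_j \le \big(\hat v_i(A) - \sum_{j \in A} p_j\big) + \big(\hat v_i(B) - \sum_{j \in B} p_j\big) \le \mu_i(v_i, S) + \mu_i(v_i, T)$, where the last step uses $A \subseteq S$, $B \subseteq T$, and the definition of $\mu_i$ as a maximum over subsets.

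For the $\tau_i$-Lipschitz property, I would read ``$\tau_i$-Lipschitz'' as the marginal bound $\mu_i(v_i, S \cup \{j\}) - \mu_i(v_i, S) \le \tau_i$ for every set $S$ and item $j$; the general estimate $\mu_i(v_i, S) - \mu_i(v_i, T) \le \tau_i |S \setminus T|$ for $T \subseteq S$ then follows by adding items one at a time, with monotonicity giving the reverse inequality. Take a maximizer $S^*$ for $\mu_i(v_i, S \cup \{j\})$. If $j \notin S^*$ then $S^* \subseteq S$ and the marginal is at most $0 \le \tau_i$. If $j \in S^*$, peel off $\{j\}$ using subadditivity of $\hat v_i$ to get $\mu_i(v_i, S \cup \{j\}) \le \mu_i(v_i, S) + \big(\hat v_i(\{j\}) - p_j\big)$, and it remains to check $\hat v_i(\{j\}) - p_j \le \tau_i$: if $j \in Y_i(v_i)$ this is exactly the defining inequality $v_i(j) < p_j + \tau_i$, and if $j \notin Y_i(v_i)$ then $\hat v_i(\{j\}) = v_i(\emptyset) = 0$, so $\hat v_i(\{j\}) - p_j = -p_j \le 0 \le \tau_i$.

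I do not expect any genuine obstacle here; the argument is routine. The only points that warrant a moment's care are (i) noticing that $\hat v_i$ inherits subadditivity precisely because $Y_i(v_i)$ is independent of the argument set, so that the split used for $\mu_i$ goes through; and (ii) the $j \notin Y_i(v_i)$ branch of the Lipschitz bound, which closes via $-p_j \le 0 \le \tau_i$ rather than via the definition of $\tau_i$ --- implicitly using $\tau_i \ge 0$, which holds because (as for $c_i$) the defining set for $\tau_i$ consists of nonnegative reals.
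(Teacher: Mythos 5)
Your arguments for monotonicity and subadditivity are correct and match the standard proof, and the set-marginal Lipschitz bound is argued correctly (including the two branches for $j \in Y_i(v_i)$ and $j \notin Y_i(v_i)$). However, your reading of ``$\tau_i$-Lipschitz'' is incomplete: in Cai and Zhao's framework the valuation $v_i$ is determined by an underlying independent-items type vector $(t_{ij})_{j \in M}$, and the Lipschitz property is required to hold jointly in the \emph{set} argument $S$ and the \emph{type} argument --- concretely, $|\mu_i(t_i, S) - \mu_i(t'_i, S')|$ is bounded by $\tau_i$ times the number of items where either $S$ and $S'$ disagree or $t_i$ and $t'_i$ disagree. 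The type direction (change $t_{ij}$ for a single $j$, hold $S$ fixed) is the part that actually drives the downstream concentration inequality behind Lemma~\ref{lem:entry-fees}: the entry fee is the median of $\mu_i(v_i, S_i(\bv))$ over the randomness in $v_i$, and the entry-fee argument appeals to a tail bound for Lipschitz subadditive functions over independent product types, not to any Lipschitzness in $S$.

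The good news is that the same one-item peeling argument you gave extends to the type direction with little extra work: if $t_i$ and $t'_i$ differ only in coordinate $j$, then for any $S' \subseteq S$ not containing $j$ one has $\hat v_i(S') = \hat v'_i(S')$ (since $Y_i(\cdot)$ and $v_i$ restricted to sets avoiding $j$ depend only on coordinates other than $j$); and for $S' \ni j$ one peels off $j$ using subadditivity and the observation that either $j \notin Y_i(v_i)$ (so removing $j$ does not decrease $\hat v_i(S') - \sum_{k \in S'} p_k$, since $p_j \geq 0$) or $j \in Y_i(v_i)$ and $v_i(j) < p_j + \tau_i$ (so removing $j$ costs at most $\tau_i$). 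Either way, for every $S' \subseteq S$ the value $\hat v_i(S') - \sum_{k \in S'} p_k$ is at most $\mu_i(t'_i, S) + \tau_i$, giving $\mu_i(t_i, S) \leq \mu_i(t'_i, S) + \tau_i$, and symmetry finishes the claim. You should include this case to fully justify the lemma as it is used in the paper.
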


\begin{lemma}[Cai and Zhao~\cite{CaiZ17}]\label{lem:cz-lemma21}
For any choice of item prices $p_j$ and subadditive functions $v_i$,
\[
	\sum_{i=1}^{n} \sum_{j \in M} \max\{\beta_{ij}, p_j+\tau_i\} \cdot \Prr{v_i}{v_i(j) \geq \max\{\beta_{i,j}, p_j+\tau_i\}} \leq \frac{2}{b(1-b)} \cdot \textsc{PostRev}.
\]
\end{lemma}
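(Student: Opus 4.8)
The plan is to exhibit a single rationed sequential posted-price mechanism (RSPM) whose expected revenue is at least a constant fraction of the left-hand side; since $\textsc{PostRev}$ is by definition the best achievable RSPM revenue, the claimed inequality follows. Write $q_{ij} = \max\{\beta_{ij}, p_j + \tau_i\}$ for the quantity appearing in the statement, and let $M'$ be the RSPM that approaches the buyers in the order $1, \dots, n$, offers buyer $i$ each still-available item $j$ at the personalized price $q_{ij}$, and lets buyer $i$ buy at most one item --- a utility-maximizing affordable available item, or nothing.

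The first step is to lower bound the probability that item $j$ is still available when buyer $i$ arrives. In $M'$ a buyer $k$ buys item $j$ only if $v_k(j) \ge q_{kj} \ge \beta_{kj}$, so the event that $j$ was sold before buyer $i$'s turn is contained in $\{\exists\, k \neq i : v_k(j) \ge \beta_{kj}\}$, which by a union bound and property~3 of Lemma~\ref{lem:cz} has probability at most $b$; and since this event depends only on $(v_k)_{k \neq i}$, it is independent of $v_i$. Writing $A_i$ for the random set of items available to buyer $i$, we thus have $\Pr{j \in A_i} \ge 1 - b$, with $A_i$ independent of $v_i$. The second step analyzes buyer $i$'s purchase conditioned on $A_i$. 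Set $X_j = \mathbf{1}[v_i(j) \ge q_{ij}]$ for $j \in A_i$. Here the hypothesis that valuations are subadditive over independent items enters: the singleton values $v_i(j)$, and hence the indicators $X_j$, are mutually independent. By the definition of $\tau_i$ we may take $\sum_j \Pr{v_i(j) \ge q_{ij}} \le 1/2$ (a standard perturbation --- replacing $\tau_i$ by $\tau_i + \epsilon$ and sending $\epsilon \to 0$ at the end --- disposes of possible atoms at $q_{ij}$). Whenever exactly one $j \in A_i$ has $X_j = 1$, buyer $i$ buys that item and pays $q_{ij}$; by independence this event has probability $\Pr{X_j = 1} \prod_{j' \in A_i \setminus \{j\}} (1 - \Pr{X_{j'} = 1}) \ge \tfrac{1}{2} \Pr{X_j = 1}$, using $\sum_{j'} \Pr{X_{j'} = 1} \le 1/2$ and $\prod(1 - p) \ge 1 - \sum p$. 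Hence the expected revenue from buyer $i$ given $A_i$ is at least $\tfrac{1}{2} \sum_{j \in A_i} q_{ij} \Pr{v_i(j) \ge q_{ij}}$; taking expectations over $A_i$ and using $\Pr{j \in A_i} \ge 1 - b$ together with the independence of $A_i$ from $v_i$ gives expected revenue from buyer $i$ at least $\tfrac{1-b}{2} \sum_j q_{ij} \Pr{v_i(j) \ge q_{ij}}$.

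Summing over $i$ and using $\textsc{PostRev} \ge \Ex{}{\text{revenue of } M'}$ yields $\sum_{i,j} q_{ij} \Pr{v_i(j) \ge q_{ij}} \le \tfrac{2}{1-b}\,\textsc{PostRev}$, which is in fact a touch stronger than the stated $\tfrac{2}{b(1-b)}\,\textsc{PostRev}$ since $b \in (0,1)$. The step I expect to be most delicate is the rationing bound: one must verify that the indicators $X_j$ are genuinely independent --- this is exactly where the ``independent items'' hypothesis is used, and the argument breaks for arbitrary correlated subadditive valuations --- and one must handle the infimum defining $\tau_i$ via the standard $\epsilon$-perturbation. Everything else reduces to a union bound plus the observation that item availability for buyer $i$ depends only on the other buyers' valuations.
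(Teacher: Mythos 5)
The paper does not actually prove this lemma --- it is imported verbatim from Cai and Zhao~\cite{CaiZ17} --- so there is no in-paper proof to compare against; what you have written is a self-contained argument. Your core construction (the RSPM $M'$ that posts personalized singleton prices $q_{ij}=\max\{\beta_{ij},p_j+\tau_i\}$, combined with the union bound $\Prr{}{j\notin A_i}\le b$ from property~3 of Lemma~\ref{lem:cz} and the ``unique affordable item'' lower bound using $\prod(1-p_{j'})\ge 1-\sum p_{j'}$) is exactly the kind of argument that underlies Cai and Zhao's bound, and the conclusion $\sum_{i,j}q_{ij}\Prr{v_i}{v_i(j)\ge q_{ij}}\le \tfrac{2}{1-b}\textsc{PostRev}$ is indeed at least as strong as the stated $\tfrac{2}{b(1-b)}\textsc{PostRev}$ for $b\in(0,1)$ --- and it is in fact the factor that the paper actually uses downstream in the chain of inequalities in Lemma~\ref{lemma:welfare-with-without-hat}, so you are not inadvertently proving too much.

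The one step you flag as delicate does, however, contain a real gap that your $\epsilon$-perturbation does not close. You rely on $\sum_{j}\Prr{v_i}{v_i(j)\ge q_{ij}}\le 1/2$, but since $x\mapsto\sum_j\Prr{}{v_i(j)\ge\max\{\beta_{ij},p_j+x\}}$ is left-continuous and non-increasing, the infimum defining $\tau_i$ only guarantees the bound for $x>\tau_i$, and atoms at $p_j+\tau_i$ can make the sum at $x=\tau_i$ strictly larger than $1/2$ (indeed as large as $m$ in degenerate cases, in which case the factor $1-\sum_{j'}\Prr{}{X_{j'}=1}$ is not bounded below by $1/2$ and can be negative). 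Running the argument with $\tau_i+\epsilon$ and letting $\epsilon\downarrow 0$ produces the bound with $\Prr{v_i}{v_i(j)>q_{ij}}$, \emph{not} $\Prr{v_i}{v_i(j)\ge q_{ij}}$, because $\Prr{}{v_i(j)\ge q_{ij}^\epsilon}\to\Prr{}{v_i(j)>q_{ij}}$ for items with $q_{ij}=p_j+\tau_i$; the atom mass is lost in the limit. To actually obtain the stated $\ge$ form one needs something more than the perturbation as you describe it --- e.g., a continuity assumption on the marginal distributions of $v_i(j)$, a randomized tie-breaking rule in the RSPM that extracts revenue from the atom, or (what amounts to the same thing) the observation that the whole core-within-the-core construction can be carried out with $\tau_i+\epsilon$ in place of $\tau_i$ throughout, so that only the $>$ version of the lemma is ever needed. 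This is a standard measure-theoretic nuisance rather than a conceptual flaw, but as written the last sentence of the argument does not deliver the inequality in the form stated.
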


\begin{lemma}[Cai and Zhao~\cite{CaiZ17}] \label{lem:tau-and-postrev}
For any choice of item prices $p_j$ and subadditive functions $v_i$,
$\sum_i \tau_i \leq \frac{4}{1-b} \cdot \textsc{PostRev}$.
\end{lemma}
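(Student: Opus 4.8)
The plan is to construct an explicit rationed sequential posted-price mechanism (RSPM) whose expected revenue is at least $\frac{1-b}{4}\sum_i \tau_i$; since $\textsc{PostRev}$ is by definition the optimal revenue attainable by an RSPM, this immediately yields $\sum_i \tau_i \le \frac{4}{1-b}\cdot\textsc{PostRev}$. The construction and its analysis run parallel to (and reuse the constants of) the argument behind Lemma~\ref{lem:cz}(2), with the thresholds $\beta_{ij}+x$ there replaced by $\max\{\beta_{ij},p_j+x\}$ here.

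Write $\tau_i^+=\max\{\tau_i,0\}$; since $\tau_i\le\tau_i^+$ it suffices to bound $\sum_i\tau_i^+$. The RSPM approaches the buyers in the given order and offers buyer $i$ every still-available item $j$ at the personalized static price $t_{ij}:=\max\{\beta_{ij},\,p_j+\tau_i^+\}$; being rationed, buyer $i$ then takes a single utility-maximizing item among those available and affordable (or nothing). Two immediate facts: \emph{(i)} every sale to buyer $i$ collects at least $t_{ij}\ge p_j+\tau_i^+\ge\tau_i^+$; and \emph{(ii)} a buyer $k$ ever buys item $j$ only when $v_k(j)\ge t_{kj}\ge\beta_{kj}$, so by a union bound over $k<i$ and Lemma~\ref{lem:cz}(3), item $j$ is still available when buyer $i$ arrives with probability at least $1-\sum_{k\neq i}\Prr{v_k}{v_k(j)\ge\beta_{kj}}\ge 1-b$.

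Now fix a buyer $i$ with $\tau_i>0$. The event that $j$ is available for $i$ is a function of $v_1,\dots,v_{i-1}$ only, hence independent of $v_i$; and the events $W_j:=\{v_i(j)\ge t_{ij}\}$ are mutually independent across $j$ because the valuations are subadditive over independent items (so the $v_i(j)$ are independent across $j$). Conditioning on the random set $A$ of items still available when $i$ arrives, buyer $i$ buys something with probability $1-\prod_{j\in A}\bigl(1-\Pr{W_j}\bigr)\ge 1-\exp\bigl(-\sum_{j\in A}\Pr{W_j}\bigr)$; taking expectations over $A$, applying Jensen's inequality to the convex function $e^{-x}$, and using $\Ex{}{\sum_{j\in A}\Pr{W_j}}=\sum_j\Pr{j\in A}\Pr{W_j}\ge(1-b)\sum_j\Pr{W_j}$ gives
\[
  \Pr{i\text{ buys something}}\ \ge\ 1-\exp\Bigl(-(1-b)\textstyle\sum_j\Pr{v_i(j)\ge t_{ij}}\Bigr).
\]
By the defining infimum and one-sided continuity of $x\mapsto\sum_j\Pr{v_i(j)\ge\max\{\beta_{ij},p_j+x\}}$ — or, to avoid continuity hypotheses, by running the mechanism with $\tau_i-\epsilon$ and letting $\epsilon\downarrow 0$ — we have $\sum_j\Pr{v_i(j)\ge t_{ij}}=\sum_j\Pr{v_i(j)\ge\max\{\beta_{ij},p_j+\tau_i\}}\ge\tfrac12$. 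Since $1-e^{-x}\ge x/2$ on $[0,1]$ and $(1-b)/2\le\tfrac12$, we get $\Pr{i\text{ buys something}}\ge\frac{1-b}{4}$, so by \emph{(i)} buyer $i$ contributes expected revenue at least $\frac{1-b}{4}\tau_i^+$. Buyers with $\tau_i\le 0$ contribute at least $0=\frac{1-b}{4}\tau_i^+$, so summing over all $i$ lower-bounds the RSPM's revenue by $\frac{1-b}{4}\sum_i\tau_i^+\ge\frac{1-b}{4}\sum_i\tau_i$, as required.

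The step I expect to require the most care is the probabilistic estimate in the third paragraph. It hinges on two independence properties that must be invoked precisely: availability of items (a function of the earlier buyers' valuations) is independent of buyer $i$'s realized valuation, and item-independence makes the events $W_j$ mutually independent, so that after conditioning on the available set the product/Chernoff bound applies; and it is the infimum characterization of $\tau_i$ that turns the threshold into the usable quantitative inequality $\sum_j\Pr{W_j}\ge\tfrac12$. Everything else is bookkeeping. (A shorter but lossier alternative is to combine Lemma~\ref{lem:cz-lemma21} with $\max\{\beta_{ij},p_j+\tau_i\}\ge\tau_i$ and $\sum_j\Pr{v_i(j)\ge\max\{\beta_{ij},p_j+\tau_i\}}\ge\tfrac12$, but that route loses an extra factor $1/b$.)
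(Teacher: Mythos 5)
The paper cites this lemma to Cai and Zhao and gives no proof of its own, so you are constructing an argument from scratch; the overall plan (exhibit an RSPM with per-buyer prices $\max\{\beta_{ij},p_j+\tau_i^+\}$, use Lemma~\ref{lem:cz}(3) for availability, and lower-bound the probability that buyer $i$ purchases by a constant times $(1-b)$) is the right one. However, the central probabilistic estimate contains a genuine error: Jensen's inequality is applied in the wrong direction. You correctly obtain
\[
\Pr{i\text{ buys}} \;\geq\; \Ex{A}{1-e^{-\sum_{j\in A}\Pr{W_j}}},
\]
but since $e^{-x}$ is convex, Jensen gives $\Ex{A}{e^{-Z}}\geq e^{-\Ex{A}{Z}}$, i.e.\ $\Ex{A}{1-e^{-Z}}\leq 1-e^{-\Ex{A}{Z}}$. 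Passing to $1-e^{-(1-b)\sum_j\Pr{W_j}}$ therefore gives an \emph{upper} bound on your lower bound, not a lower bound, and the chain of inequalities breaks. (A concrete counterexample to the claimed direction: if $A=\emptyset$ with probability $1/2$ and $A=M$ with $\sum_j\Pr{W_j}$ huge otherwise, the left side is about $1/2$ while the right side is nearly $1$.)

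The fix is to condition on the other independent piece of randomness instead. Let $\mathcal{W}=\{j: v_i(j)\geq t_{ij}\}$, a function of $v_i$ alone and hence independent of the available set $A$. Buyer $i$ purchases iff $A\cap\mathcal{W}\neq\emptyset$. For any fixed nonempty $\mathcal{W}$ and any $j\in\mathcal{W}$, the event $\{A\cap\mathcal{W}=\emptyset\}$ implies $\{j\notin A\}$, so by your availability bound $\Pr{A\cap\mathcal{W}=\emptyset}\leq b$. Integrating over $\mathcal{W}$ gives $\Pr{i\text{ buys}}\geq(1-b)\Pr{\mathcal{W}\neq\emptyset}$, and by item-independence $\Pr{\mathcal{W}\neq\emptyset}=1-\prod_j(1-\Pr{W_j})\geq 1-e^{-\sum_j\Pr{W_j}}\geq 1-e^{-1/2}>1/4$, which recovers $\Pr{i\text{ buys}}\geq\frac{1-b}{4}$ and lets the rest of your argument go through unchanged. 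The remaining pieces of the proof (the $\tau_i^+$ trick, the $\geq 1/2$ mass at $\tau_i$ via left-continuity, and the union-bound availability estimate from Lemma~\ref{lem:cz}(3)) are fine.
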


The crux and key innovation of our analysis is now the following lemma, which establishes the existence of an ASPE with an appropriate approximation guarantee.

Our mechanism is actually from the same class of ASPE mechanisms as the mechanism of Cai and Zhao~\cite{CaiZ17}, but uses a different set of item prices. This class of ASPE mechanisms, less call them \emph{median ASPE}, is parametrized by a set of of item prices $p_j$ and the mechanisms within that class proceed as follows.
The agents are approached sequentially in a fixed order.  We write $S_i(\bv)$ for the set of items still available when agent $i$ is approached.  (Note that $S_i(\bv)$ only depends on the entries of $\bv$ corresponding to agents that arrived before $i$.)  Agent $i$ faces an entry fee equal to the median (over randomness in $v_i$) of $\mu_i(v_i, S_i(\bv))$.  If agent $i$ pays the entry fee, she can then purchase any desired subset of items at prices $p_j$.

\begin{lemma}\label{lem:pp-with-entry-fees}
There is a set of item prices $p_j$ such that the median ASPE with these prices achieves expected revenue at least
\[
\textsc{APostEnRev} \geq \frac{1}{4\alpha} \cdot \textsc{Core} - \frac{1 + 6b}{2b (1 - b)} \cdot \textsc{PostRev}.
\]
with $\alpha = O(\log \log m)$.
\end{lemma}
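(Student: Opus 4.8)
The plan is to reduce the revenue guarantee to a welfare-type guarantee for the restricted valuations $\hat v_i$ under the fixed item prices $p_j$, much as in Cai and Zhao, but feeding in our new $O(\log\log m)$-approximate prophet inequality (the Bayesian Key Lemma, Lemma~\ref{lem:key-lemma-bayesian}) in place of their $O(\log m)$ one. The first step is to set up the prices: apply Lemma~\ref{lem:key-lemma-bayesian} to the distributions of the restricted valuations $\hat v_i$ to obtain anonymous item prices $p_j$ together with distributions $\lambda^{i,\bv}$ over subsets, so that for every $T\subseteq M$,
\[
\sum_{j\in T}p_j + \Ex{\bv}{\sum_{i}\sum_{S}\lambda^{i,\bv}_S\left(\hat v_i(S\setminus T)-\sum_{j\in S}p_j\right)} \;\ge\; \frac{1}{\alpha}\Ex{\bv}{\bv(\OPT(\bv))} \;\ge\; \frac{1}{\alpha}\cdot\textsc{Core},
\]
where the last inequality uses that $\hat v_i \ge v'_i$ pointwise (removing high-valued items with respect to $p_j$ removes no more than removing them with respect to $\beta_{ij}$ is not literally true, so one actually argues that the optimal welfare for $\hat v$ is at least $\textsc{Core}$, the welfare of the specific allocation $\sigma$ under $v'$, because $v'_i \le \hat v_i$ or by a direct comparison). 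Note that these $p_j$ are now a fixed object, so the definitions of $\tau_i$, $Y_i$, $\hat v_i$, and $\mu_i$ are all pinned down and Lemmas on the core-within-the-core apply.

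The second step is the entry-fee extraction. Following the standard argument (Feldman–Gravin–Lucier / Cai–Zhao), in the median ASPE with prices $p_j$, agent $i$ faces entry fee $\delta_i(\bv_{<i}) = \mathrm{median}(\mu_i(v_i, S_i(\bv)))$ where $S_i(\bv)$ is the surviving set. The revenue decomposes into collected entry fees plus posted-price revenue $\sum_{j\in\SOLD}p_j$. For the entry fees: with probability at least $1/2$ over $v_i$, agent $i$ pays $\delta_i$; and because $\mu_i$ is $\tau_i$-Lipschitz and subadditive (Cai–Zhao lemma), the median is close to the mean up to an additive $O(\tau_i)$ term, and the expected surplus the agent could extract is $\Ex{v_i}{\mu_i(v_i,S_i(\bv))}$. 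The key point is that the set $S_i(\bv)$ of \emph{already-sold} items plays exactly the role of the set $T$ in Lemma~\ref{lem:key-lemma-bayesian}: agent $i$ could draw $S\sim\lambda^{i,\bv}$ and buy $S\setminus (M\setminus S_i(\bv))$, so
\[
\Ex{v_i}{\mu_i(v_i,S_i(\bv))} \;\ge\; \Ex{v_i}{\sum_S \lambda^{i,\bv}_S\left(\hat v_i(S\cap S_i(\bv)) - \sum_{j\in S\cap S_i(\bv)}p_j\right)}.
\]
Summing over $i$, taking expectations over the sold sets, and invoking the Key Lemma inequality with $T = M\setminus S_i(\bv)$ (the sold items), the total surplus is at least $\frac{1}{\alpha}\textsc{Core} - \Ex{}{\sum_{j\in\SOLD}p_j}$ up to the posted-price revenue terms, which cancel against the posted-price revenue actually collected — this is the same telescoping between utility and revenue as in the proof of Theorem~\ref{thm:upper-bound}. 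The factor-$2$ loss from taking medians instead of means, and the additive $O(\sum_i\tau_i)$ loss from the Lipschitz-to-median comparison, account for the $\frac14$ and the $\textsc{PostRev}$ subtraction: $\sum_i\tau_i \le \frac{4}{1-b}\textsc{PostRev}$ by Lemma~\ref{lem:tau-and-postrev}, and the $\max\{\beta_{ij},p_j+\tau_i\}$ tail terms are controlled by Lemma~\ref{lem:cz-lemma21}, together yielding the $\frac{1+6b}{2b(1-b)}\textsc{PostRev}$ correction.

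The main obstacle I anticipate is the bookkeeping in the entry-fee step: relating the median entry fee actually charged to the expected extractable surplus $\Ex{v_i}{\mu_i(v_i, S_i(\bv))}$ requires the $\tau_i$-Lipschitz anti-concentration argument (the probability that $\mu_i$ deviates from its median by more than $t$ decays, so the median is within an additive $O(\tau_i)$ of a suitable truncated mean), and then these additive $\tau_i$ errors must be summed and charged to $\textsc{PostRev}$ via Lemma~\ref{lem:tau-and-postrev} — while simultaneously handling the items that are removed when passing from $v_i$ to $\hat v_i$ (the ``tail'' of the core-within-the-core), whose lost value is exactly what Lemma~\ref{lem:cz-lemma21} bounds. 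Threading the exact constants so that everything collapses to $\frac{1}{4\alpha}\textsc{Core} - \frac{1+6b}{2b(1-b)}\textsc{PostRev}$ is delicate but follows the Cai–Zhao template once the prophet inequality is plugged in with set $T$ instantiated as the sold items; the conceptual content is entirely in having a Key Lemma that tolerates an arbitrary ``already sold'' set $T$, which we do.
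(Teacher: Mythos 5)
Your proposal has a genuine circularity and also misses the key technical extension the paper makes to its prophet inequality.

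\textbf{Circularity.} You propose to ``apply Lemma~\ref{lem:key-lemma-bayesian} to the distributions of the restricted valuations $\hat v_i$ to obtain anonymous item prices $p_j$.'' But the definition of $\hat v_i$ depends on $p_j$: the threshold $\tau_i$ and the set $Y_i(v_i)=\{j:v_i(j)<p_j+\tau_i\}$ are both functions of the prices, and $\hat v_i(S)=v_i(S\cap Y_i(v_i))$. You cannot invoke a prophet inequality for the $\hat v_i$ to \emph{produce} the $p_j$ that define the $\hat v_i$. The paper resolves this by invoking its key lemma for the valuations $v'_i$ (which are defined only via the thresholds $\beta_{ij}$ and $c_i$ from Lemma~\ref{lem:cz}, independent of $p_j$), and then separately relating the resulting surplus under $v'_i$ to that under $\hat v_i$ via Lemma~\ref{lemma:welfare-with-without-hat}, paying a $\frac{2(1+b)}{b(1-b)}\textsc{PostRev}$ penalty. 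Your footnote ``$v'_i\le\hat v_i$ or by a direct comparison'' does not fix this: neither $v'_i\le\hat v_i$ nor $\hat v_i\le v'_i$ holds pointwise (the two restrictions prune by incomparable thresholds), and the needed comparison is exactly the content of Lemma~\ref{lemma:welfare-with-without-hat}, which you have not established.

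\textbf{Missing the constrained key lemma.} You invoke the unconstrained Lemma~\ref{lem:key-lemma-bayesian}, but the paper's proof requires the strictly stronger Lemma~\ref{lem:key-lemma-revenue}, which yields $\lambda\in\Lambda(\mathbf z)$ with $z_{ij}(v_i)=\pi^i_j(v_i)$. This constraint is not cosmetic: it is what lets Lemma~\ref{lemma:welfare-with-without-hat} bound the error term $\sum_S\lambda^{i,\bv}_S\bigl(v'_i(S\setminus T)-\hat v_i(S\setminus T)\bigr)$ by $\sum_j \pi^i_j(v_i)(\cdots)$ and hence charge it to $\textsc{PostRev}$ via parts 2--4 of Lemma~\ref{lem:cz} and Lemma~\ref{lem:cz-lemma21}. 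Without the $\pi^i_j(v_i)$ marginal constraint on $\lambda$, that charging argument does not go through, and the ``tail'' loss from replacing $v_i$ by $\hat v_i$ is uncontrolled. It is also the constrained lemma's benchmark $\Ex{\bv}{\max_{\overline\lambda\in\Lambda(\mathbf z)}\sum_{i,S}\overline\lambda^{i,\bv}_S v'_i(S)}\ge\textsc{Core}$ (witnessed by $\overline\lambda=\sigma$) that ties the guarantee to $\textsc{Core}$ rather than to some unrelated welfare quantity. Your second step (entry-fee extraction via the median, Lipschitz comparison, $\sum_i\tau_i$ charge via Lemma~\ref{lem:tau-and-postrev}, and the telescoping against posted-price revenue) matches the paper's Lemmas~\ref{lem:lower-bound-on-core-utilities} and \ref{lem:entry-fees}, but it rests on the first step, which as written does not stand.
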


Before we prove this lemma let's first see how it implies the result that we want to prove.

\begin{proof}[Proof of Theorem~\ref{thm:revenue}]
Let $b \in (0,1)$ and $c>0$ be arbitrary constants. If $\textsc{PostRev} \geq c/\alpha \cdot \textsc{Core}$ then by Lemma~\ref{lem:cz},
\begin{align*} 
\textsc{PostRev} \geq \frac{1}{\frac{8}{1-b} + 12 + 4 \cdot \frac{\alpha}{c}} \cdot \textsc{Rev}. 
\end{align*}
Otherwise, Lemma~\ref{lem:pp-with-entry-fees} combined with Lemma~\ref{lem:cz} shows that, 
\begin{align*}
\textsc{APostEnRev} 
&\geq \left(\frac{1}{4\alpha} - \frac{1+6b}{2b(1-b)} \cdot \frac{c}{\alpha}\right) \cdot \textsc{Core} 
\geq \frac{\frac{1}{4\alpha} - \frac{1+6b}{2b(1-b)} \cdot \frac{c}{\alpha}}{\left(\frac{8}{1-b}+12\right) \cdot \frac{c}{\alpha} + 4} \cdot \textsc{Rev}.
\end{align*}
Now let's choose $c$ in dependence of $b$ such that
\begin{align*}
\frac{1}{\frac{8}{1-b} + 12 + 4 \cdot \frac{\alpha}{c}} = \frac{\frac{1}{4\alpha} - \frac{1+6b}{2b(1-b)} \cdot \frac{c}{\alpha}}{\left(\frac{8}{1-b}+12\right) \cdot \frac{c}{\alpha} + 4} 
\end{align*}
This gives $c(b) = \frac{1}{4} \cdot \frac{2b(1-b)}{8b-2b^2+1}$. 
Then,
\begin{align*}
\max\{\textsc{PostRev}, \textsc{APostEnRev}\} \geq \frac{1}{\frac{8}{1-b} + 12 + 4 \cdot \frac{\alpha}{c(b)}} \cdot \textsc{Rev} = \frac{1}{\frac{8}{1-b}+12+16 \alpha \cdot \frac{8b-2b^2+1}{2b(1-b)}} \cdot \textsc{Rev}.
\end{align*} 
Choosing, e.g., $b = 1/2$ we obtain the claim. 
\end{proof}

\subsection{Proof of Key Lemma, Pricing Beyond Additive Supporting Functions}
\label{sec:revenue-argument}

It remains to show Lemma~\ref{lem:pp-with-entry-fees}, which boils down to finding appropriate item prices. Cai and Zhao \cite{CaiZ17} use the ``usual'' approach of approximating subadditive valuations with XOS valuations \cite{BhawalkarR11,Dobzinski07}, and using this to set item prices. We show how to leverage our novel approach to item pricing to improve the bound.

\subsubsection{Crafting the Item Prices}

We would like to find item prices that guide the allocation toward one that approximates the allocation that defines $\textsc{Core} = \Ex{}{\sum_{i,S} \sigma_S^i v'_i(S)}$.  
To this end we will extend Lemma~\ref{lem:key-lemma-bayesian} (our key lemma in the incomplete information case) to allow arbitrary constraints on the marginal probability of allocating each item to each agent.  Given $\mathbf{z} = \{z_{ij}\}$, where each $z_{ij}$ is a function with $z_{ij}(v_i) \in [0,1]$ for each $i$, $j$, and $v_i$, we'll write 
\[ 
	\Lambda(\mathbf{z}) = \left\{ \{\lambda^\bv\}_\bv \colon \Ex{\bv_{-i}}{\sum_{S \ni j} \lambda^{i,\bv}_{S}} \leq z_{ij}(v_i) \ \forall\ i,j,v_i \right\}. 
\]
That is, $\Lambda(\mathbf{z})$ is the set of all collection of $\lambda^\bv$'s satisfying the upper bounds described by $\mathbf{z}$.

Analogous to Lemma~\ref{lem:key-lemma-bayesian} we can now show the following lemma. The proof follows the proof of Lemma~\ref{lem:key-lemma-bayesian} with minor changes. See Appendix~\ref{app:proof-key-lemma-revenue}.

\begin{lemma}\label{lem:key-lemma-revenue}
For every independent probability distribution $\D = \prod_i \D_i$ over subadditive valuation functions, and $z_{ij}(v_i) \in [0,1]$ for each $i$ and $j$ and $v_i$, there exist prices $p_j$ for $j \in M$ and probability distributions $\lambda^{i,\bv}$ over $S \subseteq M$ for all $i$ and $\bv$ such that $\lambda \in \Lambda(\mathbf{z})$ and, for all $T \subseteq M$,
\[
	\sum_{j \in T} p_j + \Ex{\bv}{\sum_{i=1}^{n} \sum_{S} \lambda^{i,\bv}_S \left(v'_i(S \setminus T) - \sum_{j \in S} p_j\right)} \geq \frac{1}{\alpha} \cdot \Ex{\bv}{ \max_{\overline{\lambda} \in \Lambda(\mathbf{z})} \sum_{i,S} \overline{\lambda}^{i,\bv}_S v'_i(S)}, 
\]
where $\alpha \in O(\log \log m)$.
\end{lemma}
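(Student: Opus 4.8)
The plan is to follow the proof of Lemma~\ref{lem:key-lemma-bayesian} almost line for line, with three substitutions. First, every occurrence of $v_i$ is replaced by the transformed valuation $v'_i$; by the Cai--Zhao lemma $v'_i$ is still subadditive (indeed subadditive over independent items), so every step that used only subadditivity transfers unchanged. Second, the marginal constraint on the protagonist's distribution becomes membership in $\Lambda(\mathbf{z})$, and the ``equal-marginals'' sets $\Delta(q,\dots,q)$ of the complete-information proof are replaced by the scaled sets $\Lambda(q\,\mathbf{z})$. Third, the welfare benchmark $\Ex{\bv}{\bv(\OPT(\bv))}$ is replaced by the quantity $\mathrm{OPT}_{\mathbf{z}}$ appearing on the right-hand side of the lemma, namely the best expected $v'$-welfare of a collection $\overline{\lambda}\in\Lambda(\mathbf{z})$.

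Concretely, for a fixed feasible profile $\lambda\in\Lambda(\mathbf{z})$ I would first observe that the existence of prices $p_j$ satisfying the asserted inequality for every $T\subseteq M$ is the feasibility (nonnegative optimum) of a linear program in the price variables together with a free slack variable; strong duality converts this into a lower bound of $\mathrm{OPT}_{\mathbf{z}}/\alpha$ on the value of a zero-sum game in which the protagonist commits to $\lambda$ and the antagonist answers with a profile $\mu=\{\mu^{i,\bv}\}$ whose interim item-marginals are dominated by those of $\lambda$ --- so if the protagonist restricts to $\lambda\in\Lambda(q\,\mathbf{z})$, the antagonist lies in $\Lambda(q\,\mathbf{z})$ as well. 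I would then restrict the protagonist further to profiles in $\Lambda(q\,\mathbf{z})$ whose item-marginals depend only on $v_i$. For such profiles, conditional on $\bv$ the independent draws $S$ and $T$ make $S\cap T$ a distribution whose interim item-marginals are at most $q^2 z_{ij}(v_i)$ --- here we use that one factor of the product $\Pr{j\in S\mid\bv}\cdot\Pr{j\in T\mid\bv}$ is $v_i$-measurable, so that the expectation of the product factors appropriately. Combining this with $v'_i(S\setminus T)\ge v'_i(S)-v'_i(S\cap T)$ yields, exactly as in Lemma~\ref{lem:loglog-complete}, that the game value is at least $\Ex{\bv}{f^{\bv}(q)}-\Ex{\bv}{f^{\bv}(q^2)}$, where $f^{\bv}(q)=\max_{\lambda\in\Lambda(q\,\mathbf{z})}\sum_{i,S}\lambda^{i,\bv}_S v'_i(S)$.

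Averaging these bounds over $q=2^{-2^k}$ for $k=0,\dots,\ell$ with $\ell=\log\log m$ makes the sum telescope, so $\max_q(\text{game value})\ge\frac{1}{\ell+1}\bigl(\Ex{\bv}{f^{\bv}(1/2)}-\Ex{\bv}{f^{\bv}(1/m^2)}\bigr)$. Here $\Ex{\bv}{f^{\bv}(1/2)}\ge\tfrac12\,\mathrm{OPT}_{\mathbf{z}}$, since running the $\mathrm{OPT}_{\mathbf{z}}$-optimal $\overline{\lambda}$ with probability $\tfrac12$ and the empty allocation otherwise lies in $\Lambda(\tfrac12\mathbf{z})$ and halves the objective; and $\Ex{\bv}{f^{\bv}(1/m^2)}\le\tfrac1m\,\mathrm{OPT}_{\mathbf{z}}$, using $v'_i(S)\le\sum_{j\in S}v'_i(j)$ together with the marginal bound $1/m^2$. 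This delivers $\alpha=O(\log\log m)$, and since $\Lambda(q\,\mathbf{z})\subseteq\Lambda(\mathbf{z})$ the winning profile $\lambda$ produced by the argument is in $\Lambda(\mathbf{z})$, as required.

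I expect the main obstacle to be exactly the point that already required ``additional care'' in Lemma~\ref{lem:key-lemma-bayesian}: because the Bayesian game's payoff is an expectation over $\bv$ and both players' strategies depend on $\bv$, one cannot invoke a minimax theorem pointwise, and one must instead produce the lower bound on the aggregate game value through the per-realization quantities $f^{\bv}(\cdot)$. The new wrinkle is that the argument must carry the per-agent interim bounds $\Ex{\bv_{-i}}{\sum_{S\ni j}\lambda^{i,\bv}_S}\le z_{ij}(v_i)$ through both the duality and the min--max steps, and --- crucially for the $S\cap T$ marginal estimate --- it must restrict the protagonist to strategies with $v_i$-measurable item-marginals, so that the interim marginal of $S\cap T$ is genuinely $q^2 z_{ij}(v_i)$ rather than merely $q\,z_{ij}(v_i)$. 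Once the constraint sets are set up this way, checking the boundary estimates for $f^{\bv}(1/2)$ and $f^{\bv}(1/m^2)$ against $\mathrm{OPT}_{\mathbf{z}}$ instead of $\bv(\OPT(\bv))$ is routine.
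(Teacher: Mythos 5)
Your plan follows the paper's own route: reduce via LP duality to a zero-sum game, restrict the protagonist to a $q$-scaled version of $\Lambda(\mathbf z)$, bound the $S\cap T$ intersection marginals by $q^2$ (scaled), telescope over $q=2^{-2^k}$, and control the endpoints $f^{\bv}(1/2)$ and $f^{\bv}(1/m^2)$. The substitutions ($v_i\to v'_i$, benchmark $\to\OPT^{\mathbf z}$, marginals scaled by $z_{ij}$) are exactly the ones the paper makes. So in spirit the proposal is correct.

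Two points, though, are worth flagging because they would cause trouble if written out in full. First, the antagonist in the LP dual is a \emph{single} distribution $\mu$ over $T\subseteq M$, not a profile $\{\mu^{i,\bv}\}$: the lemma asserts the inequality ``for all $T\subseteq M$'', so the primal has one constraint per $T$ and the dual one variable $\mu_T$ per $T$, with no indexing by $i$ or $\bv$. The dual feasibility condition is $\sum_{T\ni j}\mu_T\le\Ex{\bv}{\sum_i\sum_{S\ni j}\lambda^{i,\bv}_S}$, which pins $\mu$'s marginal on each item by the \emph{ex-ante aggregate} marginal of $\lambda$ --- not by a per-$(i,\bv)$ bound. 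So the claim that ``the antagonist lies in $\Lambda(q\mathbf z)$ as well'' is a type error.

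Second, and more substantively, your restriction to protagonist profiles whose item-marginals are $v_i$-measurable is not what the paper does and introduces a gap: the $\OPT^{\mathbf z}$-optimal profile $\overline{\lambda}$, needed for the $f^{\bv}(1/2)\ge\tfrac12\OPT^{\mathbf z}$ step, has no reason to have $v_i$-measurable marginals, so after the restriction the lower-endpoint bound no longer follows. The paper sidesteps this entirely by passing to the per-realization game $g^{\bv}(q)$, where the protagonist's set $\Gamma(q)=\{\nu:\sum_i\sum_{S\ni j}\nu^i_S\le q\cdot z_j(\bv)\}$ is a \emph{pointwise aggregate} constraint for fixed $\bv$; within $g^{\bv}$ the $S\cap T$ marginal estimate $\sum_i\Pr{j\in S_i}\cdot\Pr{j\in T}\le q^2 z_j(\bv)$ needs only the independence of $S$ and $T$ given $\bv$, with no measurability hypothesis. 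You do gesture at the $g(q)\ge\Ex{\bv}{g^{\bv}(q)}$ step (``produce the lower bound through the per-realization quantities $f^{\bv}(\cdot)$''), but then you keep the interim formulation for the intersection argument instead of doing it pointwise in $\bv$; making the pointwise constraint set $\Gamma(q)$ the object you manipulate, as the paper does, removes the measurability complication you introduced.
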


Invoke this lemma with $z_{ij}(v_i) = \pi_j^i(v_i)$, resulting in some $p_j$ and $\lambda$.  Note then that the RHS of Lemma~\ref{lem:key-lemma-revenue}, $\Ex{\bv}{ \max_{\overline{\lambda} \in \Lambda(\mathbf{z})} \sum_{i,S} \overline{\lambda}^{i,\bv}_S v'_i(S)}$, is at least $\textsc{Core}$.  

We'd now like to claim that if we replace $v'_i$ by $\hat{v}_i$ in Lemma~\ref{lem:key-lemma-revenue}, this does not change the welfare bound by too much.

\begin{lemma}
\label{lemma:welfare-with-without-hat}
For any $\{\lambda^\bv\}_\bv \in \Lambda(\mathbf{z})$ with $z_{ij}(v_i) = \pi_j^i(v_i)$ and for all $T \subseteq M$, 
\[ \Ex{\bv}{\sum_{i=1}^{n} \sum_{S} \lambda^{i,\bv}_S v'_i(S \setminus T)} - \Ex{\bv}{\sum_{i=1}^{n} \sum_{S} \lambda^{i,\bv}_S \hat{v}_i(S \setminus T)} \leq \frac{2 (1+b)}{b (1-b)} \cdot \textsc{PostRev} \]
\end{lemma}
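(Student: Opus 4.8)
The plan is to pass from the set‑level inequality to an item‑by‑item bound via subadditivity, and then control each item's contribution using the two nested thresholds defining the core ($\beta_{ij}+c_i$) and the core‑within‑the‑core ($p_j+\tau_i$), invoking Lemma~\ref{lem:cz-lemma21} and parts~2 and~4 of Lemma~\ref{lem:cz}.

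\textbf{Reduction to singletons.} Fix $\bv$ and $T$. For each $i,S$ set $A=(S\setminus T)\cap C_i(v_i)$ and $B=(S\setminus T)\cap Y_i(v_i)$, so $v'_i(S\setminus T)=v_i(A)$ and $\hat v_i(S\setminus T)=v_i(B)$. Monotonicity and subadditivity of $v_i$ give $v_i(A)\le v_i(A\cap B)+v_i(A\setminus B)\le v_i(B)+v_i(A\setminus B)$, and $A\setminus B\subseteq S\cap(C_i(v_i)\setminus Y_i(v_i))$; splitting $v_i(A\setminus B)$ over singletons,
\[
v'_i(S\setminus T)-\hat v_i(S\setminus T)\le\sum_{j\in S}v_i(j)\cdot\mathbf 1[j\in C_i(v_i)\setminus Y_i(v_i)].
\]
Weighting by $\lambda^{i,\bv}_S$, summing over $S$ and $i$, taking $\Ex{\bv}{\cdot}$, and using that $v_i(j),C_i(v_i),Y_i(v_i)$ depend only on $v_i$ together with $\{\lambda^\bv\}_\bv\in\Lambda(\mathbf z)$, $z_{ij}(v_i)=\pi^i_j(v_i)$ (hence $\Ex{\bv_{-i}}{\sum_{S\ni j}\lambda^{i,\bv}_S}\le\pi^i_j(v_i)\le1$), the left‑hand side of the lemma is at most $\sum_i\sum_{j\in M}\Ex{v_i}{v_i(j)\cdot\mathbf 1[j\in C_i(v_i)\setminus Y_i(v_i)]}$.

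\textbf{Per‑item bound and summation.} On $\{j\in C_i(v_i)\setminus Y_i(v_i)\}$ one has $v_i(j)<\beta_{ij}+c_i$ (from $C_i$) and $v_i(j)\ge p_j+\tau_i$ (from $j\notin Y_i$), whence $v_i(j)\le\max\{\beta_{ij},p_j+\tau_i\}+c_i$. When $v_i(j)\ge\max\{\beta_{ij},p_j+\tau_i\}$ this yields $v_i(j)\cdot\mathbf 1[\cdot]\le(\max\{\beta_{ij},p_j+\tau_i\}+c_i)\cdot\mathbf 1[v_i(j)\ge\max\{\beta_{ij},p_j+\tau_i\}]$. Summing the first part over $i,j$ gives $\le\tfrac{2}{b(1-b)}\cdot\textsc{PostRev}$ by Lemma~\ref{lem:cz-lemma21}, and the second part is $\le\tfrac12\sum_ic_i\le\tfrac{2}{1-b}\cdot\textsc{PostRev}$ because $\sum_j\Prr{v_i}{v_i(j)\ge\max\{\beta_{ij},p_j+\tau_i\}}\le\tfrac12$ (definition of $\tau_i$) and by part~2 of Lemma~\ref{lem:cz}. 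Since $\tfrac{2}{b(1-b)}+\tfrac{2}{1-b}=\tfrac{2(1+b)}{b(1-b)}$, this gives the claimed bound.

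\textbf{Main obstacle.} The argument above silently assumes that $j\notin Y_i(v_i)$ forces $v_i(j)\ge\max\{\beta_{ij},p_j+\tau_i\}$, which holds when $\beta_{ij}\le p_j+\tau_i$ but not otherwise. When $\beta_{ij}>p_j+\tau_i$ one must still dispose of the band $p_j+\tau_i\le v_i(j)<\beta_{ij}$; there $v_i(j)<\beta_{ij}$, so its contribution is at most $\beta_{ij}\cdot\Ex{v_i}{\pi^i_j(v_i)}\le\beta_{ij}\cdot\Prr{v_i}{v_i(j)\ge\beta_{ij}}/b$ by part~4 of Lemma~\ref{lem:cz}. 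The real work is to absorb this extra term, together with the two terms above, into exactly $\tfrac{2(1+b)}{b(1-b)}\cdot\textsc{PostRev}$ — equivalently, to use the precise form of the core‑within‑the‑core restriction (the one that makes the surplus $\mu_i$ be $\tau_i$‑Lipschitz). Everything else is bookkeeping and direct appeals to the quoted Cai--Zhao lemmas.
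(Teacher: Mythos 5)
Your outline tracks the paper's strategy closely: reduce $v'_i(S\setminus T)-\hat v_i(S\setminus T)$ to a sum of singleton values over $j\in C_i(v_i)\setminus Y_i(v_i)$ by subadditivity, then control each singleton's expected contribution by the two threshold sets and the Cai--Zhao lemmas. You have also put your finger on exactly the right obstacle --- the band $p_j+\tau_i\le v_i(j)<\beta_{ij}$ (present whenever $\beta_{ij}>p_j+\tau_i$) is not covered by the indicator $\mathbf 1[v_i(j)\ge\max\{\beta_{ij},p_j+\tau_i\}]$ --- and you have identified the right tool, Lemma~\ref{lem:cz}(4). But your proposal does not close that gap, and two of the choices you make earlier in the argument actively block the fix you are reaching for. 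First, after the reduction to singletons you throw away the factor $\Ex{\bv_{-i}}{\sum_{S\ni j}\lambda^{i,\bv}_S}\le\pi^i_j(v_i)$ by further bounding it by $1$. You must keep $\pi^i_j(v_i)$, since Lemma~\ref{lem:cz}(4) bounds precisely $\Ex{v_i}{\pi^i_j(v_i)}$, and that is what controls the band. Second, the per-item decomposition $v_i(j)\le\max\{\beta_{ij},p_j+\tau_i\}+c_i$ with the single indicator $\mathbf 1[v_i(j)\ge\max\{\beta_{ij},p_j+\tau_i\}]$ is the wrong split, and your two resulting terms already exhaust the full budget $\tfrac{2(1+b)}{b(1-b)}\cdot\textsc{PostRev}$, leaving no slack for the band contribution.

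The paper resolves both issues at once with a slightly different per-item bound: for $j\notin Y_i(v_i)$,
\[
v'_i(j)\;\le\;\beta_{ij}\,\mathbf 1_{v_i(j)\ge p_j+\tau_i}\;+\;c_i\,\mathbf 1_{v_i(j)\ge\max\{\beta_{ij},\,p_j+\tau_i\}},
\]
which is valid including on your problematic band (there $v'_i(j)<\beta_{ij}$ and the first indicator fires while the second does not). Keeping the $\pi^i_j(v_i)$ weight, the first term is then split over $A_i=\{j:\beta_{ij}\le p_j+\tau_i\}$ and its complement: on $A_i$ use $\pi^i_j(v_i)\le 1$ and $\beta_{ij}\le\max\{\beta_{ij},p_j+\tau_i\}$; on $A_i^c$ use $\mathbf 1\le 1$ and Lemma~\ref{lem:cz}(4), noting that there $\beta_{ij}=\max\{\beta_{ij},p_j+\tau_i\}$ and $\Pr{v_i(j)\ge\beta_{ij}}=\Pr{v_i(j)\ge\max\{\beta_{ij},p_j+\tau_i\}}$. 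Both branches then roll up into $\tfrac1b\sum_{i,j}\max\{\beta_{ij},p_j+\tau_i\}\Pr{v_i(j)\ge\max\{\beta_{ij},p_j+\tau_i\}}$ and get bounded via Lemma~\ref{lem:cz-lemma21}; the $c_i$-term is handled exactly as you propose, via the definition of $\tau_i$ and Lemma~\ref{lem:cz}(2). So the ``real work'' you defer is not a separate additive term at all --- with the right split, the band case is already one of the two branches feeding into Lemma~\ref{lem:cz-lemma21}, and the budget is not exceeded.
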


\begin{proof}
Our goal is to bound, for every $i$ and every fixed $v_i$, 
\[
\Ex{\bv_{-i}}{\sum_{S} \lambda^{i,\bv}_S v'_i(S \setminus T)} - \Ex{\bv_{-i}}{\sum_{S} \lambda^{i,\bv}_S \hat{v}_i(S \setminus T)} = \Ex{\bv_{-i}}{\sum_{S} \lambda^{i,\bv}_S \left(v'_i(S \setminus T) - \hat{v}_i(S \setminus T) \right)}.
\]
First, we upper bound $v'_i(S \setminus T) - \hat{v}_i(S \setminus T)$ using the definition of $\hat{v}_i$, that $\hat{v}_i$ is monotone, and that $v'_i$ is subadditive as follows:
\[
v'_i(S \setminus T) - \hat{v}_i(S \setminus T) = v'_i(S \setminus T) - \hat{v}_i((S \setminus T) \cap Y_i(v_i)) \leq \sum_{j \in S \setminus (T \cup Y_i(v_i))} v'_i(j) \leq \sum_{j \in S \setminus Y_i(v_i)} v'_i(j).
\]
Note that $j \not\in Y_i(v_i)$ only if $v_i(j) \geq p_j + \tau_i$. Furthermore, $v'_i(j) \leq \beta_{ij} + c_j$ for all $j$. 
But it may also happen that $v_i(j) \leq \beta_{ij}$ and hence $v'_i(j) \leq \beta_{ij}$. 
This lets us rewrite the above sum to
\[
\sum_{j \in S \setminus Y_i(v_i)} v'_i(\{ j \}) \leq \sum_{j \in S} \left(\beta_{ij} \mathbf{1}_{v_i(j) \geq p_j + \tau_i} + c_j \mathbf{1}_{v_i(j) \geq \max\{p_j + \tau_i, \beta_{ij}\}}\right).
\]

Therefore, using that $\{\lambda^\bv\}_\bv \in \Lambda(\mathbf{z})$ with $z_{ij}(v_i) = \pi_j^i(v_i)$, we have
\begin{align*}
&\Ex{\bv_{-i}}{\sum_{S} \lambda^{i,\bv}_S \left(v'_i(S \setminus T) - \hat{v}_i(S \setminus T) \right)} \\
&\qquad\qquad\leq \Ex{\bv_{-i}}{\sum_{S} \lambda^{i,\bv}_S \sum_{j \in S} \left(\beta_{ij} \mathbf{1}_{v_i(j) \geq p_j + \tau_i} + c_j \mathbf{1}_{v_i(j) \geq \max\{p_j + \tau_i, \beta_{ij}\}}\right)} \\
&\qquad\qquad= \Ex{\bv_{-i}}{\sum_{j \in M} \sum_{S \ni j} \lambda^{i,\bv}_S \left( \beta_{ij} \mathbf{1}_{v_i(j) \geq p_j + \tau_i} + c_j \mathbf{1}_{v_i(j) \geq \max\{p_j + \tau_i, \beta_{ij}\}} \right)} \\
&\qquad\qquad= \sum_{j \in M} \Ex{\bv_{-i}}{\sum_{S \ni j} \lambda^{i,\bv}_S} \left( \beta_{ij} \mathbf{1}_{v_i(j) \geq p_j + \tau_i} + c_j \mathbf{1}_{v_i(j) \geq \max\{p_j + \tau_i, \beta_{ij}\}} \right) \\
&\qquad\qquad\leq \sum_{j \in M} \pi_j^i(v_i) \left( \beta_{ij} \mathbf{1}_{v_i(j) \geq p_j + \tau_i} + c_j \mathbf{1}_{v_i(j) \geq \max\{p_j + \tau_i, \beta_{ij}\}} \right).
\end{align*}
Taking the expectation over $v_i$ and the sum over all $i$, we obtain
\begin{align}
& \Ex{\bv}{\sum_{i=1}^{n} \sum_{S} \lambda^{i,\bv}_S v'_i(S \setminus T)} - \Ex{\bv}{\sum_{i=1}^{n} \sum_{S} \lambda^{i,\bv}_S \hat{v}_i(S \setminus T)} \notag \\
&\qquad\qquad= \sum_{i=1}^{n} \Ex{\bv}{\sum_{S} \lambda^{i,\bv}_S \left(v'_i(S \setminus T) - \hat{v}_i(S \setminus T) \right)} \notag \\
&\qquad\qquad\leq \sum_{i=1}^{n} \Ex{v_i}{\sum_{j \in M} \pi_j^i(v_i) \left( \beta_{ij} \mathbf{1}_{v_i(j) \geq p_j + \tau_i} + c_j \mathbf{1}_{v_i(j) \geq \max\{p_j + \tau_i, \beta_{ij}\}} \right)}. \label{eq:upper-bound}
\end{align}

It remains to upper bound the RHS of \eqref{eq:upper-bound} by $\frac{2 (1+b)}{b (1-b)} \cdot \textsc{PostRev}$. Let $A_i = \{j \;\vert\; \beta_{ij} \leq p_j + \tau_i\}.$ We first bound $\sum_{i=1}^{n} \Ex{v_i}{\sum_{j \in M} \pi_j^i(v_i) \beta_{ij} \mathbf{1}_{v_i(j) \geq p_j + \tau_i}}$:
\begin{align}
\sum_{i=1}^{n} \Ex{v_i}{\sum_{j \in M} \pi_j^i(v_i) \beta_{ij} \mathbf{1}_{v_i(j) \geq p_j + \tau_i}} 
&\leq \sum_{i=1}^{n} \sum_{j \in A_i}  \beta_{ij} \Ex{v_i}{\mathbf{1}_{v_i(j) \geq p_j + \tau_i}}  + \sum_{i=1}^{n} \sum_{j \not\in A_i} \beta_{ij} \Ex{v_i}{\pi_j^i(v_i)} \notag\\
&= \sum_{i=1}^{n} \sum_{j \in A_i}  \beta_{ij} \Prr{v_i}{v_i(j) \geq p_j + \tau_i} + \sum_{i=1}^{n} \sum_{j \not\in A_i} \beta_{ij} \Ex{v_i}{\pi_j^i(v_i)} \notag\\
&\leq \sum_{i=1}^{n} \sum_{j \in A_i}  \beta_{ij} \Prr{v_i}{v_i(j) \geq p_j + \tau_i} + \sum_{i=1}^{n} \sum_{j \not\in A_i} \beta_{ij} \frac{\Prr{v_i}{v_i(j) \geq \beta_{ij}}}{b} \notag\\
&\leq \frac{1}{b} \sum_{i=1}^{n} \sum_{j \in M} \max\{\beta_{ij},p_j+\tau_i\} \Prr{}{v_i(j) \geq \max\{\beta_{ij},p_j+\tau_i \}} \notag\\
&\leq \frac{2}{b(1-b)} \cdot \textsc{PostRev}. \label{eq:first-term-on-RHS}
\end{align}

The first inequality uses that both $\pi_j^i(v_i) \leq 1$ and $\mathbf{1}_{v_i(j) \geq p_j + \tau_i} \leq 1$. The second inequality uses Lemma~\ref{lem:cz}. The third inequality uses the definition of $A_i$ and that $b_i \in (0,1)$ and hence $1/b > 1$. The final inequality is Lemma~\ref{lem:cz-lemma21}.

We next bound $\sum_{i=1}^{n} \Ex{v_i}{\sum_{j \in M} \pi_j^i(v_i) c_j \mathbf{1}_{v_i(j) \geq \max\{p_j + \tau_i, \beta_{ij}\}}}$:
\begin{align}
\sum_{i=1}^{n} \Ex{v_i}{\sum_{j \in M} \pi_j^i(v_i) c_j \mathbf{1}_{v_i(j) \geq \max\{p_j + \tau_i, \beta_{ij}\}}} \notag
&\leq \sum_{i=1}^{n} c_i \sum_{j \in M} \Ex{v_i}{\mathbf{1}_{v_i(j) \geq \max\{p_j + \tau_i, \beta_{ij}\}}} \notag\\
&= \sum_{i=1}^{n} c_i \sum_{j \in M} \Prr{v_i}{v_i(j) \geq \max\{p_j + \tau_i, \beta_{ij}\}} \notag\\
&\leq \sum_{i=1}^{n} \frac{c_i}{2} \notag\\
&\leq \frac{2}{1-b} \cdot \textsc{PostRev}. \label{eq:second-term-on-RHS}
\end{align}

The first inequality uses that $\pi_j^i(v_i) \leq 1$. The second inequality follows from the definition of $\tau_i$. The final inequality uses Lemma~\ref{lem:cz}.

Combining Inequality (\ref{eq:upper-bound}) with Inequalities (\ref{eq:first-term-on-RHS}) and (\ref{eq:second-term-on-RHS}) shows the claim.
\end{proof}

Similarly to the way we prove Theorem~\ref{thm:upper-bound} from Lemma~\ref{lem:key-lemma-bayesian} we can now prove:

\begin{lemma}\label{lem:lower-bound-on-core-utilities}
Using the prices $p_j$ that result from invoking Lemma~\ref{lem:key-lemma-revenue} with $z_{ij}(v_i) = \pi_j^i(v_i)$ we have
\[
\sum_{i=1}^{n} \Ex{\bv}{\mu_i(v_i, S_i(\bv))} \geq \frac{1}{\alpha} \textsc{Core} - \Ex{\bv} {\sum_{j \in
\SOLD(\bv)} p_j} - \frac{2 (1+b)}{b (1-b)} \cdot \textsc{PostRev}.
\]
\end{lemma}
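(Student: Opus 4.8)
The plan is to mirror the proof of Theorem~\ref{thm:upper-bound} from Lemma~\ref{lem:key-lemma-bayesian}, but applied to the ``hatted'' valuations $\hat v_i$ and with the median-ASPE accounting of surplus in place of the welfare decomposition. First I would invoke Lemma~\ref{lem:key-lemma-revenue} with $z_{ij}(v_i) = \pi_j^i(v_i)$ to obtain item prices $p_j$ and distributions $\{\lambda^{i,\bv}\}$, and record that its right-hand side is at least $\textsc{Core}$ (as already noted after Lemma~\ref{lem:key-lemma-revenue}). Then, using Lemma~\ref{lemma:welfare-with-without-hat}, I would replace $v'_i$ by $\hat v_i$ on the left-hand side, losing at most $\frac{2(1+b)}{b(1-b)}\cdot\textsc{PostRev}$, to obtain, for every $T\subseteq M$,
\begin{align*}
\sum_{j\in T} p_j + \Ex{\bv}{\sum_{i=1}^n\sum_S \lambda^{i,\bv}_S\Big(\hat v_i(S\setminus T) - \sum_{j\in S}p_j\Big)} \geq \frac{1}{\alpha}\textsc{Core} - \frac{2(1+b)}{b(1-b)}\cdot\textsc{PostRev}.
\end{align*}

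Next I would run the hallucination argument. Fix a valuation profile $\bv$ and run the median ASPE with prices $p_j$; let $S_i(\bv)$ be the set available to agent $i$, so that $M\setminus S_i(\bv)\subseteq\SOLD(\bv)$ consists of items sold to earlier agents. Apply the displayed inequality with $T = M\setminus S_i(\bv)$, but only the part of the expectation corresponding to agent $i$ and with $\bv_{-i}$ integrated out: since $z_{ij}(v_i)=\pi_j^i(v_i)$ depends only on $v_i$, the per-agent term in Lemma~\ref{lem:key-lemma-revenue} can be taken pointwise in $v_i$ (this is exactly the structure already used in the proof in Appendix~\ref{app:proof-key-lemma-bayesian}). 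For each agent $i$ and each realization, the set $S$ drawn from $\lambda^{i,\bv}$ gives the agent the option of buying $S\cap S_i(\bv) = S\setminus T$; hence her realized surplus satisfies $\mu_i(v_i,S_i(\bv)) = \max_{S'\subseteq S_i(\bv)}(\hat v_i(S') - \sum_{j\in S'}p_j) \geq \sum_S \lambda^{i,\bv}_S(\hat v_i(S\setminus T) - \sum_{j\in S\setminus T}p_j) \geq \sum_S \lambda^{i,\bv}_S(\hat v_i(S\setminus T) - \sum_{j\in S}p_j)$ since prices are nonnegative. Taking expectations and summing over $i$, the bound on $\mu_i$ inherits the $\frac1\alpha\textsc{Core}$ term; the $\sum_{j\in T}p_j$ term becomes $\Ex{\bv}{\sum_{j\in M\setminus S_i(\bv)}p_j}$, which is at most $\Ex{\bv}{\sum_{j\in\SOLD(\bv)}p_j}$, and it moves to the other side with a negative sign. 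Combining with the $\textsc{PostRev}$ loss from Lemma~\ref{lemma:welfare-with-without-hat} yields exactly the claimed inequality.

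The one point that needs care — and which I expect to be the main subtlety rather than a deep obstacle — is the order of quantifiers between the $T$ in Lemma~\ref{lem:key-lemma-revenue} (a single fixed set) and the fact that in the ASPE the set $T = M\setminus S_i(\bv)$ is itself random and correlated with $\bv$. The standard resolution (the ``hallucination trick'' cited in the paper) is that one applies the inequality separately for each agent $i$ conditioned on the full realization $\bv$, using that $S_i(\bv)$ is determined by $\bv_{<i}$ and is independent of $v_i$ given the arrival order; since the per-agent left-hand side of Lemma~\ref{lem:key-lemma-revenue} is linear in the indicator of which items lie in $T$ and the distributions $\lambda^{i,\bv}$ are fixed in advance, summing the pointwise inequalities over realizations is valid. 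I would state this carefully, perhaps as a short lemma or by direct reference to the proof of Theorem~\ref{thm:upper-bound}, and then the rest is the routine bookkeeping sketched above.
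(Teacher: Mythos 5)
Your high-level plan (invoke Lemma~\ref{lem:key-lemma-revenue} with $z_{ij}(v_i)=\pi^i_j(v_i)$, pass from $v'_i$ to $\hat v_i$ via Lemma~\ref{lemma:welfare-with-without-hat}, then lower-bound the surpluses $\mu_i$ via a hallucination argument) matches the paper's strategy, and the order in which you compose the two lemmas is an inessential permutation. However, the one step you flag as the ``main subtlety'' --- the decoupling of the random, agent-dependent set $T = M\setminus S_i(\bv)$ from the expectation over $\bv$ --- is exactly where your sketch goes wrong, and the argument you give for it does not work.

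You claim that one can ``apply the inequality separately for each agent $i$ conditioned on the full realization $\bv$'' because the per-agent term is ``linear in the indicator of which items lie in $T$.'' Neither claim is correct. Lemma~\ref{lem:key-lemma-revenue} is a single inequality involving a sum over all agents inside an expectation over $\bv$ for one fixed $T$; it does not decompose into per-agent, per-realization inequalities. And the quantity $v'_i(S\setminus T)$ (or $\hat v_i(S\setminus T)$) is emphatically \emph{not} linear in the indicator vector of $T$, since $v'_i$ is merely subadditive --- this nonlinearity is the whole difficulty. If the argument were linear in $T$, you could average over the random sets $M\setminus S_i(\bv)$ directly, and no hallucination trick would be needed. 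The correct resolution, which the paper carries out, introduces an \emph{independent} auxiliary profile $\bar\bv\sim\D$: one lower-bounds $\mu_i(v_i,S_i(\bv))$ by the option of drawing $S\sim\lambda^{i,(v_i,\bar\bv_{-i})}$ and purchasing $S\setminus\SOLD(\bar v_i,\bv_{-i})\subseteq S_i(\bv)$, then exchanges $\bv_{-i}$ with $\bar\bv_{-i}$ (legitimate because they are i.i.d.\ and the construction respects this symmetry), which leaves a bound in terms of $\hat v_i(S\setminus\SOLD(\bar\bv))$ for a single set $\SOLD(\bar\bv)$ that is independent of $\bv$ and common across all agents. Only then can Lemma~\ref{lem:key-lemma-revenue} and Lemma~\ref{lemma:welfare-with-without-hat} be applied with $T=\SOLD(\bar\bv)$, followed by the observation that $\bar\bv$ and $\bv$ are identically distributed. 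Your sketch omits the auxiliary profile entirely, and with it the step that actually makes $T$ a fixed set.
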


\begin{proof}
Let $\{\lambda^\bv\}_\bv$ be the collection of distributions corresponding to prices $p_j$. 
Consider an arbitrary buyer $i$ and an auxiliary valuation profile $\bar{\bv}$. In order to obtain a lower bound on $\mu_i$ when the buyer has valuation $v_i$, consider drawing a set $S$ from $\lambda^{i,(v_i, \bar{\bv}_{-i})}$ and buying $S \setminus \SOLD(\bar{v}_i, \bv_{-i}) \subseteq S_i(\bv)$. We draw $\bar{\bv}$ from $\D$ and take the expectation over $\bar{\bv}$. Since $\bar{v}_i$ and $\bar{\bv}_{-i}$ are independent, this yields  
\[
 \mu_i(v_i, S_i(\bv)) \geq
 \Ex{\bar{\bv}}{\sum_{S} \lambda^{i,(v_i, \bar{\bv}_{-i})}_S \left(\hat{v}_i(S
\setminus \SOLD(\bar{v}_i,\bv_{-i})) - \sum_{j \in S} p_j\right)}.
\]

Next, we take expectations over $\bv$ on both sides. As $\bv_{-i}$ and $\bar{\bv}_{-i}$ are identically and independently distributed, we obtain
\begin{align*}
 \Ex{\bv}{\mu_i(v_i, S_i(\bv))}
 &\geq \Ex{\bv,\bar{\bv}}{\sum_{S} \lambda^{i,(v_i, \bar{\bv}_{-i})}_S \left(\hat{v}_i(S
\setminus \SOLD(\bar{v}_i,\bv_{-i})) - \sum_{j \in S} p_j\right)}\\ 
 &= \Ex{\bv,\bar{\bv}}{\sum_{S} \lambda^{i,(v_i, \bv_{-i})}_S \left(\hat{v}_i(S
\setminus \SOLD(\bar{v}_i,\bar{\bv}_{-i})) - \sum_{j \in S} p_j\right)}.
\end{align*}

Summing this inequality over all buyers $i$ gives
\[
\sum_{i=1}^{n} \Ex{\bv}{\mu_i(v_i, S_i(\bv))}
\geq
\sum_{i=1}^{n}
\Ex{\bv,\bar{\bv}}{\sum_{S} \lambda^{i,\bv}_S \left(\hat{v}_i(S \setminus
\SOLD(\bar{\bv})) - \sum_{j \in S} p_j\right)}.
\]

Applying Lemma~\ref{lemma:welfare-with-without-hat} with $T = \SOLD(\bar{\bv})$ and taking an expectation over $\bar{\bv}$, we also have
\[
\Ex{\bv, \bar{\bv}}{\sum_{i=1}^{n} \sum_{S} \lambda^{i,\bv}_S v'_i(S \setminus \SOLD(\bar{\bv}))} - \Ex{\bv, \bar{\bv}}{\sum_{i=1}^{n} \sum_{S} \lambda^{i,\bv}_S \hat{v}_i(S \setminus \SOLD(\bar{\bv}))} \leq \frac{2 (1+b)}{b (1-b)} \cdot \textsc{PostRev}.
\]
So, in combination, by linearity of expectation
\[
\sum_{i=1}^{n} \Ex{\bv}{\mu_i(v_i, S_i(\bv))}
\geq
\sum_{i=1}^{n}
\Ex{\bv,\bar{\bv}}{\sum_{S} \lambda^{i,\bv}_S \left(v'_i(S \setminus
\SOLD(\bar{\bv})) - \sum_{j \in S} p_j\right)} - \frac{2 (1+b)}{b (1-b)} \cdot \textsc{PostRev}.
\]

We now apply Lemma~\ref{lem:key-lemma-revenue}. It gives us that for any $\bar{\bv}$,
\[
\Ex{\bv}{\sum_{S} \lambda^{i,\bv}_S \left(v'_i(S \setminus
\SOLD(\bar{\bv})) - \sum_{j \in S} p_j\right)} \geq \frac{1}{\alpha} \cdot \textsc{CORE} - \sum_{j \in \SOLD(\bar{\bv})} p_j.
\]

So, in combination, also taking the expectation over $\bar{\bv}$ here, we obtain
\[
\sum_{i=1}^{n} \Ex{\bv}{\mu_i(v_i, S_i(\bv))} \geq \frac{1}{\alpha} \cdot \textsc{CORE} - \Ex{\bar{\bv}}{\sum_{j \in
\SOLD(\bar{\bv})} p_j} - \frac{2 (1+b)}{b (1-b)} \cdot \textsc{PostRev}. 
\]
Using the fact that $\bv$ and $\bar{\bv}$ are identically distributed, the claim follows.
\end{proof}

\subsubsection{Analysis of Entry Fees}

We also need the following result from \cite{CaiZ17} concerning the revenue collected by a median ASPE through the entry fees, which expliots that the $\mu_i$ are $\tau_i$-Lipschitz. Recall that we used $S_i(\bv)$ to denote the set of items that are still available when agent $i$ is approached. Denote the median of $\mu_i(v_i,S_i(\bv))$ by $\delta_i(S_i(\bv))$.

\begin{lemma}[Cai and Zhao~\cite{CaiZ17}]\label{lem:entry-fees}
For any choice of item prices $p_j$ and subadditive valuations $v_i$, the expected revenue that the median ASPE that uses these prices collects through the entry fees is at least
\[
	\frac{1}{4} \sum_{i=1}^{n} \Ex{\bv}{\mu_i(v_i, S_i(\bv))} - \frac{5}{8} \sum_{i=1}^{n} \tau_i.
\]
\end{lemma}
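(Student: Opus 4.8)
The plan is to reduce the statement to a mean-versus-median comparison for each agent's surplus, and then to control that gap using the $\tau_i$-Lipschitz property of $\mu_i$ together with a concentration inequality for monotone subadditive functions over independent items.

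First I would fix an agent $i$ and condition on the realization $S_i(\bv) = S$ of the set of items still available when $i$ is approached. Since $S_i(\bv)$ is determined by the valuations of agents arriving before $i$, it is independent of $v_i$; hence, conditional on $S_i(\bv) = S$, the valuation $v_i$ is still distributed as $\D_i$ and $\delta_i(S)$ is precisely the median of $\mu_i(v_i, S)$ under $\D_i$. A rational agent pays the entry fee $\delta_i(S)$ whenever her realized surplus $\max_{S' \subseteq S}\big(v_i(S') - \sum_{j \in S'} p_j\big)$ is at least $\delta_i(S)$; since $v_i \ge \hat v_i$ pointwise this surplus is at least $\mu_i(v_i, S)$, so the agent pays whenever $\mu_i(v_i, S) \ge \delta_i(S)$, an event of probability at least $\tfrac12$ by definition of the median, and in that case the mechanism collects at least $\delta_i(S)$ in entry fees alone. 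Writing the total entry-fee revenue of a run as $\sum_i \delta_i(S_i(\bv)) \mathbf{1}[\text{agent } i \text{ pays}]$, conditioning on the valuations of the agents before $i$ (which fix both $S_i(\bv)$ and $\delta_i(S_i(\bv))$), and using linearity of expectation, the expected entry-fee revenue is at least $\tfrac12 \sum_i \Ex{\bv}{\delta_i(S_i(\bv))}$.

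It therefore suffices to establish, for every agent $i$ and every fixed available set $S$, the mean--median inequality $\Ex{v_i}{\mu_i(v_i, S)} \le 2\,\delta_i(S) + \tfrac52\,\tau_i$: using the independence of $v_i$ from the earlier valuations to pull the expectation over $S = S_i(\bv)$ outside, this gives $\Ex{\bv}{\delta_i(S_i(\bv))} \ge \tfrac12 \Ex{\bv}{\mu_i(v_i, S_i(\bv))} - \tfrac54 \tau_i$, and summing $\tfrac12$ of this over $i$ yields exactly the $\tfrac14 \sum_i \Ex{\bv}{\mu_i(v_i, S_i(\bv))} - \tfrac58 \sum_i \tau_i$ bound claimed. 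To prove the mean--median inequality I would use that, by the preceding lemma of Cai and Zhao, $\mu_i(\cdot, S)$ is monotone, subadditive, and $\tau_i$-Lipschitz, and that since $\D_i$ is subadditive over independent items, $\mu_i(v_i, S)$ is --- as a function of $v_i$ --- a monotone subadditive function of independent items with Lipschitz constant $\tau_i$. The standard concentration inequality for such functions (of Schechtman/Talagrand type, as used throughout this line of work) bounds the upper tail on the scale of $\tau_i$; splitting $\Ex{X} = \int_0^{\delta_i(S)}\Pr{X > s}\,ds + \int_{\delta_i(S)}^{2\delta_i(S)}\Pr{X > s}\,ds + \int_{2\delta_i(S)}^{\infty}\Pr{X > s}\,ds$ for $X = \mu_i(v_i, S)$, the first two integrals are each at most $\delta_i(S)$ and the third is $O(\tau_i)$ by the tail bound, so $\Ex{X} \le 2\delta_i(S) + O(\tau_i)$; carrying the explicit constant from the concentration inequality produces the $\tfrac52\tau_i$ term.

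The main obstacle is this last step: checking that $v_i \mapsto \mu_i(v_i, S)$ genuinely inherits the ``subadditive over independent items'' structure needed to invoke the concentration inequality, and tracking its constants so that the mean exceeds twice the median by at most $\tfrac52\tau_i$. It is exactly the $\tau_i$-Lipschitzness established by the core-within-the-core construction --- the cap $v_i(j) < p_j + \tau_i$ for items in $Y_i(v_i)$, which makes each item's marginal contribution to $\mu_i$ at most $\tau_i$ --- that makes this slack proportional to $\tau_i$ rather than unbounded. The remaining pieces (independence of $S_i(\bv)$ from $v_i$, the ``pays with probability at least $\tfrac12$'' observation, and the averaging over the run) are routine bookkeeping.
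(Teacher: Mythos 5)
Your plan matches the proof of this lemma in Cai and Zhao (the paper only cites it and does not reprove it). Both decomposition steps are exactly right: (i) reduce entry-fee revenue to $\tfrac12\sum_i \Ex{\bv}{\delta_i(S_i(\bv))}$ by the median argument, using $v_i \geq \hat v_i$ to conclude that the agent's true best-response utility at the posted prices dominates $\mu_i$, so she pays the fee with probability at least $\tfrac12$ and hands over $\delta_i$ when she does; and (ii) control the mean--median gap of $\mu_i(\cdot,S)$ by a Schechtman-type concentration bound for monotone subadditive $\tau_i$-Lipschitz functions of independent item values, yielding $\Ex{v_i}{\mu_i(v_i,S)} \leq 2\delta_i(S) + \tfrac52 \tau_i$ and hence the stated constants after averaging.

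The one place you hand-wave is exactly the crux Cai and Zhao devote a separate lemma to: the precise form and constant of the tail bound. Your split of $\Ex{X}$ into three integrals with the third $O(\tau_i)$ is the right shape, but getting the constant $\tfrac52$ (rather than, say, the cruder $\Ex{X}\leq 3a + 2\tau$ one gets from $\Pr[X \geq 3a + k\tau] \leq 2^{-k}$) requires invoking the specific Schechtman corollary that Cai and Zhao prove, and verifying that $v_i \mapsto \mu_i(v_i,S)$ is genuinely a monotone, subadditive, $\tau_i$-Lipschitz function over the independent per-item random variables — which is what the preceding Cai--Zhao lemma (also cited rather than proved here) establishes. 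You correctly flag this as the main obstacle; with that concentration lemma in hand, the rest of your argument is a complete and correct reconstruction.
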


\subsubsection{Putting Everything Together}

We are now ready to complete the proof of Lemma~\ref{lem:pp-with-entry-fees}.

\begin{proof}[Proof of Lemma~\ref{lem:pp-with-entry-fees}]

The expected revenue from the posted prices by our median ASPE is, by definition, $\Ex{\bv}{\sum_{j \in \SOLD(\bv)} p_j}$. Combining this with the lower bound on the expected revenue collected from the entry fees from Lemma~\ref{lem:entry-fees} yields the following lower bound on the expected revenue 
\begin{align*}
\textsc{APostEnRev} \geq \frac{1}{4} \sum_{i=1}^{n} \Ex{\bv}{\mu_i(v_i, S_i(\bv))} - \frac{5}{8} \sum_{i=1}^{n} \tau_i + \Ex{\bv}{\sum_{j \in \SOLD(\bv)} p_j}.
\end{align*}
Using Lemma~\ref{lem:lower-bound-on-core-utilities} to lower bound $\sum_{i=1}^{n} \Ex{\bv}{\mu_i(v_i, S_i(\bv))}$ and Lemma~\ref{lem:tau-and-postrev} to upper bound $\sum_{i=1}^{n} \tau_i$ we obtain
\begin{align*}
\textsc{APostEnRev} & \geq \frac{1}{4} \left( \frac{1}{\alpha} \textsc{Core} - \Ex{\bv} {\sum_{j \in
\SOLD(\bv)} p_j} - \frac{2 (1+b)}{b (1-b)} \cdot \textsc{PostRev} \right) \\
&\hspace*{120pt}- \frac{5}{8} \cdot \frac{4}{1-b} \cdot \textsc{PostRev} + \Ex{\bv}{\sum_{j \in \SOLD(\bv)} p_j} \\
& \geq \frac{1}{4 \alpha} \textsc{Core} - \frac{1 + 6b}{2b (1 - b)} \cdot \textsc{PostRev},
\end{align*}
as claimed.
\end{proof}


\section{Going Beyond $\mathbf{O(\text{log}\, \text{log}\, m})$}
\label{sec:lower-bound}
We leave it as an open problem whether the $O(\log \log m)$ factor could be further reduced, possibly even to a constant. As a matter of fact, many techniques presented in this paper seem to be very useful to reach such an improved guarantee. In this section we discuss to what extent they can be applied and where there are barriers.

In all of our proofs, the $O(\log \log m)$-factor originates from variants of the same technical lemma. In particular, the complete-information proof of Theorem~\ref{thm:upper-bound} as provided in Section~\ref{sec:complete-info} is mainly based on Lemma~\ref{lem:key-lemma}. Similar lemmas are used for all other proofs as well. This is why we will now revisit the proof of Lemma~\ref{lem:key-lemma}.

The proof of Lemma~\ref{lem:key-lemma} shows that its statement for any value of $\alpha$ is indeed equivalent to there being a vector of probabilities $\bq = (q_j)_{j \in U}$, $q_j \in [0,1]$, one for each item such that the $\lambda$ player has value at least $v_i(U)/\alpha$ in the zero-sum game induced by the $\bq$ vector. More formally, $g(\bq) \geq v_i(U)/\alpha$ for the function $g(\bq) = \max_{\lambda \in \Delta(\bq)} \min_{\mu \in \Delta(\bq)} \sum_{S,T \subseteq U} \lambda_S \mu_T v_i(S\setminus T)$, where $\Delta(\bq)$ denotes all probability distributions $\nu$ over sets $S \subseteq U$ such that $\sum_{T \ni j} \nu_T \leq q_j$ for all $j$. This also means that to show the existence of prices for an $o(\log \log m)$-approximation it suffices to show that there always is a vector $\bq$ such that $g(\bq) \geq v_i(U)/\alpha$ for $\alpha \in o(\log \log m)$.

Our proof continues by Lemma~\ref{lem:loglog-complete}, showing that there exists a $q \in [0,1]$ such that for $\bq = (q,\dots,q) \in [0,1]^{\lvert U \rvert}$ we have $g(\bq) \geq \frac{1}{\alpha} v_i(U)$ for $\alpha \in O(\log \log m)$. That is, we put the same probability mass $q$ on every item. As we will show now, the $O(\log \log m)$ bound is indeed tight for strategies that put the same probability mass $q$ on all items. In other words, for a $o(\log \log m)$ bound, one would have to devise a more sophisticated way to choose the $\bq$ vector.


\begin{theorem}
There exists a subadditive function $v_i$ such that for $U = M$ we have
\[
\max_{q \in [0,1]} g(q) = \max_{q \in [0,1]} \max_{\lambda \in \Delta(q,\dots,q)} \min_{\mu \in \Delta(q,\dots,q)} \sum_{S,T \subseteq M} \lambda_S \mu_T v_i(S\setminus T) \leq \frac{1}{\Omega(\log \log m)} v_i(M)
\]
\end{theorem}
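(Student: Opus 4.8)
The plan is to exhibit a subadditive $v=v_i$ on $M$ with $\max_{q}g(q)=O(v(M)/\log\log m)$, and the first step is to pin down what such a $v$ must look like. The proof of Lemma~\ref{lem:loglog-complete} already gives $g(q)\ge f(q)-f(q^2)$ for every $q$, so any candidate $v$ must have $f(q)-f(q^2)=O(v(M)/\log\log m)$ for \emph{all} $q\in[0,1]$. Combined with $f(0)=0$, $f(1)=v(M)$, monotonicity of $f$, and the telescoping identity used in the excerpt, this forces $f$ to climb from $\approx 0$ to $\Theta(v(M))$ in $\Theta(\log\log m)$ ``gentle'' steps; concretely $f$ must be essentially affine in $\log_2\log_2(1/q)$ on the range $q\in[m^{-2},1/2]$. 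A second guiding observation is that $v$ cannot be symmetric: if $v(S)$ depends only on $|S|$, the protagonist plays the uniform distribution over sets of size $m/2$, and for \emph{any} response $\mu$ the size of $S\setminus T$ concentrates around $m/4$ (for each fixed $T$, $|S\cap T|$ is hypergeometric with mean $\le m/4$), so $g(1/2)=\Omega(v(M))$ by subadditivity. Hence the hard instance has to be genuinely asymmetric.

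Guided by this, I would build $v$ from a nested family of blocks at $\ell+1=\Theta(\log\log m)$ scales, the $k$-th scale having granularity $b_k=2^{2^k}$ (so $b_0=2$, $b_\ell=m$), normalized so each scale contributes at most $\tfrac{1}{\ell+1}v(M)$ and $v(M)$ is the sum of the scale contributions. The scale-$k$ ``counting rule'' is chosen --- asymmetrically, e.g.\ by a pseudorandom construction verified via concentration --- so that, for a set $S$ of marginal density $q$, scale $k$ contributes $\approx\tfrac{1}{\ell+1}v(M)$ when $q\gtrsim 2^{-2^k}$ (equivalently $k\ge k(q):=\lceil\log_2\log_2(1/q)\rceil$) and only an $O(qb_k)$-fraction of $\tfrac{1}{\ell+1}v(M)$ otherwise; this produces the staircase profile $f(q)\approx\tfrac{\ell+1-k(q)}{\ell+1}v(M)$, with the scales $k<k(q)$ collectively worth only $O(v(M)/\log\log m)$. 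Crucially, the rule is also arranged so that the scale-$k$ contribution of any density-$q$ set can be almost entirely destroyed by deleting a density-$\approx q$ ``shattering'' set supported on the scale-$k$ structure --- for $k\ge k(q)$ the per-block deletion budget $qb_k\ge 1$ makes this affordable --- while the function remains subadditive, which is checked directly from the block structure. The analysis then fixes $q$, handles $q\le m^{-2}$ by $g(q)\le f(q)\le v(M)/m$, and for larger $q$ responds to an arbitrary $\lambda\in\Delta(q,\dots,q)$ with the $\mu\in\Delta(q,\dots,q)$ that simultaneously shatters all scales $k\ge k(q)$; then the scales $k\ge k(q)$ contribute $O(\tfrac{1}{(\ell+1)^2})v(M)$ apiece (totalling $O(v(M)/\log\log m)$ over the $\le\ell+1$ of them) and the scales $k<k(q)$ contribute $O(v(M)/\log\log m)$ in total, so $\Ex{S\sim\lambda,\ T\sim\mu}{v(S\setminus T)}=O(v(M)/\log\log m)$ uniformly in $q$, as claimed.

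The step I expect to be the main obstacle is reconciling subadditivity with the shattering property. Subadditivity forces $v(S)-v(S\setminus A)\le v(A)$, so no small deletion can destroy much value, and the only way a density-$q$ deletion can erase a $(1-O(1/\log\log m))$ fraction of $v(M)$ is to spread it over structure of granularity $\approx 1/q$; but the obvious ``fragile'' rules --- e.g.\ ``a block counts only if fully contained in $S$'' --- are conjunctive and hence \emph{not} subadditive, so the scale-$k$ rule has to achieve this fragility by a less direct mechanism (roughly, by making the deleted density-$q$ set itself nearly $f(q)$-valuable and well-aligned with typical protagonist sets). Producing a single rule that is subadditive, exhibits the staircase $f$-profile at all scales simultaneously, is nonetheless shatterable at every scale, and is asymmetric enough to avoid the $\Omega(v(M))$ barrier above, is the delicate part; I expect the cleanest route is to take the scale-$k$ rules to be (pseudo)random and establish subadditivity together with the activation and shattering properties by concentration, rather than to seek an explicit near-symmetric closed form.
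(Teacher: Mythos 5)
Your high-level blueprint matches the paper's construction almost exactly: $\Theta(\log\log m)$ scales, the $k$-th operating at granularity $2^{2^k}$, each normalized to contribute $\tfrac{1}{\ell+1}v(M)$; a staircase $f$-profile roughly affine in $\log\log(1/q)$; the observation that the response $\mu$ should be a union of ``shattering'' sets supported on scales $k\ge k(q)$; and the correct diagnosis that the crux is a set function which is \emph{simultaneously} subadditive and shatterable by small uniformly-spread deletions. You also correctly rule out symmetric $v$.

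The genuine gap is precisely the piece you flag and then defer: you propose to realize the shatterable-but-subadditive per-scale rule by a ``(pseudo)random construction established by concentration,'' but that route is unsubstantiated and looks problematic, because subadditivity is a hard structural constraint (exponentially many inequalities, no slack) and is not the kind of property one expects to extract from concentration bounds on a random set function. The paper closes this gap by an \emph{explicit} construction: the per-block function is the set-cover integrality-gap function of Bhawalkar and Roughgarden on $2^k-1$ items identified with $\mathbb{F}_2^k\setminus\{0\}$, where $f(T)$ is the minimum number of halfspaces $S_{\mathbf{i}}=\{\mathbf{j}:\mathbf{j}\cdot\mathbf{i}=1\}$ needed to cover $T$. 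This function is subadditive by construction (it is a covering number), has $f(\{j\})=1$ and $f(M')\ge k$, and---crucially---for every $d\le k$ the family $\mathcal{D}$ of $d$-dimensional subspaces (minus zero) gives deletion sets of size $2^d-1$ with $f(M'\setminus D)\le k-d$ and with every item covered equally often, so the uniform mixture over $\mathcal{D}$ is a legal $\mu$ with per-item marginal $\approx 2^d/2^{2^k}$. Those three facts are exactly what you need for the staircase, the shattering at scale $k$, and the marginal budget; none of them is something you would get ``for free'' from a random construction, and the subadditivity in particular is load-bearing. (A second, smaller inaccuracy: your claim that each shattered scale contributes $O(1/(\ell+1)^2)$ ``apiece'' isn't what happens; in the paper the contributions from scales $k\ge k(q)$ decay geometrically in $k-k(q)$, so the sum is $O(1)$ total rather than $O(1/(\ell+1))$ per term, but that does not change the final bound.) So: right scaffolding, right obstacle identified, but the one step you leave open is the step that makes the theorem true, and the explicit set-cover function is the missing ingredient.
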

\begin{proof}
\newcommand{\layerindex}{\ell}
\newcommand{\layernumbers}{L}
\newcommand{\copyindex}{b}
\newcommand{\copynumbers}{B}

As we are only interested in an asymptotic bound, we can assume without loss of generality that $m = 2^{2^\layernumbers}$ for some $\layernumbers \in \mathbb{N}$. That is, $\layernumbers = \log_2 \log_2 m$. The subadditive function $v_i$ will be the composed of $\layernumbers+1$ subadditive functions $v^{(\layerindex)}$ with $v_i(S) = \sum_{\layerindex=0}^{\layernumbers} v^{(\layerindex)}(S)$ for all $S \subseteq M$. The idea is that each of the functions $v^{(\layerindex)}$ requires a different value of $q$ so that the $\lambda$-player is guaranteed a good fraction of the value.

The subadditive function will be a judiciously chosen variant of the function that is used to show the $\Omega(\log m)$-separation between XOS and subadditive functions, which is based on the set cover integrality gap. The basic idea is to stack several such functions that operate on different subsets of the items on top of each other.

\paragraph{Construction of Valuation Function.} 

We construct the functions $v^{(\layerindex)}$ as follows. Recall that there are $m = 2^{2^\layernumbers}$ items in total. Let $\copynumbers^{(\layerindex)} = m/(2^{2^\layerindex}) = 2^{2^\layernumbers - 2^\layerindex}$. We partition all but $\copynumbers^{(\layerindex)}$ items into disjoint sets of equal size $M^{(\layerindex)}_1, \ldots, M^{(\layerindex)}_\copynumbers$. So, each of these sets $M^{(\layerindex)}_\copyindex$ has size $(m - \copynumbers^{(\layerindex)})/ \copynumbers^{(\layerindex)} = 2^{2^\layerindex} - 1$. On each of them, we use function $f_\copyindex\colon M^{(\layerindex)}_\copyindex \to \mathbb{R}_{\geq 0}$ defined as follows.

\begin{lemma}
\label{lemma:setcoverintegralitygapfunction}
Let $M'$ be a set of $2^k - 1$ items for some $k \in \mathbb{N}$. There is a subadditive function $f\colon 2^{M'} \to \mathbb{R}_{\geq 0}$ with the following properties:
\begin{enumerate}
\item $f(\{j\}) = 1$ for all $j \in M'$.
\item $f(M') \geq k$.
\item For every $d \in \{0, \ldots, k\}$, there is a family of subsets $\mathcal{D} \subseteq 2^{M'}$ such that  for all $D \in \mathcal{D}$ we have $\lvert D \rvert = 2^d - 1$ and $f(M' \setminus D) \leq k - d$. Furthermore, each $j \in M'$ is contained in the same number of sets $D \in \mathcal{D}$.
\end{enumerate}
\end{lemma}

\begin{proof}
This function was defined by Bhawalkar and Roughgarden \cite{BhawalkarR11}, who showed that it is only poorly approximated by XOS functions. It originates in the worst-case integrality gap for set cover linear programs (see, e.g, \cite[Example 13.4]{Vazirani2001}). We identify the set of items $M'$ with binary vectors $\mathbb{F}_2^k \setminus \{0\}$. Consider all binary vectors $\mathbf{i} \in \mathbb{F}_2^k \setminus \{0\}$. Let $S_i = \{j \mid \mathbf{j} \cdot \mathbf{i} = 1 \}$. For each set of items $T \subseteq M'$ let $f(T)$ be the minimum number of sets $S_i$ required to cover the items in $T \setminus \{ 0 \}$. 

The first property now follows from the following fact. Consider any $\mathbf{j} \in \mathbb{F}_2^k \setminus \{0\}$. As $\mathbf{j} \neq 0$, there has to be some $z$ for which $j_z = 1$. Now, choosing $\mathbf{i}$ to be the $z$-th unit vector, we have $\mathbf{j} \cdot \mathbf{i} = 1$. The second property is shown in \cite[Example~4.5]{BhawalkarR11}. For the third property, one chooses $\mathcal{D}$ to be the set of all $d$-dimensional subspaces of $\mathbb{F}_2^k$ excluding the all-zero vector. For this choice of $\mathcal{D}$, the property is verified in \cite[Lemma~4.3]{DuettingK17}.
\end{proof}

We then define $v^{(\layerindex)}(S) = \frac{1}{\copynumbers^{(\layerindex)}} \sum_{\copyindex = 1}^{\copynumbers^{(\layerindex)}} \frac{f_\copyindex(S \cap M^{(\layerindex)}_\copyindex)}{f_\copyindex(M^{(\layerindex)}_\copyindex)}$.

%
%
%
%

This construction normalizes each function $v^{(\layerindex)}$ to $v^{(\layerindex)}(M) = 1$ and this way $v_i(M) = \sum_{\layerindex=0}^{\layernumbers} v^{(\layerindex)}(M) = \layernumbers+1$. Note that $v$ is a subadditive function because it is a weighted sum of subadditive functions.

It will be instructive to think of higher $\layerindex$ as ``more subadditive'' in the sense that the respective $v^{(\layerindex)}$ operate on a smaller number of sets $M^{(\layerindex)}_\copyindex$, each of which consists of more items. So there is less additivity because there are fewer sets $M^{(\layerindex)}_\copyindex$, and also less additivity because the respective functions $f^{(\layerindex)}_\copyindex$, which are ``very subadditive'', operate on larger sets of items.

\paragraph{Bound on Value.}

Consider some $q \in [0,1]$. Let $\layerindex^\star = \lceil \log_2 \log_2 \frac{1}{q} \rceil$. 

We will now construct the distribution $\mu$ as follows. Consider any $\layerindex > \layerindex^\star$ and $\copyindex \in [\copynumbers^{(\layerindex)}]$. Let $x = \layerindex - \layerindex^\star$. Let $D^{(\layerindex)}_\copyindex$ be a set drawn uniformly from the set $\mathcal{D}$ as it exists for function $f^{(\layerindex)}_\copyindex$ according to Lemma~\ref{lemma:setcoverintegralitygapfunction} for $d = \lfloor 2^{\layerindex^\star}(2^x-c^x) \rfloor$, where $c = \frac{3}{2}$. By this definition, for all $j \in M^{(\layerindex)}_\copyindex$, we have that $j \in D^{(\layerindex)}_\copyindex$ with probability
\[
	\Pr{j \in D^{(\layerindex)}_\copyindex} = \frac{\lvert D^{(\layerindex)}_\copyindex \rvert}{\lvert M^{(\layerindex)}_\copyindex \rvert} = \frac{2^d - 1}{2^{2^\layerindex} - 1} \leq \frac{2^d}{2^{2^\layerindex}}
\]
because every item is equally likely to be contained. Furthermore,
\[
\frac{2^d}{2^{2^\layerindex}} \leq \frac{2^{2^{\layerindex^\star}(2^x-c^x)}}{2^{2^\layerindex}} = \frac{2^{2^\layerindex-2^{\layerindex^\star}c^x}}{2^{2^\layerindex}} = \left(2^{-2^{\layerindex^\star}}\right)^{c^x} \leq q^{c^x} \leq q c^{x \ln q} \leq q \left(\frac{1}{2} \right)^x.
\]
Here, we use that $q^{z - e \ln z} \leq q$ and therefore $q^z \leq q z^{e \ln q}$ for all $z \geq 1$, which we apply for $z = c^x$; also, by, construction, $c^{e \ln q} \leq \frac{1}{2}$

Now, let $T$ be the union of all these random sets $D^{(\layerindex)}_\copyindex$ for $\layerindex > \layerindex^\star$. By union bound, each item is contained in $T$ with probability at most $q$. Therefore, the distribution of $T$ is a feasible choice for $\mu$.

Our goal is now to bound $\sum_{S,T \subseteq M} \lambda_S \mu_T v_i(S\setminus T)$ we split up this sum as follows
\[
\sum_{S,T \subseteq M} \lambda_S \mu_T v_i(S\setminus T) = \sum_{S,T \subseteq M} \lambda_S \mu_T \sum_{\layerindex = 0}^\layernumbers v^{(\layerindex)}(S\setminus T) \leq \sum_{\layerindex = 0}^{\layerindex^\star} \sum_S \lambda_S v^{(\layerindex)}(S) + \sum_{\layerindex = \layerindex^\star+1}^{\layernumbers} \sum_T \mu_T v^{(\layerindex)}(M\setminus T).
\]
Here, we use the bounds $v^{(\layerindex)}(S \setminus T) \leq v^{(\layerindex)}(S)$ and $v^{(\layerindex)}(S \setminus T) \leq v^{(\layerindex)}(M \setminus T)$.

For $\layerindex < \layerindex^\star$, we use subadditivity to get
\[
\sum_S \lambda_S v^{(\layerindex)}(S) \leq \sum_S \sum_{j \in S} \lambda_S v^{(\layerindex)}(\{j\}) = \sum_j v^{(\layerindex)}(\{j\}) \sum_{S:j \in S} \lambda_S \leq q \sum_j v^{(\layerindex)}(\{j\}).
\]
By the construction $v^{(\layerindex)}(\{j\}) = \frac{1}{(m/2^{2^\layerindex}) \cdot 2^\layerindex} \leq \frac{1}{2^\layerindex}$ if $j$ in included in one of the sets $M^{(\layerindex)}_1, \ldots, M^{(\layerindex)}_\copynumbers$, otherwise $v^{(\layerindex)}(\{j\}) = 0$. Consequently, as $2^{2^\layerindex} \leq \frac{1}{q}$ for $\layerindex < \layerindex^\star$,
\[
\sum_S \lambda_S v^{(\layerindex)}(S) \leq q \sum_j v^{(\layerindex)}(\{j\}) = q \cdot m \cdot \frac{1}{(m/2^{2^\layerindex}) \cdot 2^\layerindex} \leq \frac{1}{2^\layerindex}.
\]
For $\layerindex = \layerindex^\star$, we use that $v^{(\layerindex)}(S) \leq 1$ for all $\layerindex$. So the total value contributed by all $\layerindex \leq \layerindex^\star$ is 
\[
\sum_{\layerindex = 0}^{\layerindex^\star} \sum_S \lambda_S v^{(\layerindex)}(S) \leq 1 + \sum_{\layerindex=0}^{\infty} \frac{1}{2^\layerindex} = 3. 
\]

For $\layerindex > \layerindex^\star$ a different argument applies. Let $x = \layerindex - \layerindex^\star$. 
%
Recall the definition $v^{(\layerindex)}(S) = \frac{1}{\copynumbers^{(\layerindex)}} \sum_{\copyindex = 1}^{\copynumbers^{(\layerindex)}} \frac{f_\copyindex(S \cap M^{(\layerindex)}_\copyindex)}{f_\copyindex(M^{(\layerindex)}_\copyindex)}$ and also, by definition, $(M \setminus T) \cap M^{(\layerindex)}_\copyindex \subseteq M^{(\layerindex)}_\copyindex \setminus D^{(\layerindex)}_\copyindex$. Thus
\[
v^{(\layerindex)}(M\setminus T) = \frac{1}{(m/2^{2^\layerindex})\cdot2^\layerindex} \cdot \frac{m}{2^{2^\layerindex}} \cdot (2^\layerindex-d) = \frac{1}{2^\layerindex} (2^\layerindex-d) = \frac{1}{2^\layerindex} (2^\layerindex- \lfloor 2^{\layerindex^\star} (2^x - c^x) \rfloor) \leq c^x 2^{\layerindex^\star} + \frac{1}{2^\layerindex} = \left(\frac{c}{2}\right)^x + \frac{1}{2^\layerindex} .
\]
This bound holds for all $T$ in the support of $\mu$ and so also $\sum_T \mu_T v^{(\layerindex)}(M\setminus T) \leq \left(\frac{c}{2}\right)^x$. Finally, taking the sum over all $\layerindex$ from $\layerindex^\star+1$ to $\layernumbers$, we have
\[
\sum_T \mu_T \sum_{\layerindex = \layerindex^\star+1}^{\layernumbers} v^{(\layerindex)}(M\setminus T) \leq \sum_{\layerindex = \layerindex^\star + 1}^{\layernumbers} \sum_T \mu_T^{(\layerindex)} v^{(\layerindex)}(M \setminus T) \leq \sum_{x=1}^{\infty} \left(\frac{c}{2}\right)^x + \sum_{\layerindex = 0}^\infty \frac{1}{2^\layerindex} = \frac{c}{2-c} + 2.
\]

Combining the bounds from the two cases shows that
\[
\sum_{S,T} \lambda_S \mu_S v_i(S\setminus T) = O(1).
\]
Combined with $v_i(M) = \layernumbers+1$ this shows the claim.
\end{proof}

\bibliographystyle{plain}
\bibliography{biblio}

\appendix


\section{Proof of Lemma~\ref{lem:key-lemma-bayesian}}
\label{app:proof-key-lemma-bayesian}

To prove Lemma~\ref{lem:key-lemma-bayesian} we first follow the same steps as in the proof of Lemma~\ref{lem:key-lemma}. That is, we will capture the condition in a linear program and then use LP duality.

Let $\gamma = 1/\alpha$. To show Lemma~\ref{lem:key-lemma-bayesian}  we have to show that for every joint distribution $\D = \prod_{i=1}^{n} \D_i$ there are prices $p_j$ for $j \in M$ and distributions $\lambda^{i,\bv}$ over $S \subseteq M$ for all $i$ and $\bv$ such that for all $T \subseteq M$
\[
\sum_{j \in T} p_j + \Ex{\bv}{\sum_{i=1}^{n} \sum_S \lambda_S^{i,\bv} \left( v_i(S \setminus T) - \sum_{j \in S} p_j \right)} \geq \gamma \Ex{\bv}{\bv(\OPT(\bv))}
\]
or equivalently
\begin{align}
\Ex{\bv}{\sum_{i=1}^{n} \sum_S \lambda^{i,\bv}_S \sum_{j \in S} p_j} - \sum_{j \in T} p_j \leq \Ex{\bv}{\sum_{i=1}^{n} \sum_S \lambda^{i,\bv}_S v_i(S \setminus T)} - \gamma \Ex{\bv}{\bv(\OPT(\bv))}. \label{eq:sufficient-condition-bayesian}
\end{align}

To reformulate this condition through LP duality, fix distributions $\lambda^{i,\bv}$ for all $i$ and $\bv$, and consider the following LP with variables $p_j \geq 0$ for $j \in M$, $\ell_{+} \geq 0$, and $\ell_{-} \geq 0$ which maximizes the slack $\ell_{+}-\ell_{-}$ that we need to add to the left-hand side of \eqref{eq:sufficient-condition-bayesian} in order to satisfy the inequality
\begin{align*}
\text{max}\quad&\ell_{+}-\ell_{-}\\
\text{s.t.}\quad& \Ex{\bv}{\sum_{i=1}^{n} \sum_S \lambda^{i,\bv}_S \sum_{j \in S} p_j} - \sum_{j \in T} p_j +\ell_{+} - \ell_{-}\\
&\hspace*{50pt}\leq \Ex{\bv}{\sum_{i=1}^{n} \sum_S \lambda^{i,\bv}_S v_i(S \setminus T)} - \gamma \Ex{\bv}{\bv(\OPT(\bv))} &&\text{for all $T \subseteq M$}\\
&p_j \geq 0 &&\text{for all $j \in M$}\\
&\ell_{+} \geq 0\\
&\ell_{-} \geq 0. 
\end{align*}

The dual LP has variables $\mu_T \geq 0$ for every set $T \subseteq M$:
\begin{align*}
\text{min}\quad&\sum_{T} \mu_T \left(\Ex{\bv}{ \sum_{i=1}^{n} \sum_S \lambda^{i,\bv}_S v_i(S \setminus T)} - \gamma\Ex{\bv}{\bv(\OPT(\bv))}\right)\\
\text{s.t.}\quad&-\sum_{T: T \ni j} \mu_T + \sum_T \Ex{\bv}{\sum_{i=1}^{n} \sum_{S: S \ni j}\lambda^{i,\bv}_S} \mu_T \geq 0 &&\text{for all $j \in M$}\\
&\sum_{T} \mu_T = 1\\
&\mu_T \geq 0.
\end{align*}

We can interpret $\mu$ as a distribution over sets of items that puts at most the same probability mass on each item as the collection of distributions $\{\lambda^\bv\}_\bv$. Indeed, by combining the first with the second constraint we obtain that for all $j \in M$:
\begin{align*}
\sum_{T: T \ni j} \mu_T \leq \Ex{\bv}{ \sum_{i=1}^{n} \sum_{S: S \ni j} \lambda^{i,\bv}_S}.
\end{align*}

Now inequality (\ref{eq:sufficient-condition-bayesian}) is satisfied if the optimal solution to the primal LP is non-negative, which by strong duality is the case whenever all feasible solutions to the dual LP have non-negative value. This, in turn, is true whenever for all $\mu$ such that (i) $\sum_T \mu_T = 1$ and (ii) $\sum_{T: T \ni j} \mu_T \leq \Ex{\bv}{\sum_{i=1}^{n} \sum_{S: S \ni j} \lambda^{i,\bv}_S}$ for all $j$, we have
\[
\sum_{T} \mu_T \Ex{\bv}{\sum_{i=1}^{n} \sum_S \lambda^{i,\bv}_S v_i(S \setminus T)} \geq \gamma\Ex{\bv}{\bv(\OPT(\bv)}.
\]


Now, as in the case of complete information, we'd like to establish this inequality by interpreting it as a zero-sum game. However, the arguments in Lemma~\ref{lem:loglog-complete} for lower-bounding the value of the zero sum game require the game to be played for a fixed valuation profile. We therefore add an intermediate step in which we lower bound the value of the zero sum game we are interested in by the expected value of related zero sum games for fixed valuation profiles.



The zero-sum game we are interested in is the following. Write $\lambda^{\bv}$ for $\{\lambda^{i,\bv}\}_{i=1}^{n}$. For $q \in [0,1]$ let $\Lambda(q) = \{ \{\nu^\bv\}_{\bv} \mid \Ex{\bv}{\sum_{i=1}^{n} \sum_{S: S \ni j} \nu^{i,\bv}_S} \leq q\}$. Let $\Delta(q,\dots,q)$ be defined as in the complete information case. We want to show
\[
	g(q) = \sup_{\{\lambda^\bv\}_\bv \in \Lambda(q)} \inf_{\mu \in \Delta(q, \dots, q)} \sum_{T} \mu_T \Ex{\bv}{\sum_{i=1}^{n} \sum_S \lambda^{i,\bv}_S v_i(S \setminus T)} \geq \gamma \Ex{\bv}{\bv(\OPT(\bv))}. 
\]

To show this consider the following class of zero-sum games, which have the same basic structure, but instead of considering the payoff in expectation over valuation profiles the payoffs are now for a given and fixed valuation profile $\bv$.
For $q \in [0,1]$ let $\Gamma(q) = \{\{\nu^{i}\}_{i=1}^{n} \mid \sum_{i=1}^{n} \sum_{S: S \ni j} \nu^{i}_S \leq q \;\text{for all $j$ in $M$}\}$. Define
\[
g^{\bv}(q) = \sup_{\lambda \in \Gamma(q)} \inf_{\mu \in \Delta(q,\dots,q)} \sum_{T} \mu_T \left(\sum_{i=1}^{n} {\sum_S \lambda^{i}_S v_i(S \setminus T)} \right).
\]

Now consider the definition of $g(q)$. If we swap the infimum and the expectation we only give the $\mu$-player more power. Hence for any $q$, 
\begin{align*}
	g(q) 
	&= \sup_{\{\lambda^\bv\}_\bv \in \Lambda(q)} \inf_{\mu \in \Delta(q, \dots, q)} \sum_{T} \mu_T \Ex{\bv}{\sum_{i=1}^{n} \sum_S \lambda^{i,\bv}_S v_i(S \setminus T)}\\
	&\ge \sup_{\{\lambda^\bv\}_\bv \in \Lambda(q)}  \Ex{\bv}{\inf_{\mu \in \Delta(q, \dots, q)} \sum_{T} \mu_T\sum_{i=1}^{n} \sum_S \lambda^{i,\bv}_S v_i(S \setminus T)}\\
	&= \Ex{\bv}{\sup_{\lambda \in \Gamma(q)} \inf_{\mu \in \Delta(q,\dots,q)} \sum_{T} \mu_T \left(\sum_{i=1}^{n} {\sum_S \lambda^{i}_S v_i(S \setminus T)} \right)}\\
	&=\Ex{\bv}{g^{\bv}(q)}
\end{align*}

So in order to establish the existence of a $q$ for which $g(q) \geq \gamma \Ex{\bv}{\bv(\OPT(\bv))}$ it suffices to show that there is a $q$ for which $\Ex{\bv}{g^{\bv}(q)} \geq \gamma \Ex{\bv}{\bv(\OPT(\bv))}$.

\begin{lemma}\label{lem:loglog-incomplete}
There exists a $q \in [0,1]$ such that
\[
\Ex{\bv}{g^{\bv}(q)} \geq \frac{1}{\alpha}\bv(\OPT(\bv)).
\]
\end{lemma}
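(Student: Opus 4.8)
The plan is to prove the Bayesian statement by running the subadditivity/telescoping argument of Lemma~\ref{lem:loglog-complete} pointwise in the valuation profile $\bv$, and only taking the expectation over $\bv$ at the very end (matching the reduction just above, which requires a $q$ with $\Ex{\bv}{g^{\bv}(q)} \ge \gamma\,\Ex{\bv}{\bv(\OPT(\bv))}$). For a fixed $\bv$ and $q\in[0,1]$ set
\[
f^{\bv}(q) = \max_{\lambda\in\Gamma(q)} \sum_{i=1}^{n}\sum_{S}\lambda^i_S v_i(S),
\]
the largest expected welfare of a distribution profile whose item-marginals sum to at most $q$. For each fixed $\bv$ the relevant optimizations (over $\Gamma(q)$ and $\Delta(q,\dots,q)$) are over compact polytopes with linear objectives, so the suprema and infima defining $g^{\bv}$ and $f^{\bv}$ are attained. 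The first goal is the pointwise inequality $g^{\bv}(q)\ge f^{\bv}(q)-f^{\bv}(q^2)$.

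To prove it I would take $\lambda\in\Gamma(q)$ attaining $f^{\bv}(q)$, apply subadditivity in the form $v_i(S\setminus T)\ge v_i(S)-v_i(S\cap T)$ inside the definition of $g^{\bv}(q)$, and peel off the $\mu$-independent term $\sum_T\mu_T\sum_{i}\sum_S\lambda^i_S v_i(S)=\sum_{i}\sum_S\lambda^i_S v_i(S)=f^{\bv}(q)$ using $\sum_T\mu_T=1$. It then remains to bound $\max_{\mu\in\Delta(q,\dots,q)}\sum_T\mu_T\sum_{i}\sum_S\lambda^i_S v_i(S\cap T)$ by $f^{\bv}(q^2)$. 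Here the key observation is that for each agent $i$ the weights $\nu^i_{S''}:=\sum_{S,T:\,S\cap T=S''}\lambda^i_S\mu_T$ satisfy $\sum_{i}\sum_{S''\ni j}\nu^i_{S''}=\sum_{i}\bigl(\sum_{S\ni j}\lambda^i_S\bigr)\bigl(\sum_{T\ni j}\mu_T\bigr)\le q\sum_{i}\sum_{S\ni j}\lambda^i_S\le q^2$, so $\{\nu^i\}_i\in\Gamma(q^2)$ and $\sum_T\mu_T\sum_{i}\sum_S\lambda^i_S v_i(S\cap T)=\sum_{i}\sum_{S''}\nu^i_{S''}v_i(S'')\le f^{\bv}(q^2)$. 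This is the step requiring the most care: in contrast with the complete-information case, $\lambda$ is now a profile of (unnormalized) distributions, and one must combine the per-agent marginals correctly to recover the $q^2$ bound; everything else is bookkeeping.

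Given the pointwise bound, set $\ell=\log\log m$ and telescope over $q=2^{-2^k}$, $k=0,\dots,\ell$, to get $\sum_{k=0}^{\ell} g^{\bv}(2^{-2^k})\ge f^{\bv}(1/2)-f^{\bv}(1/m^2)$ for every $\bv$, exactly as in the complete-information proof. Then I would bound $f^{\bv}(1/2)\ge\tfrac12\,\bv(\OPT(\bv))$, witnessed by the profile that allocates $\OPT_i(\bv)$ to agent $i$ with probability $1/2$ (feasible since the sets $\OPT_i(\bv)$ are disjoint), and $f^{\bv}(1/m^2)\le\tfrac1m\,\bv(\OPT(\bv))$, using $v_i(S)\le\sum_{j\in S}v_i(\{j\})$, the marginal constraint, and $v_i(\{j\})\le\bv(\OPT(\bv))$. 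Hence $\sum_{k=0}^{\ell} g^{\bv}(2^{-2^k})\ge(\tfrac12-\tfrac1m)\,\bv(\OPT(\bv))$ pointwise; taking expectations over $\bv$ and using linearity of expectation gives $\max_{0\le k\le\ell}\Ex{\bv}{g^{\bv}(2^{-2^k})}\ge\frac{1}{\ell+1}\bigl(\tfrac12-\tfrac1m\bigr)\Ex{\bv}{\bv(\OPT(\bv))}$, so the lemma holds with $q=2^{-2^{k^\star}}$ for the maximizing index $k^\star$ and $\alpha=(\ell+1)\big/(\tfrac12-\tfrac1m)=O(\log\log m)$.
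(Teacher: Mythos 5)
Your proposal is correct and follows essentially the same route as the paper: the paper likewise proves the pointwise inequality $g^{\bv}(q)\ge f^{\bv}(q)-f^{\bv}(q^2)$ via subadditivity and the $q^2$-marginal bound on the induced distribution over $S\cap T$, telescopes over $q=2^{-2^k}$ to obtain $\sum_k g^{\bv}(2^{-2^k})\ge f^{\bv}(1/2)-f^{\bv}(1/m^2)\ge(\tfrac12-\tfrac1m)\bv(\OPT(\bv))$ for each fixed $\bv$ (stated as a separate auxiliary lemma), and then takes expectations over $\bv$ and an averaging/max argument to extract a single index $k^\star$. The only cosmetic difference is in the upper bound on $f^{\bv}(1/m^2)$, where you pass through singleton values $v_i(\{j\})\le\bv(\OPT(\bv))$ whereas the paper bounds $v_i(S)\le\bv(\OPT(\bv))$ directly and then controls the total mass on nonempty sets; both yield the same $\tfrac1m\,\bv(\OPT(\bv))$ bound.
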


In order to prove Lemma~\ref{lem:loglog-incomplete} we will basically proceed as in the proof of Lemma~\ref{lem:loglog-complete}. However, in order to show that there is a single $q$ that works for $\Ex{\bv}{g^{\bv}(q)}$ it will be useful to separate out the following auxiliary lemma.

\begin{lemma}\label{lem:average}
For every $\bv$ and $\ell = \log \log m$
\[
	\frac{1}{\ell+1}\sum_{i=0}^{\ell} g^{\bv}\left(2^{-2^i}\right) \geq \frac{1}{\ell+1} \cdot \left(\frac{1}{2}-\frac{1}{m}\right) \cdot \bv(\OPT(\bv)).
\]
\end{lemma}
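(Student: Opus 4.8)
The plan is to translate the argument of Lemma~\ref{lem:loglog-complete} to the fixed-profile, multi-buyer setting, replacing the single-buyer payoff $v_i(S\setminus T)$ by the aggregate $\sum_{i=1}^n\sum_S\lambda^i_S v_i(S\setminus T)$. Define, for $q\in[0,1]$,
\[
	f^{\bv}(q)=\sup_{\lambda\in\Gamma(q)}\sum_{i=1}^n\sum_S\lambda^i_S v_i(S),
\]
the analogue of the function $f$ from the complete-information proof. The core step is the inequality $g^{\bv}(q)\ge f^{\bv}(q)-f^{\bv}(q^2)$. To establish it, fix $\lambda\in\Gamma(q)$ and $\mu\in\Delta(q,\dots,q)$; subadditivity of each $v_i$ gives $v_i(S\setminus T)\ge v_i(S)-v_i(S\cap T)$, so
\[
	\sum_T\mu_T\sum_{i=1}^n\sum_S\lambda^i_S v_i(S\setminus T)\ \ge\ \sum_{i=1}^n\sum_S\lambda^i_S v_i(S)\ -\ \sum_T\mu_T\sum_{i=1}^n\sum_S\lambda^i_S v_i(S\cap T).
\]
For each buyer $i$, the measure putting mass $\lambda^i_S\mu_T$ on the set $S\cap T$ assigns item $j$ the mass $\left(\sum_{S\ni j}\lambda^i_S\right)\left(\sum_{T\ni j}\mu_T\right)\le q\sum_{S\ni j}\lambda^i_S$; summing over $i$ this is at most $q^2$, so the induced collection of distributions over $S\cap T$ lies in $\Gamma(q^2)$, and the subtracted term is at most $f^{\bv}(q^2)$. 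Since $f^{\bv}(q^2)$ is independent of $\lambda$, taking $\sup_\lambda$ and then $\inf_\mu$ yields $g^{\bv}(q)\ge f^{\bv}(q)-f^{\bv}(q^2)$.

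Next I would run the telescoping sum. Instantiating this inequality at $q=2^{-2^k}$ for $k=0,\dots,\ell$ and adding,
\[
	\sum_{k=0}^\ell g^{\bv}\!\left(2^{-2^k}\right)\ \ge\ \sum_{k=0}^\ell\left(f^{\bv}\!\left(2^{-2^k}\right)-f^{\bv}\!\left(2^{-2^{k+1}}\right)\right)\ =\ f^{\bv}\!\left(\tfrac12\right)-f^{\bv}\!\left(2^{-2^{\ell+1}}\right),
\]
and for $\ell=\log\log m$ one has $2^{-2^{\ell+1}}=2^{-2\log m}=1/m^2$. It then remains to bound the two endpoints exactly as in the complete-information case. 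For the lower bound, letting $\lambda$ allocate $\OPT_i(\bv)$ to buyer $i$ with probability $\tfrac12$ and $\emptyset$ otherwise is feasible in $\Gamma(\tfrac12)$, because $\OPT(\bv)$ allocates every item at most once, so each item's marginal is at most $\tfrac12$; hence $f^{\bv}(\tfrac12)\ge\tfrac12\,\bv(\OPT(\bv))$. For the upper bound, any $\lambda\in\Gamma(1/m^2)$ satisfies, by subadditivity and the bound $v_i(\{j\})\le\bv(\OPT(\bv))$,
\[
	\sum_{i=1}^n\sum_S\lambda^i_S v_i(S)\ \le\ \sum_{j\in M}\sum_{i=1}^n v_i(\{j\})\sum_{S\ni j}\lambda^i_S\ \le\ \bv(\OPT(\bv))\sum_{j\in M}\frac1{m^2}\ =\ \frac1m\,\bv(\OPT(\bv)),
\]
so $f^{\bv}(1/m^2)\le\tfrac1m\,\bv(\OPT(\bv))$.

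Putting the three pieces together and dividing by $\ell+1$ gives
\[
	\frac1{\ell+1}\sum_{k=0}^\ell g^{\bv}\!\left(2^{-2^k}\right)\ \ge\ \frac1{\ell+1}\left(f^{\bv}\!\left(\tfrac12\right)-f^{\bv}\!\left(\tfrac1{m^2}\right)\right)\ \ge\ \frac1{\ell+1}\left(\tfrac12-\tfrac1m\right)\bv(\OPT(\bv)),
\]
which is the claim. I do not expect an essential obstacle, since this is a direct adaptation of Lemma~\ref{lem:loglog-complete}; the one place requiring care is the core step $g^{\bv}(q)\ge f^{\bv}(q)-f^{\bv}(q^2)$, where one must check that the product of a $\Gamma(q)$-strategy and a $\Delta(q,\dots,q)$-strategy, once projected onto $S\cap T$ and summed over buyers, still obeys the $q^2$ per-item marginal constraint — this is precisely what makes $f^{\bv}(q^2)$ the correct upper bound on the adversary's gain. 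The separate reduction from the expected-payoff game $g(q)$ to $\Ex{\bv}{g^{\bv}(q)}$ (via swapping the infimum with the expectation) is carried out before this lemma and is not needed here.
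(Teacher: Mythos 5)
Your proof is correct and follows essentially the same route as the paper's: the same definition of $f^{\bv}$, the same subadditivity step $v_i(S\setminus T)\ge v_i(S)-v_i(S\cap T)$ to get $g^{\bv}(q)\ge f^{\bv}(q)-f^{\bv}(q^2)$ via the $q^2$ per-item marginal of the product distribution, the same telescoping, and the same endpoint bounds. The only cosmetic difference is in the upper bound on $f^{\bv}(1/m^2)$, where you decompose $v_i(S)$ into singletons by subadditivity while the paper bounds $v_i(S)$ directly by $\bv(\OPT(\bv))$; both yield $\frac{1}{m}\bv(\OPT(\bv))$.
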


\begin{proof}[Proof of Lemma~\ref{lem:average}]
As in the proof of Lemma~\ref{lem:loglog-complete} our strategy will be to derive a lower bound on $g^{\bv}(q)$ in terms of an appropriately chosen $f$-function. The $f$-function we will use is 
\begin{align*}
	f^{\bv}(q) &= \sup_{\lambda\in \Gamma(q)} \sum_{i=1}^{n} \sum_{S} \lambda^{i}_S v_i(S).
\end{align*}

We first use that for any of the involved valuation functions $v_i$ and any two sets $S$ and $T$ subadditivity implies that $v_i(S \setminus T) \geq v_i(S) - v_i(S \cap T)$ to derive the following lower bound on $g^{\bv}(q)$:
\begin{align*}
	g^{\bv}(q) 
	&= \sup_{\lambda \in \Gamma(q)} \inf_{\mu \in \Delta((q,\dots,q))} \sum_{T} \mu_T \left(\sum_{i=1}^{n} \sum_S \lambda^{i}_S v_i(S \setminus T) \right)\\
	&\geq \sup_{\lambda \in \Gamma(q)} \inf_{\mu \in \Delta((q,\dots,q))} \sum_{T} \mu_T \left(\sum_{i=1}^{n} \sum_S \lambda^{i}_S \Big(v_i(S) - v_i(S \cap T)\Big) \right).
\end{align*}
We can now regroup the right-hand side by splitting into two summations (one over $v_i(S)$ and the other over $v_i(S \cap T)$), use that $\sum_T \mu_T = 1$ to simplify, and then pull the infimum over the first of the two summations and the minus sign:
\begin{align*}
	g^{\bv}(q) &\geq \sup_{\lambda \in \Gamma(q)} \inf_{\mu \in \Delta((q,\dots,q))} \left( \sum_{T} \mu_T \left(\sum_{i=1}^{n} \sum_S \lambda^{i}_S v_i(S) \right) - \sum_{T} \mu_T \left(\sum_{i=1}^{n} \sum_S \lambda^{i}_S v_i(S \cap T) \right) \right)\\
	&= \sup_{\lambda \in \Gamma(q)} \inf_{\mu \in \Delta((q,\dots,q))} \left( \sum_{i=1}^{n} \sum_S \lambda^{i}_S v_i(S) - \sum_{T} \mu_T \left(\sum_{i=1}^{n} \sum_S \lambda^{i}_S v_i(S \cap T) \right) \right)\\
	&= \sup_{\lambda \in \Gamma(q)} \left( \sum_{i=1}^{n} \sum_S \lambda^{i}_S v_i(S) - \sup_{\mu \in \Delta((q,\dots,q))} \sum_{T} \mu_T \left(\sum_{i=1}^{n} \sum_S \lambda^{i}_S v_i(S \cap T) \right) \right).
\end{align*}

We next interpret $\lambda^{i}_S \cdot \mu_T$ as a probability distribution over sets $S \cap T$. The probability mass that this distribution puts on each item is at most $q^2$. We can thus lower bound the right-hand side in the previous equation and hence $g^{\bv}(q)$ as follows:
\begin{align*}
	g^{\bv}(q) 
	&\geq \sup_{\lambda \in \Gamma(q)} \left( \sum_{i=1}^{n} \sum_S \lambda^{i}_S v_i(S) - \sup_{\nu \in \Gamma(q^2)} \left(\sum_{i=1}^{n} \sum_S \nu^{i}_S v_i(S) \right) \right)
	= f^{\bv}(q) - f^{\bv}(q^2).
\end{align*}

For any $\ell$ we thus have,
\begin{align*}
\sum_{i=0}^{\ell} g^{\bv}\left(2^{-2^i}\right) \geq \sum_{i=0}^{\ell} \Bigg( f^{\bv}\left(2^{-2^i}\right) - f^{\bv}\left(2^{-2^{i+1}}\right)\Bigg) = f^{\bv}\left(2^{-1}\right) - f^{\bv}\left(2^{-2^{\ell+1}}\right),
\end{align*}
by a telescoping sum argument.

Moreover, for $\ell = \log \log m$,
\begin{align*}
	2^{-2^{\ell+1}} = 2^{-2 \log m} = \frac{1}{m^2}.
\end{align*}

So for $\ell = \log \log m$,
\begin{align*}
\sum_{i=0}^{\ell} g^{\bv}\left(2^{-2^i}\right) \geq f^{\bv}\left(\frac{1}{2}\right) - f^{\bv}\left(\frac{1}{m^2}\right).
\end{align*}

We next derive a lower bound on $f^{\bv}(1/2)$ and an upper bound of $f^{\bv}(1/m^2)$. For the lower bound on $f^{\bv}(1/2)$ observe that one possible choice for $\lambda^{i}$ for $i = 1, \dots, n$ would be to choose $\OPT_i(\bv)$ with probability $1/2$. Hence
\begin{align*}
	f^{\bv}\left(\frac{1}{2}\right) &\geq \frac{1}{2} \cdot \bv(\OPT(\bv)).
\end{align*}

On the other hand, we can derive an upper bound on $f^{\bv}(1/m^2)$ from the (very crude) upper bound $v_i(S) \leq \bv(\OPT(\bv))$ for all $i$. Namely, 
\begin{align*}
	f^{\bv}\left(\frac{1}{m^2}\right) 
	&= \sup_{\lambda \in \Gamma\left(\frac{1}{m^2}\right)} \sum_{i=1}^{n} \sum_{S} \lambda^{i}_S v_i(S)\\
	&\leq \sup_{\lambda \in \Gamma\left(\frac{1}{m^2}\right)} \left(\sum_{i=1}^{n} \sum_{S \neq \emptyset} \lambda^{i}_S\right) \cdot \bv(\OPT(\bv))\\
	&\leq m \cdot \frac{1}{m^2} \cdot \bv(\OPT(\bv))\\
	&= \frac{1}{m} \cdot \bv(\OPT(\bv)).
\end{align*}

We conclude that
\begin{align*}
\frac{1}{\ell+1} \sum_{i=0}^{\ell} g(2^{-2^i}) \geq \frac{1}{\ell+1} \cdot \left(\frac{1}{2} - \frac{1}{m}\right) \cdot \bv(\OPT(\bv)),
\end{align*}
as claimed.
\end{proof}

It remains to show Lemma~\ref{lem:loglog-incomplete}.

\begin{proof}[Proof of Lemma~\ref{lem:loglog-incomplete}]
Let $\ell = \log \log m$. Taking expectations over $\bv$ on both sides of Lemma~\ref{lem:average} shows that for $j$ drawn uniformly from $\{0\} \cup [\ell]$ and $q_j = 2^{-2^j}$,
\begin{align*}
\Ex{\bv,j}{g^{\bv}\left(q_j\right)} 
&\geq \frac{1}{\ell+1} \cdot \left(\frac{1}{2}-\frac{1}{m}\right) \cdot \Ex{\bv}{\bv(\OPT(\bv))}.
\end{align*}
Because this inequality holds in expectation over $j$ drawn uniformly from $\{0\} \cup [\ell]$ there must be a $j$ for which
\begin{align*}
\Ex{\bv}{g^{\bv}\left(q_j\right)} 
&\geq \frac{1}{\ell+1} \cdot \left(\frac{1}{2}-\frac{1}{m}\right) \cdot \Ex{\bv}{\bv(\OPT(\bv))},
\end{align*}
as claimed.
\end{proof}

\section{Proof of Theorem~\ref{thm:upper-bound} (incomplete information)}
\label{app:proof-thm-upper-bound}

Let $\OPT(\bv)$ denote the welfare-maximizing allocation
given valuations $\bv$ and let $\ALG(\bv)$ denote the allocation of the
posted-price mechanism that uses the prices $p_j$ for $j \in M$ whose
existence is established in Lemma~\ref{lem:key-lemma-bayesian}.


For the lower bound on the utilities consider an arbitrary buyer $i$.
For any valuation profiles $\bv$ and $\bv'$, buyer $i$ with valuation $v_i$ can
consider drawing a set $S$ from $\lambda^{i,(v_i, \bv'_{-i})}$ and buy
whatever is left from $S$. Clearly, for any $v_i$, the set of items sold
on valuation profile $\bv$ before buyer $i$ is a subset of
$\SOLD(v'_i,\bv_{-i})$ which is what would be sold on valuation profile
$(v'_i,\bv_{-i})$. Since this holds for any $\bv'$ it also holds in
expectation when $\bv'$ drawn from $\D$, fixing $\bv$. 
So, since $v'_i$ and $\bv_{-i}'$ are independent,
\begin{align*}
 u_i((v_i,\bv_{-i}),\bp)
 &\geq
 \Ex{\bv'}{\sum_{S} \lambda^{i,(v_i, \bv'_{-i})}_S \left(v_i(S
\setminus \SOLD(v'_i,\bv_{-i})) - \sum_{j \in S} p_j\right)}.
\end{align*}

We can now take expectations over $\bv$ on both sides and exploit that
$\bv_{-i}$ and $\bv'_{-i}$ are identically and independently distributed to obtain
\begin{align*}
 \Ex{\bv}{u_i((v_i,\bv_{-i}),\bp)} 
 &\geq 
 \Ex{\bv,\bv'}{\sum_{S} \lambda^{i,(v_i, \bv'_{-i})}_S \left(v_i(S
\setminus \SOLD(v'_i,\bv_{-i})) - \sum_{j \in S} p_j\right)}\\ 
 &= \Ex{\bv,\bv'}{\sum_{S} \lambda^{i,(v_i, \bv_{-i})}_S \left(v_i(S
\setminus \SOLD(v'_i,\bv'_{-i})) - \sum_{j \in S} p_j\right)}.
\end{align*}

Summing this inequality over all buyers $i$ gives
\[
\sum_{i=1}^{n} \Ex{\bv}{u_i((v_i,\bv_{-i}),\bp)}
\geq
\sum_{i=1}^{n}
\Ex{\bv,\bv'}{\sum_{S} \lambda^{i,\bv}_S \left(v_i(S \setminus
\SOLD(\bv')) - \sum_{j \in S} p_j\right)}.
\]

Since Lemma~\ref{lem:key-lemma-bayesian} holds pointwise for any $T$
that is consistent across buyers, it also applies in expectation if we
draw $\bv'$ from $\D$ and set $T = \SOLD(\bv')$. Combining this with the
previous inequality we obtain
\begin{align}
\label{eq:utility-bayesian}
\sum_{i=1}^{n} \Ex{\bv}{u_i((v_i,\bv_{-i}),\bp)}
& \geq
\frac{1}{\alpha}\Ex{\bv}{\OPT(\bv)} - \Ex{\bv'} {\sum_{j \in
\SOLD(\bv')} p_j}. 
\end{align}

For the revenue we have
\begin{align}
    \Ex{\bv \sim \D}{r(\bv,\bp)} = \Ex{\bv\sim \D}{\sum_{j \in
\SOLD(\bv)} p_j}. \label{eq:revenue-bayesian}
\end{align}

Adding (\ref{eq:revenue-bayesian}) to (\ref{eq:utility-bayesian}) shows
the claim.

\section{Proof of Theorem~\ref{thm:upper-bound-polytime}}
\label{app:proof-thm-upper-bound-polytime}

We use $\OPT(\bv)$ to denote the welfare-maximizing allocation
given valuations $\bv$ and $\ALG(\bv)$ to denote the allocation of the
posted-price mechanism that uses the prices $p_j$ for $j \in M$ whose
existence is established in Lemma~\ref{lem:key-lemma-bayesian-polytime}.


For the lower bound on the utilities consider an arbitrary buyer $i$.
For any valuation profiles $\bv$ and $\bv'$, buyer $i$ with valuation $v_i$ can
consider drawing a set $S$ from $\lambda^{i,(v_i, \bv'_{-i})}$ and buy
whatever is left from $S$. Clearly, for any $v_i$, the set of items sold
on valuation profile $\bv$ before buyer $i$ is a subset of
$\SOLD(v'_i,\bv_{-i})$ which is what would be sold on valuation profile
$(v'_i,\bv_{-i})$. Since this holds for any $\bv'$ it also holds in
expectation when $\bv'$ drawn from $\D$, fixing $\bv$. 
So, since $v'_i$ and $\bv_{-i}'$ are independent,
\begin{align*}
 u_i((v_i,\bv_{-i}),\bp)
 &\geq
 \Ex{\bv'}{\sum_{S} \lambda^{i,(v_i, \bv'_{-i})}_S \left(v_i(S
\setminus \SOLD(v'_i,\bv_{-i})) - \sum_{j \in S} p_j\right)}.
\end{align*}

We can now take expectations over $\bv$ on both sides and exploit that
$\bv_{-i}$ and $\bv'_{-i}$ are identically and independently distributed to obtain
\begin{align*}
 \Ex{\bv}{u_i((v_i,\bv_{-i}),\bp)} 
 &\geq 
 \Ex{\bv,\bv'}{\sum_{S} \lambda^{i,(v_i, \bv'_{-i})}_S \left(v_i(S
\setminus \SOLD(v'_i,\bv_{-i})) - \sum_{j \in S} p_j\right)}\\ 
 &= \Ex{\bv,\bv'}{\sum_{S} \lambda^{i,(v_i, \bv_{-i})}_S \left(v_i(S
\setminus \SOLD(v'_i,\bv'_{-i})) - \sum_{j \in S} p_j\right)}.
\end{align*}

Summing this inequality over all buyers $i$ gives
\[
\sum_{i=1}^{n} \Ex{\bv}{u_i((v_i,\bv_{-i}),\bp)}
\geq
\sum_{i=1}^{n}
\Ex{\bv,\bv'}{\sum_{S} \lambda^{i,\bv}_S \left(v_i(S \setminus
\SOLD(\bv')) - \sum_{j \in S} p_j\right)}.
\]

Since Lemma~\ref{lem:key-lemma-bayesian-polytime} holds pointwise for any $T$
that is consistent across buyers, it also applies in expectation if we
draw $\bv'$ from $\D$ and set $T = \SOLD(\bv')$. Combining this with the
previous inequality we obtain
\begin{align}
\label{eq:utility-bayesian-polytime}
\sum_{i=1}^{n} \Ex{\bv}{u_i((v_i,\bv_{-i}),\bp)}
& \geq
\frac{1}{\alpha}\Ex{\bv}{\OPT(\bv)} - \Ex{\bv'} {\sum_{j \in
\SOLD(\bv')} p_j} - \epsilon. 
\end{align}

For the revenue we have
\begin{align}
    \Ex{\bv \sim \D}{r(\bv,\bp)} = \Ex{\bv\sim \D}{\sum_{j \in
\SOLD(\bv)} p_j}. \label{eq:revenue-bayesian-polytime}
\end{align}

Adding (\ref{eq:revenue-bayesian-polytime}) to (\ref{eq:utility-bayesian-polytime}) shows
the claim.

\section{Proof of Lemma~\ref{lem:key-lemma-revenue}}
\label{app:proof-key-lemma-revenue}


The following is a sketch.  The proof will follow the proof of Lemma~\ref{lem:key-lemma-bayesian} with minor changes.  For convenience, write $\OPT^{\mathbf{z}} = \Ex{\bv}{ \max_{\overline{\lambda} \in \Lambda(\mathbf{z})} \sum_{i,S} \overline{\lambda}^{i,\bv}_S v'_i(S)}$

First, when considering the LP with variables $p_j$, $\ell_+$, and $\ell_-$, we replace instances of $\Ex{\bv}{\bv(\OPT(\bv))}$ with $\OPT^{\mathbf{z}}$, and similarly for the dual LP.  Then just as in Lemma~\ref{lem:key-lemma-bayesian}, it suffices to show that there is some choice of $\lambda \in \Lambda(\mathbf{z})$ such that, for all $\mu$ such that (i) $\sum_T \mu_T = 1$ and (ii) $\sum_{T \ni j} \mu_T \leq \Ex{\bv}{\sum_i \sum_{S \ni j} \lambda_S^{i,\bv}}$, we have
\[ 
	\sum_T \mu_T \Ex{\bv}{\sum_{i,S} \lambda_S^{i,\bv} v'_i(S \backslash T)} \geq \gamma \OPT^{\mathbf{z}}. 
\]
We establish this using a zero-sum game.  Write $z_j(v_i) = \sum_i z_{ij}(v_i)$.  Fix $\bv$, and for $q \in [0,1]$, let $\Gamma(q) = \{\nu \ |\ \sum_i \sum_{S \ni j}\nu_S^i \leq q \cdot z_j(v_i) \text{ for all $j \in M$ }\}$.  Under this definition of $\Gamma(q)$, we define $g^{\bv}(q)$ in the same way as in Lemma~\ref{lem:key-lemma-bayesian} except that we replace $v_i$ with $v'_i$. I.e.,
\[
g^{\bv}(q) = \sup_{\lambda \in \Gamma(q)} \inf_{\mu \in \Delta((q,\dots,q))} \sum_{T} \mu_T \left(\sum_{i=1}^{n} {\sum_S \lambda^{i}_S v'_i(S \setminus T)} \right).
\]

We then define $\Lambda(q)$ in the same way as $\Lambda(\mathbf{z})$, but with the right-hand side of all inequalities multiplied by $q$. I.e.,
\[ 
	\Lambda(q) = \left\{ \{\lambda^\bv\}_\bv \colon \Ex{\bv_{-i}}{\sum_{S \ni j} \lambda^{i,\bv}_{S}} \leq q z_{ij}(v_i) \ \forall\ i,j, v_i \right\} 
\]
We use this definition of $\Lambda(q)$ to define $g(q)$, again replacing $v_i$ with $v'_i$:
\[
	g(q) = \sup_{\{\lambda^\bv\}_\bv \in \Lambda(q)} \inf_{\mu \in \Delta(q, \dots, q)} \sum_{T} \mu_T \Ex{\bv}{\sum_{i=1}^{n} \sum_S \lambda^{i,\bv}_S v'_i(S \setminus T)}.
\]	
Then precisely as in Lemma~\ref{lem:key-lemma-bayesian}, we have that $g(q) \geq \Ex{\bv}{g^{\bv}(q)}$.  

The last step is to show that there is a choice of $q$ for which $\Ex{\bv}{g^{\bv}(q)} \geq \gamma \OPT^{\mathbf{z}}$.  This follows by a telescoping argument as before. The key observation is that if $\lambda \in \Gamma(q)$ and $\mu \in \Delta(q)$, then the distribution over $S \cap T$, where $S \sim \lambda$ and $T \sim \mu$, is in $\Gamma(q^2)$.  

Defining $f^{\bv}$ with respect to $\Gamma(q)$ and $\bv'$, i.e,
\begin{align*}
	f^{\bv}(q) &= \sup_{\lambda\in \Gamma(q)} \sum_{i=1}^{n} \sum_{S} \lambda^{i}_S v'_i(S).
\end{align*}
we end up with a bound in terms of $(f^{\bv}(1/2) - f^{\bv}(1/m^2))$.  We have that $f^{\bv}(1/2) \geq (1/2)\OPT^{\mathbf{z}}$ in the same way as before.  We also have $f^{\bv}(1/m^2) \leq (1/m) \OPT^{\mathbf{z}}$, using the crude upper bound that each individual agent's contribution to the welfare under $\Lambda(\mathbf{z})$ is at most $\OPT^{\mathbf{z}}$.  Our analysis of the telescoping sum therefore follows in the same way as Lemma~\ref{lem:key-lemma-bayesian}, and the result follows.

\end{document}